\newtheorem{theorem}{Theorem}
\newtheorem{lemma}[theorem]{Lemma}
\newcommand{\F}{{\mathbb{F}}}
\newcommand{\R}{{\mathbb{R}}}
\newcommand{\Q}{{\mathbb{Q}}}
\newcommand{\Z}{{\mathbb{Z}}}
\newcommand{\forrandom}{\makebox(5.5,7){\begin{picture}(5,7)
\path(4.5,0)(4.5,7)
\path(2,4)(0,7)
\put(4.5,2){\oval(9,4)[l]}
\end{picture}}}
\title{The computational complexity of Minesweeper}
\author{Michiel de Bondt \\ %
	Radboud University Nijmegen, The Netherlands \\ 
	M.deBondt@math.ru.nl}
\begin{document}

\maketitle

\begin{abstract} \noindent
We show that the Minesweeper game is PP-hard, when the object is to
locate all mines with the highest probability. When the probability
of locating all mines may be infinitesimal, the Minesweeper game is even
PSPACE-complete. In our construction, the player can reveal a boolean circuit
in polynomial time, after guessing an initial square with no surrounding
mines, a guess that has 99 percent probability of success. Subsequently,
the mines must be located with a maximum probability of success.

Furthermore, we show that determining the solvability of a partially uncovered
Minesweeper board is NP-complete with hexagonal and triangular grids as well 
as a square grid, extending a similar result for square grids only by R. Kaye. 
Actually finding the mines with a maximum probability of success is again
PP-hard or PSPACE-complete respectively.

Our constructions are 
in such a way that the number of mines can be computed in polynomial time
and hence a possible mine counter does not provide additional information.
The results are obtained by replacing the dyadic gates in \cite{kaye}
by two primitives which makes life more easy in this context.

\medskip\noindent
Keywords: PP-hard, PSPACE-complete, NP-complete, boolean circuit, stochastic
boolean variable.
\end{abstract}

\section{Introduction}

On almost every computer, one can play the game called Minesweeper. Minesweeper
is played on a grid of square compartments, each of them surrounded by eight
other such compartments, except on the border of the grid. The object is
the game is to find all compartments which do not contain a mine. Any compartments
that does not contain a mine contains a number which indicates how many of the 
surrounding compartments contains a mine. These numbers can be used to locate the mines.
But such a number is only revealed after clicking on the compartment, something 
that is fatal when the compartment does contain a mine. 

Richard Kaye showed in \cite{kaye} that determining the solvability of a 
Minesweeper board is NP-complete, that is, whether the numbers revealed thus far
correspond to a distribution of the mines. Although he ignores the fact
that the total number of mines is also known in the Minesweeper game, it is a 
very nice article, where he builds a logical circuit on the Minesweeper
board, in order to compute a boolean expression. Next, he enforces the outcome
of the boolean expression to be true, whence the problem becomes finding 
boolean values for the variables such that the logical expression evaluates 
to true.

One can modify the game such that it can be played on other grids, which
has already been done several times.
We are going to show that we can determine the satisfiability of a boolean
circuit by evaluating the solvability of a minesweeper board,
for all three regular tesselations of the plane (triangular, normal and
hexagonal Minesweeper). For this purpose, we build circuitry as well.

Since Minesweeper is a game with numbers, it is natural to associate
the boolean values {\em false} and {\em true} with $0$ and $1$ respectively.
In order to make boolean circuits on a minesweeper board, we first 
need a way to code wires on it. This is done in figure \ref{wire}, where
the wire on the square grid is taken from \cite{kaye}.

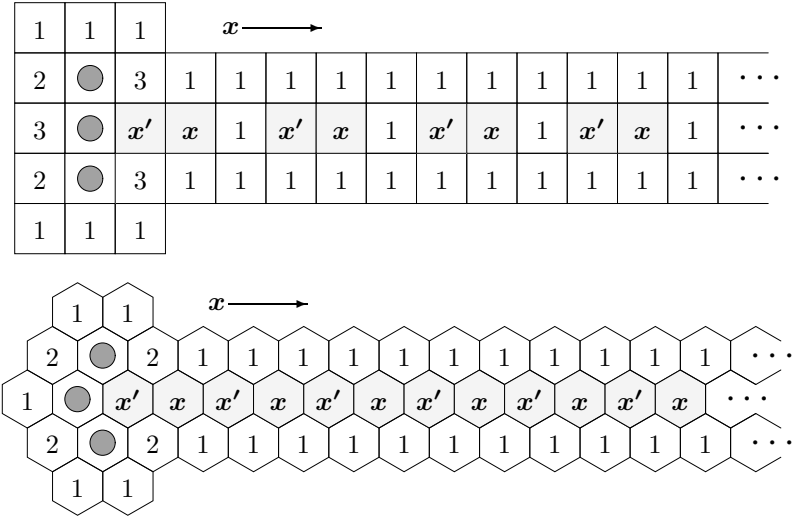
\begin{figure}[!htp]
\begin{center}
\begin{picture}(288.17,95.00)(0.00,0.00)

\mathversion{bold}
\filltype{shade}
\texture{40004000 0 0 0 00400040 0 0 0
         40004000 0 0 0 00400040 0 0 0
         40004000 0 0 0 00400040 0 0 0
         40004000 0 0 0 00400040 0 0 0}
\whiten

\path(0.00,19.00)(19.00,19.00)(19.00,0.00)(0.00,0.00)(0.00,19.00)
\put(9.50,5.50){\makebox(0,0)[b]{1}}

\path(0.00,38.00)(19.00,38.00)(19.00,19.00)(0.00,19.00)(0.00,38.00)
\put(9.50,24.50){\makebox(0,0)[b]{2}}

\path(0.00,57.00)(19.00,57.00)(19.00,38.00)(0.00,38.00)(0.00,57.00)
\put(9.50,43.50){\makebox(0,0)[b]{3}}

\path(0.00,76.00)(19.00,76.00)(19.00,57.00)(0.00,57.00)(0.00,76.00)
\put(9.50,62.50){\makebox(0,0)[b]{2}}

\path(0.00,95.00)(19.00,95.00)(19.00,76.00)(0.00,76.00)(0.00,95.00)
\put(9.50,81.50){\makebox(0,0)[b]{1}}

\path(19.00,19.00)(38.00,19.00)(38.00,0.00)(19.00,0.00)(19.00,19.00)
\put(28.50,5.50){\makebox(0,0)[b]{1}}

\path(19.00,95.00)(38.00,95.00)(38.00,76.00)(19.00,76.00)(19.00,95.00)
\put(28.50,81.50){\makebox(0,0)[b]{1}}

\path(38.00,19.00)(57.00,19.00)(57.00,0.00)(38.00,0.00)(38.00,19.00)
\put(47.50,5.50){\makebox(0,0)[b]{1}}

\path(38.00,38.00)(57.00,38.00)(57.00,19.00)(38.00,19.00)(38.00,38.00)
\put(47.50,24.50){\makebox(0,0)[b]{3}}

\shade
\path(38.00,57.00)(57.00,57.00)(57.00,38.00)(38.00,38.00)(38.00,57.00)
\put(47.50,43.50){\makebox(0,0)[b]{$x'$}}

\path(38.00,76.00)(57.00,76.00)(57.00,57.00)(38.00,57.00)(38.00,76.00)
\put(47.50,62.50){\makebox(0,0)[b]{3}}

\path(38.00,95.00)(57.00,95.00)(57.00,76.00)(38.00,76.00)(38.00,95.00)
\put(47.50,81.50){\makebox(0,0)[b]{1}}

\path(57.00,38.00)(76.00,38.00)(76.00,19.00)(57.00,19.00)(57.00,38.00)
\put(66.50,24.50){\makebox(0,0)[b]{1}}

\shade
\path(57.00,57.00)(76.00,57.00)(76.00,38.00)(57.00,38.00)(57.00,57.00)
\put(66.50,43.50){\makebox(0,0)[b]{$x$}}

\path(57.00,76.00)(76.00,76.00)(76.00,57.00)(57.00,57.00)(57.00,76.00)
\put(66.50,62.50){\makebox(0,0)[b]{1}}

\path(76.00,38.00)(95.00,38.00)(95.00,19.00)(76.00,19.00)(76.00,38.00)
\put(85.50,24.50){\makebox(0,0)[b]{1}}

\path(76.00,57.00)(95.00,57.00)(95.00,38.00)(76.00,38.00)(76.00,57.00)
\put(85.50,43.50){\makebox(0,0)[b]{1}}

\path(76.00,76.00)(95.00,76.00)(95.00,57.00)(76.00,57.00)(76.00,76.00)
\put(85.50,62.50){\makebox(0,0)[b]{1}}

\path(95.00,38.00)(114.00,38.00)(114.00,19.00)(95.00,19.00)(95.00,38.00)
\put(104.50,24.50){\makebox(0,0)[b]{1}}

\shade
\path(95.00,57.00)(114.00,57.00)(114.00,38.00)(95.00,38.00)(95.00,57.00)
\put(104.50,43.50){\makebox(0,0)[b]{$x'$}}

\path(95.00,76.00)(114.00,76.00)(114.00,57.00)(95.00,57.00)(95.00,76.00)
\put(104.50,62.50){\makebox(0,0)[b]{1}}

\path(114.00,38.00)(133.00,38.00)(133.00,19.00)(114.00,19.00)(114.00,38.00)
\put(123.50,24.50){\makebox(0,0)[b]{1}}

\shade
\path(114.00,57.00)(133.00,57.00)(133.00,38.00)(114.00,38.00)(114.00,57.00)
\put(123.50,43.50){\makebox(0,0)[b]{$x$}}

\path(114.00,76.00)(133.00,76.00)(133.00,57.00)(114.00,57.00)(114.00,76.00)
\put(123.50,62.50){\makebox(0,0)[b]{1}}

\path(133.00,38.00)(152.00,38.00)(152.00,19.00)(133.00,19.00)(133.00,38.00)
\put(142.50,24.50){\makebox(0,0)[b]{1}}

\path(133.00,57.00)(152.00,57.00)(152.00,38.00)(133.00,38.00)(133.00,57.00)
\put(142.50,43.50){\makebox(0,0)[b]{1}}

\path(133.00,76.00)(152.00,76.00)(152.00,57.00)(133.00,57.00)(133.00,76.00)
\put(142.50,62.50){\makebox(0,0)[b]{1}}

\path(152.00,38.00)(171.00,38.00)(171.00,19.00)(152.00,19.00)(152.00,38.00)
\put(161.50,24.50){\makebox(0,0)[b]{1}}

\shade
\path(152.00,57.00)(171.00,57.00)(171.00,38.00)(152.00,38.00)(152.00,57.00)
\put(161.50,43.50){\makebox(0,0)[b]{$x'$}}

\path(152.00,76.00)(171.00,76.00)(171.00,57.00)(152.00,57.00)(152.00,76.00)
\put(161.50,62.50){\makebox(0,0)[b]{1}}

\path(171.00,38.00)(190.00,38.00)(190.00,19.00)(171.00,19.00)(171.00,38.00)
\put(180.50,24.50){\makebox(0,0)[b]{1}}

\shade
\path(171.00,57.00)(190.00,57.00)(190.00,38.00)(171.00,38.00)(171.00,57.00)
\put(180.50,43.50){\makebox(0,0)[b]{$x$}}

\path(171.00,76.00)(190.00,76.00)(190.00,57.00)(171.00,57.00)(171.00,76.00)
\put(180.50,62.50){\makebox(0,0)[b]{1}}

\path(190.00,38.00)(209.00,38.00)(209.00,19.00)(190.00,19.00)(190.00,38.00)
\put(199.50,24.50){\makebox(0,0)[b]{1}}

\path(190.00,57.00)(209.00,57.00)(209.00,38.00)(190.00,38.00)(190.00,57.00)
\put(199.50,43.50){\makebox(0,0)[b]{1}}

\path(190.00,76.00)(209.00,76.00)(209.00,57.00)(190.00,57.00)(190.00,76.00)
\put(199.50,62.50){\makebox(0,0)[b]{1}}

\path(209.00,38.00)(228.00,38.00)(228.00,19.00)(209.00,19.00)(209.00,38.00)
\put(218.50,24.50){\makebox(0,0)[b]{1}}

\shade
\path(209.00,57.00)(228.00,57.00)(228.00,38.00)(209.00,38.00)(209.00,57.00)
\put(218.50,43.50){\makebox(0,0)[b]{$x'$}}

\path(209.00,76.00)(228.00,76.00)(228.00,57.00)(209.00,57.00)(209.00,76.00)
\put(218.50,62.50){\makebox(0,0)[b]{1}}

\path(228.00,38.00)(247.00,38.00)(247.00,19.00)(228.00,19.00)(228.00,38.00)
\put(237.50,24.50){\makebox(0,0)[b]{1}}

\shade
\path(228.00,57.00)(247.00,57.00)(247.00,38.00)(228.00,38.00)(228.00,57.00)
\put(237.50,43.50){\makebox(0,0)[b]{$x$}}

\path(228.00,76.00)(247.00,76.00)(247.00,57.00)(228.00,57.00)(228.00,76.00)
\put(237.50,62.50){\makebox(0,0)[b]{1}}

\path(247.00,38.00)(266.00,38.00)(266.00,19.00)(247.00,19.00)(247.00,38.00)
\put(256.50,24.50){\makebox(0,0)[b]{1}}

\path(247.00,57.00)(266.00,57.00)(266.00,38.00)(247.00,38.00)(247.00,57.00)
\put(256.50,43.50){\makebox(0,0)[b]{1}}

\path(247.00,76.00)(266.00,76.00)(266.00,57.00)(247.00,57.00)(247.00,76.00)
\put(256.50,62.50){\makebox(0,0)[b]{1}}

\path(285.00,19.00)(266.00,19.00)(266.00,38.00)(285.00,38.00)
\put(275.50,28.50){\makebox(0,0){$\cdot$}}
\put(281.83,28.50){\makebox(0,0){$\cdot$}}
\put(288.17,28.50){\makebox(0,0){$\cdot$}}

\path(285.00,38.00)(266.00,38.00)(266.00,57.00)(285.00,57.00)
\put(275.50,47.50){\makebox(0,0){$\cdot$}}
\put(281.83,47.50){\makebox(0,0){$\cdot$}}
\put(288.17,47.50){\makebox(0,0){$\cdot$}}

\path(285.00,57.00)(266.00,57.00)(266.00,76.00)(285.00,76.00)
\put(275.50,66.50){\makebox(0,0){$\cdot$}}
\put(281.83,66.50){\makebox(0,0){$\cdot$}}
\put(288.17,66.50){\makebox(0,0){$\cdot$}}

\texture{cccccccc 0 0 0 cccccccc 0 0 0
         cccccccc 0 0 0 cccccccc 0 0 0
         cccccccc 0 0 0 cccccccc 0 0 0
         cccccccc 0 0 0 cccccccc 0 0 0}
\whiten

\path(19.00,38.00)(38.00,38.00)(38.00,19.00)(19.00,19.00)(19.00,38.00)
\put(28.50,28.50){\makebox(0,0){\circle*{9.50}}}

\path(19.00,57.00)(38.00,57.00)(38.00,38.00)(19.00,38.00)(19.00,57.00)
\put(28.50,47.50){\makebox(0,0){\circle*{9.50}}}

\path(19.00,76.00)(38.00,76.00)(38.00,57.00)(19.00,57.00)(19.00,76.00)
\put(28.50,66.50){\makebox(0,0){\circle*{9.50}}}

\put(85,85.5){\makebox(0,0)[r]{$x$}}
\put(86,85.5){\vector(1,0){30}}
\mathversion{normal}

\end{picture}\\[10pt]
\begin{picture}(297.67,88.00)(0.00,0.00)

\mathversion{bold}
\filltype{shade}
\texture{40004000 0 0 0 00400040 0 0 0
         40004000 0 0 0 00400040 0 0 0
         40004000 0 0 0 00400040 0 0 0
         40004000 0 0 0 00400040 0 0 0}
\whiten

\path(0.00,49.50)(0.00,38.50)(9.50,33.00)(19.00,38.50)
     (19.00,49.50)(9.50,55.00)(0.00,49.50)
\put(9.50,40.00){\makebox(0,0)[b]{1}}

\path(9.50,33.00)(9.50,22.00)(19.00,16.50)(28.50,22.00)
     (28.50,33.00)(19.00,38.50)(9.50,33.00)
\put(19.00,23.50){\makebox(0,0)[b]{2}}

\path(9.50,66.00)(9.50,55.00)(19.00,49.50)(28.50,55.00)
     (28.50,66.00)(19.00,71.50)(9.50,66.00)
\put(19.00,56.50){\makebox(0,0)[b]{2}}

\path(19.00,16.50)(19.00,5.50)(28.50,0.00)(38.00,5.50)
     (38.00,16.50)(28.50,22.00)(19.00,16.50)
\put(28.50,7.00){\makebox(0,0)[b]{1}}

\path(19.00,82.50)(19.00,71.50)(28.50,66.00)(38.00,71.50)
     (38.00,82.50)(28.50,88.00)(19.00,82.50)
\put(28.50,73.00){\makebox(0,0)[b]{1}}

\path(38.00,16.50)(38.00,5.50)(47.50,0.00)(57.00,5.50)
     (57.00,16.50)(47.50,22.00)(38.00,16.50)
\put(47.50,7.00){\makebox(0,0)[b]{1}}

\shade
\path(38.00,49.50)(38.00,38.50)(47.50,33.00)(57.00,38.50)
     (57.00,49.50)(47.50,55.00)(38.00,49.50)
\put(47.50,40.00){\makebox(0,0)[b]{$x'$}}

\path(38.00,82.50)(38.00,71.50)(47.50,66.00)(57.00,71.50)
     (57.00,82.50)(47.50,88.00)(38.00,82.50)
\put(47.50,73.00){\makebox(0,0)[b]{1}}

\path(47.50,33.00)(47.50,22.00)(57.00,16.50)(66.50,22.00)
     (66.50,33.00)(57.00,38.50)(47.50,33.00)
\put(57.00,23.50){\makebox(0,0)[b]{2}}

\path(47.50,66.00)(47.50,55.00)(57.00,49.50)(66.50,55.00)
     (66.50,66.00)(57.00,71.50)(47.50,66.00)
\put(57.00,56.50){\makebox(0,0)[b]{2}}

\shade
\path(57.00,49.50)(57.00,38.50)(66.50,33.00)(76.00,38.50)
     (76.00,49.50)(66.50,55.00)(57.00,49.50)
\put(66.50,40.00){\makebox(0,0)[b]{$x$}}

\path(66.50,33.00)(66.50,22.00)(76.00,16.50)(85.50,22.00)
     (85.50,33.00)(76.00,38.50)(66.50,33.00)
\put(76.00,23.50){\makebox(0,0)[b]{1}}

\path(66.50,66.00)(66.50,55.00)(76.00,49.50)(85.50,55.00)
     (85.50,66.00)(76.00,71.50)(66.50,66.00)
\put(76.00,56.50){\makebox(0,0)[b]{1}}

\shade
\path(76.00,49.50)(76.00,38.50)(85.50,33.00)(95.00,38.50)
     (95.00,49.50)(85.50,55.00)(76.00,49.50)
\put(85.50,40.00){\makebox(0,0)[b]{$x'$}}

\path(85.50,33.00)(85.50,22.00)(95.00,16.50)(104.50,22.00)
     (104.50,33.00)(95.00,38.50)(85.50,33.00)
\put(95.00,23.50){\makebox(0,0)[b]{1}}

\path(85.50,66.00)(85.50,55.00)(95.00,49.50)(104.50,55.00)
     (104.50,66.00)(95.00,71.50)(85.50,66.00)
\put(95.00,56.50){\makebox(0,0)[b]{1}}

\shade
\path(95.00,49.50)(95.00,38.50)(104.50,33.00)(114.00,38.50)
     (114.00,49.50)(104.50,55.00)(95.00,49.50)
\put(104.50,40.00){\makebox(0,0)[b]{$x$}}

\path(104.50,33.00)(104.50,22.00)(114.00,16.50)(123.50,22.00)
     (123.50,33.00)(114.00,38.50)(104.50,33.00)
\put(114.00,23.50){\makebox(0,0)[b]{1}}

\path(104.50,66.00)(104.50,55.00)(114.00,49.50)(123.50,55.00)
     (123.50,66.00)(114.00,71.50)(104.50,66.00)
\put(114.00,56.50){\makebox(0,0)[b]{1}}

\shade
\path(114.00,49.50)(114.00,38.50)(123.50,33.00)(133.00,38.50)
     (133.00,49.50)(123.50,55.00)(114.00,49.50)
\put(123.50,40.00){\makebox(0,0)[b]{$x'$}}

\path(123.50,33.00)(123.50,22.00)(133.00,16.50)(142.50,22.00)
     (142.50,33.00)(133.00,38.50)(123.50,33.00)
\put(133.00,23.50){\makebox(0,0)[b]{1}}

\path(123.50,66.00)(123.50,55.00)(133.00,49.50)(142.50,55.00)
     (142.50,66.00)(133.00,71.50)(123.50,66.00)
\put(133.00,56.50){\makebox(0,0)[b]{1}}

\shade
\path(133.00,49.50)(133.00,38.50)(142.50,33.00)(152.00,38.50)
     (152.00,49.50)(142.50,55.00)(133.00,49.50)
\put(142.50,40.00){\makebox(0,0)[b]{$x$}}

\path(142.50,33.00)(142.50,22.00)(152.00,16.50)(161.50,22.00)
     (161.50,33.00)(152.00,38.50)(142.50,33.00)
\put(152.00,23.50){\makebox(0,0)[b]{1}}

\path(142.50,66.00)(142.50,55.00)(152.00,49.50)(161.50,55.00)
     (161.50,66.00)(152.00,71.50)(142.50,66.00)
\put(152.00,56.50){\makebox(0,0)[b]{1}}

\shade
\path(152.00,49.50)(152.00,38.50)(161.50,33.00)(171.00,38.50)
     (171.00,49.50)(161.50,55.00)(152.00,49.50)
\put(161.50,40.00){\makebox(0,0)[b]{$x'$}}

\path(161.50,33.00)(161.50,22.00)(171.00,16.50)(180.50,22.00)
     (180.50,33.00)(171.00,38.50)(161.50,33.00)
\put(171.00,23.50){\makebox(0,0)[b]{1}}

\path(161.50,66.00)(161.50,55.00)(171.00,49.50)(180.50,55.00)
     (180.50,66.00)(171.00,71.50)(161.50,66.00)
\put(171.00,56.50){\makebox(0,0)[b]{1}}

\shade
\path(171.00,49.50)(171.00,38.50)(180.50,33.00)(190.00,38.50)
     (190.00,49.50)(180.50,55.00)(171.00,49.50)
\put(180.50,40.00){\makebox(0,0)[b]{$x$}}

\path(180.50,33.00)(180.50,22.00)(190.00,16.50)(199.50,22.00)
     (199.50,33.00)(190.00,38.50)(180.50,33.00)
\put(190.00,23.50){\makebox(0,0)[b]{1}}

\path(180.50,66.00)(180.50,55.00)(190.00,49.50)(199.50,55.00)
     (199.50,66.00)(190.00,71.50)(180.50,66.00)
\put(190.00,56.50){\makebox(0,0)[b]{1}}

\shade
\path(190.00,49.50)(190.00,38.50)(199.50,33.00)(209.00,38.50)
     (209.00,49.50)(199.50,55.00)(190.00,49.50)
\put(199.50,40.00){\makebox(0,0)[b]{$x'$}}

\path(199.50,33.00)(199.50,22.00)(209.00,16.50)(218.50,22.00)
     (218.50,33.00)(209.00,38.50)(199.50,33.00)
\put(209.00,23.50){\makebox(0,0)[b]{1}}

\path(199.50,66.00)(199.50,55.00)(209.00,49.50)(218.50,55.00)
     (218.50,66.00)(209.00,71.50)(199.50,66.00)
\put(209.00,56.50){\makebox(0,0)[b]{1}}

\shade
\path(209.00,49.50)(209.00,38.50)(218.50,33.00)(228.00,38.50)
     (228.00,49.50)(218.50,55.00)(209.00,49.50)
\put(218.50,40.00){\makebox(0,0)[b]{$x$}}

\path(218.50,33.00)(218.50,22.00)(228.00,16.50)(237.50,22.00)
     (237.50,33.00)(228.00,38.50)(218.50,33.00)
\put(228.00,23.50){\makebox(0,0)[b]{1}}

\path(218.50,66.00)(218.50,55.00)(228.00,49.50)(237.50,55.00)
     (237.50,66.00)(228.00,71.50)(218.50,66.00)
\put(228.00,56.50){\makebox(0,0)[b]{1}}

\shade
\path(228.00,49.50)(228.00,38.50)(237.50,33.00)(247.00,38.50)
     (247.00,49.50)(237.50,55.00)(228.00,49.50)
\put(237.50,40.00){\makebox(0,0)[b]{$x'$}}

\path(237.50,33.00)(237.50,22.00)(247.00,16.50)(256.50,22.00)
     (256.50,33.00)(247.00,38.50)(237.50,33.00)
\put(247.00,23.50){\makebox(0,0)[b]{1}}

\path(237.50,66.00)(237.50,55.00)(247.00,49.50)(256.50,55.00)
     (256.50,66.00)(247.00,71.50)(237.50,66.00)
\put(247.00,56.50){\makebox(0,0)[b]{1}}

\shade
\path(247.00,49.50)(247.00,38.50)(256.50,33.00)(266.00,38.50)
     (266.00,49.50)(256.50,55.00)(247.00,49.50)
\put(256.50,40.00){\makebox(0,0)[b]{$x$}}

\path(256.50,33.00)(256.50,22.00)(266.00,16.50)(275.50,22.00)
     (275.50,33.00)(266.00,38.50)(256.50,33.00)
\put(266.00,23.50){\makebox(0,0)[b]{1}}

\path(256.50,66.00)(256.50,55.00)(266.00,49.50)(275.50,55.00)
     (275.50,66.00)(266.00,71.50)(256.50,66.00)
\put(266.00,56.50){\makebox(0,0)[b]{1}}

\path(285.00,49.50)(275.50,55.00)(266.00,49.50)
     (266.00,38.50)(275.50,33.00)(285.00,38.50)
\put(275.50,44.00){\makebox(0,0){$\cdot$}}
\put(281.83,44.00){\makebox(0,0){$\cdot$}}
\put(288.17,44.00){\makebox(0,0){$\cdot$}}

\path(294.50,33.00)(285.00,38.50)(275.50,33.00)
     (275.50,22.00)(285.00,16.50)(294.50,22.00)
\put(285.00,27.50){\makebox(0,0){$\cdot$}}
\put(291.33,27.50){\makebox(0,0){$\cdot$}}
\put(297.67,27.50){\makebox(0,0){$\cdot$}}

\path(294.50,66.00)(285.00,71.50)(275.50,66.00)
     (275.50,55.00)(285.00,49.50)(294.50,55.00)
\put(285.00,60.50){\makebox(0,0){$\cdot$}}
\put(291.33,60.50){\makebox(0,0){$\cdot$}}
\put(297.67,60.50){\makebox(0,0){$\cdot$}}

\texture{cccccccc 0 0 0 cccccccc 0 0 0
         cccccccc 0 0 0 cccccccc 0 0 0
         cccccccc 0 0 0 cccccccc 0 0 0
         cccccccc 0 0 0 cccccccc 0 0 0}
\whiten

\path(19.00,49.50)(19.00,38.50)(28.50,33.00)(38.00,38.50)
     (38.00,49.50)(28.50,55.00)(19.00,49.50)
\put(28.50,44.00){\makebox(0,0){\circle*{9.50}}}

\path(28.50,33.00)(28.50,22.00)(38.00,16.50)(47.50,22.00)
     (47.50,33.00)(38.00,38.50)(28.50,33.00)
\put(38.00,27.50){\makebox(0,0){\circle*{9.50}}}

\path(28.50,66.00)(28.50,55.00)(38.00,49.50)(47.50,55.00)
     (47.50,66.00)(38.00,71.50)(28.50,66.00)
\put(38.00,60.50){\makebox(0,0){\circle*{9.50}}}

\put(84.5,80){\makebox(0,0)[r]{$x$}}
\put(85.5,80){\vector(1,0){30}}
\mathversion{normal}

\end{picture}
\end{center} \vspace{-20pt}
\caption{A wire that conducts $x$, running from the starting point of $x$} \label{wire}
\end{figure}

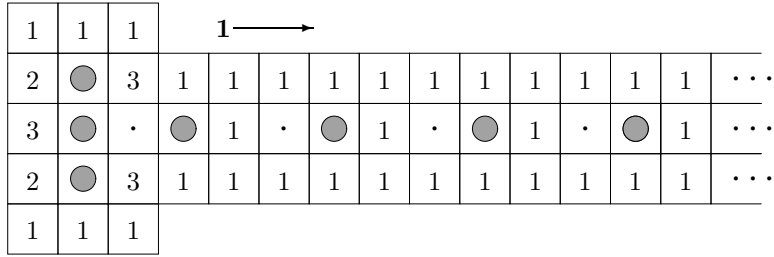
\begin{figure}[!htp]
\begin{center}
\begin{picture}(288.17,95.00)(0.00,0.00)

\mathversion{bold}
\filltype{shade}
\texture{40004000 0 0 0 00400040 0 0 0
         40004000 0 0 0 00400040 0 0 0
         40004000 0 0 0 00400040 0 0 0
         40004000 0 0 0 00400040 0 0 0}
\whiten

\path(0.00,19.00)(19.00,19.00)(19.00,0.00)(0.00,0.00)(0.00,19.00)
\put(9.50,5.50){\makebox(0,0)[b]{1}}

\path(0.00,38.00)(19.00,38.00)(19.00,19.00)(0.00,19.00)(0.00,38.00)
\put(9.50,24.50){\makebox(0,0)[b]{2}}

\path(0.00,57.00)(19.00,57.00)(19.00,38.00)(0.00,38.00)(0.00,57.00)
\put(9.50,43.50){\makebox(0,0)[b]{3}}

\path(0.00,76.00)(19.00,76.00)(19.00,57.00)(0.00,57.00)(0.00,76.00)
\put(9.50,62.50){\makebox(0,0)[b]{2}}

\path(0.00,95.00)(19.00,95.00)(19.00,76.00)(0.00,76.00)(0.00,95.00)
\put(9.50,81.50){\makebox(0,0)[b]{1}}

\path(19.00,19.00)(38.00,19.00)(38.00,0.00)(19.00,0.00)(19.00,19.00)
\put(28.50,5.50){\makebox(0,0)[b]{1}}

\path(19.00,95.00)(38.00,95.00)(38.00,76.00)(19.00,76.00)(19.00,95.00)
\put(28.50,81.50){\makebox(0,0)[b]{1}}

\path(38.00,19.00)(57.00,19.00)(57.00,0.00)(38.00,0.00)(38.00,19.00)
\put(47.50,5.50){\makebox(0,0)[b]{1}}

\path(38.00,38.00)(57.00,38.00)(57.00,19.00)(38.00,19.00)(38.00,38.00)
\put(47.50,24.50){\makebox(0,0)[b]{3}}

\path(38.00,57.00)(57.00,57.00)(57.00,38.00)(38.00,38.00)(38.00,57.00)
\put(47.50,47.50){\makebox(0,0){$\cdot$}}

\path(38.00,76.00)(57.00,76.00)(57.00,57.00)(38.00,57.00)(38.00,76.00)
\put(47.50,62.50){\makebox(0,0)[b]{3}}

\path(38.00,95.00)(57.00,95.00)(57.00,76.00)(38.00,76.00)(38.00,95.00)
\put(47.50,81.50){\makebox(0,0)[b]{1}}

\path(57.00,38.00)(76.00,38.00)(76.00,19.00)(57.00,19.00)(57.00,38.00)
\put(66.50,24.50){\makebox(0,0)[b]{1}}

\path(57.00,76.00)(76.00,76.00)(76.00,57.00)(57.00,57.00)(57.00,76.00)
\put(66.50,62.50){\makebox(0,0)[b]{1}}

\path(76.00,38.00)(95.00,38.00)(95.00,19.00)(76.00,19.00)(76.00,38.00)
\put(85.50,24.50){\makebox(0,0)[b]{1}}

\path(76.00,57.00)(95.00,57.00)(95.00,38.00)(76.00,38.00)(76.00,57.00)
\put(85.50,43.50){\makebox(0,0)[b]{1}}

\path(76.00,76.00)(95.00,76.00)(95.00,57.00)(76.00,57.00)(76.00,76.00)
\put(85.50,62.50){\makebox(0,0)[b]{1}}

\path(95.00,38.00)(114.00,38.00)(114.00,19.00)(95.00,19.00)(95.00,38.00)
\put(104.50,24.50){\makebox(0,0)[b]{1}}

\path(95.00,57.00)(114.00,57.00)(114.00,38.00)(95.00,38.00)(95.00,57.00)
\put(104.50,47.50){\makebox(0,0){$\cdot$}}

\path(95.00,76.00)(114.00,76.00)(114.00,57.00)(95.00,57.00)(95.00,76.00)
\put(104.50,62.50){\makebox(0,0)[b]{1}}

\path(114.00,38.00)(133.00,38.00)(133.00,19.00)(114.00,19.00)(114.00,38.00)
\put(123.50,24.50){\makebox(0,0)[b]{1}}

\path(114.00,76.00)(133.00,76.00)(133.00,57.00)(114.00,57.00)(114.00,76.00)
\put(123.50,62.50){\makebox(0,0)[b]{1}}

\path(133.00,38.00)(152.00,38.00)(152.00,19.00)(133.00,19.00)(133.00,38.00)
\put(142.50,24.50){\makebox(0,0)[b]{1}}

\path(133.00,57.00)(152.00,57.00)(152.00,38.00)(133.00,38.00)(133.00,57.00)
\put(142.50,43.50){\makebox(0,0)[b]{1}}

\path(133.00,76.00)(152.00,76.00)(152.00,57.00)(133.00,57.00)(133.00,76.00)
\put(142.50,62.50){\makebox(0,0)[b]{1}}

\path(152.00,38.00)(171.00,38.00)(171.00,19.00)(152.00,19.00)(152.00,38.00)
\put(161.50,24.50){\makebox(0,0)[b]{1}}

\path(152.00,57.00)(171.00,57.00)(171.00,38.00)(152.00,38.00)(152.00,57.00)
\put(161.50,47.50){\makebox(0,0){$\cdot$}}

\path(152.00,76.00)(171.00,76.00)(171.00,57.00)(152.00,57.00)(152.00,76.00)
\put(161.50,62.50){\makebox(0,0)[b]{1}}

\path(171.00,38.00)(190.00,38.00)(190.00,19.00)(171.00,19.00)(171.00,38.00)
\put(180.50,24.50){\makebox(0,0)[b]{1}}

\path(171.00,76.00)(190.00,76.00)(190.00,57.00)(171.00,57.00)(171.00,76.00)
\put(180.50,62.50){\makebox(0,0)[b]{1}}

\path(190.00,38.00)(209.00,38.00)(209.00,19.00)(190.00,19.00)(190.00,38.00)
\put(199.50,24.50){\makebox(0,0)[b]{1}}

\path(190.00,57.00)(209.00,57.00)(209.00,38.00)(190.00,38.00)(190.00,57.00)
\put(199.50,43.50){\makebox(0,0)[b]{1}}

\path(190.00,76.00)(209.00,76.00)(209.00,57.00)(190.00,57.00)(190.00,76.00)
\put(199.50,62.50){\makebox(0,0)[b]{1}}

\path(209.00,38.00)(228.00,38.00)(228.00,19.00)(209.00,19.00)(209.00,38.00)
\put(218.50,24.50){\makebox(0,0)[b]{1}}

\path(209.00,57.00)(228.00,57.00)(228.00,38.00)(209.00,38.00)(209.00,57.00)
\put(218.50,47.50){\makebox(0,0){$\cdot$}}

\path(209.00,76.00)(228.00,76.00)(228.00,57.00)(209.00,57.00)(209.00,76.00)
\put(218.50,62.50){\makebox(0,0)[b]{1}}

\path(228.00,38.00)(247.00,38.00)(247.00,19.00)(228.00,19.00)(228.00,38.00)
\put(237.50,24.50){\makebox(0,0)[b]{1}}

\texture{cccccccc 0 0 0 cccccccc 0 0 0
         cccccccc 0 0 0 cccccccc 0 0 0
         cccccccc 0 0 0 cccccccc 0 0 0
         cccccccc 0 0 0 cccccccc 0 0 0}
\whiten

\path(228.00,57.00)(247.00,57.00)(247.00,38.00)(228.00,38.00)(228.00,57.00)
\put(237.50,47.50){\makebox(0,0){\circle*{9.50}}}

\path(228.00,76.00)(247.00,76.00)(247.00,57.00)(228.00,57.00)(228.00,76.00)
\put(237.50,62.50){\makebox(0,0)[b]{1}}

\path(247.00,38.00)(266.00,38.00)(266.00,19.00)(247.00,19.00)(247.00,38.00)
\put(256.50,24.50){\makebox(0,0)[b]{1}}

\path(247.00,57.00)(266.00,57.00)(266.00,38.00)(247.00,38.00)(247.00,57.00)
\put(256.50,43.50){\makebox(0,0)[b]{1}}

\path(247.00,76.00)(266.00,76.00)(266.00,57.00)(247.00,57.00)(247.00,76.00)
\put(256.50,62.50){\makebox(0,0)[b]{1}}

\path(285.00,19.00)(266.00,19.00)(266.00,38.00)(285.00,38.00)
\put(275.50,28.50){\makebox(0,0){$\cdot$}}
\put(281.83,28.50){\makebox(0,0){$\cdot$}}
\put(288.17,28.50){\makebox(0,0){$\cdot$}}

\path(285.00,38.00)(266.00,38.00)(266.00,57.00)(285.00,57.00)
\put(275.50,47.50){\makebox(0,0){$\cdot$}}
\put(281.83,47.50){\makebox(0,0){$\cdot$}}
\put(288.17,47.50){\makebox(0,0){$\cdot$}}

\path(285.00,57.00)(266.00,57.00)(266.00,76.00)(285.00,76.00)
\put(275.50,66.50){\makebox(0,0){$\cdot$}}
\put(281.83,66.50){\makebox(0,0){$\cdot$}}
\put(288.17,66.50){\makebox(0,0){$\cdot$}}

\path(19.00,38.00)(38.00,38.00)(38.00,19.00)(19.00,19.00)(19.00,38.00)
\put(28.50,28.50){\makebox(0,0){\circle*{9.50}}}

\path(19.00,57.00)(38.00,57.00)(38.00,38.00)(19.00,38.00)(19.00,57.00)
\put(28.50,47.50){\makebox(0,0){\circle*{9.50}}}

\path(19.00,76.00)(38.00,76.00)(38.00,57.00)(19.00,57.00)(19.00,76.00)
\put(28.50,66.50){\makebox(0,0){\circle*{9.50}}}

\path(57.00,57.00)(76.00,57.00)(76.00,38.00)(57.00,38.00)(57.00,57.00)
\put(66.50,47.50){\makebox(0,0){\circle*{9.50}}}

\path(114.00,57.00)(133.00,57.00)(133.00,38.00)(114.00,38.00)(114.00,57.00)
\put(123.50,47.50){\makebox(0,0){\circle*{9.50}}}

\path(171.00,57.00)(190.00,57.00)(190.00,38.00)(171.00,38.00)(171.00,57.00)
\put(180.50,47.50){\makebox(0,0){\circle*{9.50}}}

\path(228.00,57.00)(247.00,57.00)(247.00,38.00)(228.00,38.00)(228.00,57.00)
\put(237.50,47.50){\makebox(0,0){\circle*{9.50}}}

\put(84.5,85.5){\makebox(0,0)[r]{$1$}}
\put(85.5,85.5){\vector(1,0){30}}
\mathversion{normal}

\end{picture}
\end{center} \vspace{-20pt}
\caption{A wire that conducts one} \label{one}
\end{figure}

\begin{figure}
\begin{center}
\begin{picture}(288.17,95.00)(0.00,0.00)

\mathversion{bold}
\filltype{shade}
\texture{40004000 0 0 0 00400040 0 0 0
         40004000 0 0 0 00400040 0 0 0
         40004000 0 0 0 00400040 0 0 0
         40004000 0 0 0 00400040 0 0 0}
\whiten

\path(0.00,19.00)(19.00,19.00)(19.00,0.00)(0.00,0.00)(0.00,19.00)
\put(9.50,5.50){\makebox(0,0)[b]{1}}

\path(0.00,38.00)(19.00,38.00)(19.00,19.00)(0.00,19.00)(0.00,38.00)
\put(9.50,24.50){\makebox(0,0)[b]{2}}

\path(0.00,57.00)(19.00,57.00)(19.00,38.00)(0.00,38.00)(0.00,57.00)
\put(9.50,43.50){\makebox(0,0)[b]{3}}

\path(0.00,76.00)(19.00,76.00)(19.00,57.00)(0.00,57.00)(0.00,76.00)
\put(9.50,62.50){\makebox(0,0)[b]{2}}

\path(0.00,95.00)(19.00,95.00)(19.00,76.00)(0.00,76.00)(0.00,95.00)
\put(9.50,81.50){\makebox(0,0)[b]{1}}

\path(19.00,19.00)(38.00,19.00)(38.00,0.00)(19.00,0.00)(19.00,19.00)
\put(28.50,5.50){\makebox(0,0)[b]{1}}

\path(19.00,95.00)(38.00,95.00)(38.00,76.00)(19.00,76.00)(19.00,95.00)
\put(28.50,81.50){\makebox(0,0)[b]{1}}

\path(38.00,19.00)(57.00,19.00)(57.00,0.00)(38.00,0.00)(38.00,19.00)
\put(47.50,5.50){\makebox(0,0)[b]{1}}

\path(38.00,38.00)(57.00,38.00)(57.00,19.00)(38.00,19.00)(38.00,38.00)
\put(47.50,24.50){\makebox(0,0)[b]{3}}

\path(38.00,76.00)(57.00,76.00)(57.00,57.00)(38.00,57.00)(38.00,76.00)
\put(47.50,62.50){\makebox(0,0)[b]{3}}

\path(38.00,95.00)(57.00,95.00)(57.00,76.00)(38.00,76.00)(38.00,95.00)
\put(47.50,81.50){\makebox(0,0)[b]{1}}

\path(57.00,38.00)(76.00,38.00)(76.00,19.00)(57.00,19.00)(57.00,38.00)
\put(66.50,24.50){\makebox(0,0)[b]{1}}

\path(57.00,57.00)(76.00,57.00)(76.00,38.00)(57.00,38.00)(57.00,57.00)
\put(66.50,47.50){\makebox(0,0){$\cdot$}}

\path(57.00,76.00)(76.00,76.00)(76.00,57.00)(57.00,57.00)(57.00,76.00)
\put(66.50,62.50){\makebox(0,0)[b]{1}}

\path(76.00,38.00)(95.00,38.00)(95.00,19.00)(76.00,19.00)(76.00,38.00)
\put(85.50,24.50){\makebox(0,0)[b]{1}}

\path(76.00,57.00)(95.00,57.00)(95.00,38.00)(76.00,38.00)(76.00,57.00)
\put(85.50,43.50){\makebox(0,0)[b]{1}}

\path(76.00,76.00)(95.00,76.00)(95.00,57.00)(76.00,57.00)(76.00,76.00)
\put(85.50,62.50){\makebox(0,0)[b]{1}}

\path(95.00,38.00)(114.00,38.00)(114.00,19.00)(95.00,19.00)(95.00,38.00)
\put(104.50,24.50){\makebox(0,0)[b]{1}}

\path(95.00,76.00)(114.00,76.00)(114.00,57.00)(95.00,57.00)(95.00,76.00)
\put(104.50,62.50){\makebox(0,0)[b]{1}}

\path(114.00,38.00)(133.00,38.00)(133.00,19.00)(114.00,19.00)(114.00,38.00)
\put(123.50,24.50){\makebox(0,0)[b]{1}}

\path(114.00,57.00)(133.00,57.00)(133.00,38.00)(114.00,38.00)(114.00,57.00)
\put(123.50,47.50){\makebox(0,0){$\cdot$}}

\path(114.00,76.00)(133.00,76.00)(133.00,57.00)(114.00,57.00)(114.00,76.00)
\put(123.50,62.50){\makebox(0,0)[b]{1}}

\path(133.00,38.00)(152.00,38.00)(152.00,19.00)(133.00,19.00)(133.00,38.00)
\put(142.50,24.50){\makebox(0,0)[b]{1}}

\path(133.00,57.00)(152.00,57.00)(152.00,38.00)(133.00,38.00)(133.00,57.00)
\put(142.50,43.50){\makebox(0,0)[b]{1}}

\path(133.00,76.00)(152.00,76.00)(152.00,57.00)(133.00,57.00)(133.00,76.00)
\put(142.50,62.50){\makebox(0,0)[b]{1}}

\path(152.00,38.00)(171.00,38.00)(171.00,19.00)(152.00,19.00)(152.00,38.00)
\put(161.50,24.50){\makebox(0,0)[b]{1}}

\path(152.00,76.00)(171.00,76.00)(171.00,57.00)(152.00,57.00)(152.00,76.00)
\put(161.50,62.50){\makebox(0,0)[b]{1}}

\path(171.00,38.00)(190.00,38.00)(190.00,19.00)(171.00,19.00)(171.00,38.00)
\put(180.50,24.50){\makebox(0,0)[b]{1}}

\path(171.00,57.00)(190.00,57.00)(190.00,38.00)(171.00,38.00)(171.00,57.00)
\put(180.50,47.50){\makebox(0,0){$\cdot$}}

\path(171.00,76.00)(190.00,76.00)(190.00,57.00)(171.00,57.00)(171.00,76.00)
\put(180.50,62.50){\makebox(0,0)[b]{1}}

\path(190.00,38.00)(209.00,38.00)(209.00,19.00)(190.00,19.00)(190.00,38.00)
\put(199.50,24.50){\makebox(0,0)[b]{1}}

\path(190.00,57.00)(209.00,57.00)(209.00,38.00)(190.00,38.00)(190.00,57.00)
\put(199.50,43.50){\makebox(0,0)[b]{1}}

\path(190.00,76.00)(209.00,76.00)(209.00,57.00)(190.00,57.00)(190.00,76.00)
\put(199.50,62.50){\makebox(0,0)[b]{1}}

\path(209.00,38.00)(228.00,38.00)(228.00,19.00)(209.00,19.00)(209.00,38.00)
\put(218.50,24.50){\makebox(0,0)[b]{1}}

\path(209.00,76.00)(228.00,76.00)(228.00,57.00)(209.00,57.00)(209.00,76.00)
\put(218.50,62.50){\makebox(0,0)[b]{1}}

\path(228.00,38.00)(247.00,38.00)(247.00,19.00)(228.00,19.00)(228.00,38.00)
\put(237.50,24.50){\makebox(0,0)[b]{1}}

\path(228.00,57.00)(247.00,57.00)(247.00,38.00)(228.00,38.00)(228.00,57.00)
\put(237.50,47.50){\makebox(0,0){$\cdot$}}

\path(228.00,76.00)(247.00,76.00)(247.00,57.00)(228.00,57.00)(228.00,76.00)
\put(237.50,62.50){\makebox(0,0)[b]{1}}

\path(247.00,38.00)(266.00,38.00)(266.00,19.00)(247.00,19.00)(247.00,38.00)
\put(256.50,24.50){\makebox(0,0)[b]{1}}

\path(247.00,57.00)(266.00,57.00)(266.00,38.00)(247.00,38.00)(247.00,57.00)
\put(256.50,43.50){\makebox(0,0)[b]{1}}

\path(247.00,76.00)(266.00,76.00)(266.00,57.00)(247.00,57.00)(247.00,76.00)
\put(256.50,62.50){\makebox(0,0)[b]{1}}

\path(285.00,19.00)(266.00,19.00)(266.00,38.00)(285.00,38.00)
\put(275.50,28.50){\makebox(0,0){$\cdot$}}
\put(281.83,28.50){\makebox(0,0){$\cdot$}}
\put(288.17,28.50){\makebox(0,0){$\cdot$}}

\path(285.00,38.00)(266.00,38.00)(266.00,57.00)(285.00,57.00)
\put(275.50,47.50){\makebox(0,0){$\cdot$}}
\put(281.83,47.50){\makebox(0,0){$\cdot$}}
\put(288.17,47.50){\makebox(0,0){$\cdot$}}

\path(285.00,57.00)(266.00,57.00)(266.00,76.00)(285.00,76.00)
\put(275.50,66.50){\makebox(0,0){$\cdot$}}
\put(281.83,66.50){\makebox(0,0){$\cdot$}}
\put(288.17,66.50){\makebox(0,0){$\cdot$}}

\texture{cccccccc 0 0 0 cccccccc 0 0 0
         cccccccc 0 0 0 cccccccc 0 0 0
         cccccccc 0 0 0 cccccccc 0 0 0
         cccccccc 0 0 0 cccccccc 0 0 0}
\whiten

\path(19.00,38.00)(38.00,38.00)(38.00,19.00)(19.00,19.00)(19.00,38.00)
\put(28.50,28.50){\makebox(0,0){\circle*{9.50}}}

\path(19.00,57.00)(38.00,57.00)(38.00,38.00)(19.00,38.00)(19.00,57.00)
\put(28.50,47.50){\makebox(0,0){\circle*{9.50}}}

\path(19.00,76.00)(38.00,76.00)(38.00,57.00)(19.00,57.00)(19.00,76.00)
\put(28.50,66.50){\makebox(0,0){\circle*{9.50}}}

\path(38.00,57.00)(57.00,57.00)(57.00,38.00)(38.00,38.00)(38.00,57.00)
\put(47.50,47.50){\makebox(0,0){\circle*{9.50}}}

\path(95.00,57.00)(114.00,57.00)(114.00,38.00)(95.00,38.00)(95.00,57.00)
\put(104.50,47.50){\makebox(0,0){\circle*{9.50}}}

\path(152.00,57.00)(171.00,57.00)(171.00,38.00)(152.00,38.00)(152.00,57.00)
\put(161.50,47.50){\makebox(0,0){\circle*{9.50}}}

\path(209.00,57.00)(228.00,57.00)(228.00,38.00)(209.00,38.00)(209.00,57.00)
\put(218.50,47.50){\makebox(0,0){\circle*{9.50}}}

\put(84.5,85.5){\makebox(0,0)[r]{$0$}}
\put(85.5,85.5){\vector(1,0){30}}
\mathversion{normal}

\end{picture}
\end{center} \vspace{-20pt}
\caption{A wire that conducts zero} \label{zero}
\end{figure}
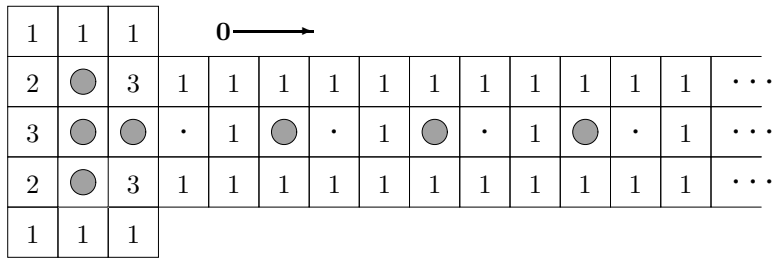

In figure \ref{wire}, we do not know what the compartments with $x$ and $x'$
contain, but we do know that either exactly all compartments with $x$ contain
a mine (figure \ref{one}, corresponding to $x = 1$) or exactly all compartments 
with $x'$ do (figure \ref{zero}, corresponding to $x = 0$). 
Which of both cases are possible follows from the global
structure of the Minesweeper board. To make polynomial equations from the
polynomial expressions, it suffices to force these expressions to have a 
given value in $\{0,1\}$. This can be done by forcing a wire that conducts a 
certain expression to conduct a given value. Figure \ref{force} shows how 
to force a wire to conduct one and zero respectively.

\begin{figure}[!htp]
\begin{center}

\end{center} \vspace*{-20pt}
\caption{A crossover by R. Kaye} \label{crossover}
\end{figure}

\noindent
A crossover for Minesweeper with square compartments is shown in
figure \ref{crossover}. It is taken from Richards Kaye's slides, see
\cite{slides}. You might wonder why there is no
crossover given with hexagonal compartments. The answer is that I did not
found one by direct construction. But a crossover can also be made from 
three splitters and the same number of adders, where the adders act modulo $2$,
see \cite{kaye} or figure \ref{crossadders}.

\begin{figure}[!htp]
\begin{center}
\begin{picture}(174,50)(-7,-10)

\put(0,30){\circle*{2}}
\put(-2,30){\makebox(0,0)[r]{$z$}}
\put(0,0){\circle*{2}}
\put(-2,0){\makebox(0,0)[r]{$x$}}

\put(0,30){\line(1,0){10}}
\put(0,0){\line(1,0){10}}
\qbezier(10,30)(30,36)(120,36)
\qbezier(10,30)(40,21)(50,21)
\qbezier(10,0)(40,9)(50,9)
\qbezier(10,0)(30,-6)(120,-6)

\put(50,5){\line(0,1){20}}
\put(50,15){\oval(40,20)[r]}
\put(60,15){\makebox(0,0){$+$}}
\put(70,15){\line(1,0){10}}
\qbezier(80,15)(110,24)(120,24)
\qbezier(80,15)(110,6)(120,6)

\put(120,20){\line(0,1){20}}
\put(120,30){\oval(40,20)[r]}
\put(130,30){\makebox(0,0){$+$}}
\put(140,30){\line(1,0){20}}
\put(120,-10){\line(0,1){20}}
\put(120,0){\oval(40,20)[r]}
\put(130,0){\makebox(0,0){$+$}}
\put(140,0){\line(1,0){20}}

\put(160,30){\circle*{2}}
\put(160,0){\circle*{2}}
\put(162,30){\makebox(0,0)[l]{$x$}}
\put(162,,0){\makebox(0,0)[l]{$z$}}

\end{picture}
\end{center} \vspace{-20pt}
\caption{A crossover circuit}
\label{crossadders}
\end{figure}

But before we have such a crossover, we must first make a hexagonal adder.
Figure \ref{addmul} shows an adder-multiplier combi, where the adder
acts modulo $2$. If you only want to use 
the adder, you just cut off the wire of the multiplier output. A wire cut-off
is the same as the start of a variable in figure \ref{wire}.

\begin{figure}[p]
\begin{center}
\begin{picture}(285.00,253.33)(0.00,-3.17)

\mathversion{bold}
\filltype{shade}
\texture{40004000 0 0 0 00400040 0 0 0
         40004000 0 0 0 00400040 0 0 0
         40004000 0 0 0 00400040 0 0 0
         40004000 0 0 0 00400040 0 0 0}
\whiten

\path(0.00,114.00)(19.00,114.00)(19.00,95.00)(0.00,95.00)(0.00,114.00)
\put(9.50,100.50){\makebox(0,0)[b]{1}}

\path(0.00,133.00)(19.00,133.00)(19.00,114.00)(0.00,114.00)(0.00,133.00)
\put(9.50,119.50){\makebox(0,0)[b]{1}}

\path(0.00,152.00)(19.00,152.00)(19.00,133.00)(0.00,133.00)(0.00,152.00)
\put(9.50,138.50){\makebox(0,0)[b]{1}}

\path(19.00,0.00)(19.00,19.00)(38.00,19.00)(38.00,0.00)
\put(28.50,9.50){\makebox(0,0){$\cdot$}}
\put(28.50,3.17){\makebox(0,0){$\cdot$}}
\put(28.50,-3.17){\makebox(0,0){$\cdot$}}

\path(19.00,38.00)(38.00,38.00)(38.00,19.00)(19.00,19.00)(19.00,38.00)
\put(28.50,24.50){\makebox(0,0)[b]{1}}

\path(19.00,57.00)(38.00,57.00)(38.00,38.00)(19.00,38.00)(19.00,57.00)
\put(28.50,43.50){\makebox(0,0)[b]{1}}

\path(19.00,76.00)(38.00,76.00)(38.00,57.00)(19.00,57.00)(19.00,76.00)
\put(28.50,62.50){\makebox(0,0)[b]{1}}

\path(19.00,95.00)(38.00,95.00)(38.00,76.00)(19.00,76.00)(19.00,95.00)
\put(28.50,81.50){\makebox(0,0)[b]{1}}

\path(19.00,114.00)(38.00,114.00)(38.00,95.00)(19.00,95.00)(19.00,114.00)
\put(28.50,100.50){\makebox(0,0)[b]{3}}

\path(19.00,152.00)(38.00,152.00)(38.00,133.00)(19.00,133.00)(19.00,152.00)
\put(28.50,138.50){\makebox(0,0)[b]{3}}

\path(19.00,171.00)(38.00,171.00)(38.00,152.00)(19.00,152.00)(19.00,171.00)
\put(28.50,157.50){\makebox(0,0)[b]{1}}

\path(19.00,190.00)(38.00,190.00)(38.00,171.00)(19.00,171.00)(19.00,190.00)
\put(28.50,176.50){\makebox(0,0)[b]{1}}

\path(19.00,209.00)(38.00,209.00)(38.00,190.00)(19.00,190.00)(19.00,209.00)
\put(28.50,195.50){\makebox(0,0)[b]{1}}

\path(19.00,228.00)(38.00,228.00)(38.00,209.00)(19.00,209.00)(19.00,228.00)
\put(28.50,214.50){\makebox(0,0)[b]{1}}

\path(38.00,247.00)(38.00,228.00)(19.00,228.00)(19.00,247.00)
\put(28.50,237.50){\makebox(0,0){$\cdot$}}
\put(28.50,243.83){\makebox(0,0){$\cdot$}}
\put(28.50,250.17){\makebox(0,0){$\cdot$}}

\path(38.00,0.00)(38.00,19.00)(57.00,19.00)(57.00,0.00)
\put(47.50,9.50){\makebox(0,0){$\cdot$}}
\put(47.50,3.17){\makebox(0,0){$\cdot$}}
\put(47.50,-3.17){\makebox(0,0){$\cdot$}}

\shade
\path(38.00,38.00)(57.00,38.00)(57.00,19.00)(38.00,19.00)(38.00,38.00)
\put(47.50,24.50){\makebox(0,0)[b]{$z'$}}

\shade
\path(38.00,57.00)(57.00,57.00)(57.00,38.00)(38.00,38.00)(38.00,57.00)
\put(47.50,43.50){\makebox(0,0)[b]{$z$}}

\path(38.00,76.00)(57.00,76.00)(57.00,57.00)(38.00,57.00)(38.00,76.00)
\put(47.50,62.50){\makebox(0,0)[b]{2}}

\shade
\path(38.00,95.00)(57.00,95.00)(57.00,76.00)(38.00,76.00)(38.00,95.00)
\put(47.50,81.50){\makebox(0,0)[b]{$z'$}}

\shade
\path(38.00,114.00)(57.00,114.00)(57.00,95.00)(38.00,95.00)(38.00,114.00)
\put(47.50,100.50){\makebox(0,0)[b]{$z$}}

\shade
\path(38.00,152.00)(57.00,152.00)(57.00,133.00)(38.00,133.00)(38.00,152.00)
\put(47.50,138.50){\makebox(0,0)[b]{$x$}}

\shade
\path(38.00,171.00)(57.00,171.00)(57.00,152.00)(38.00,152.00)(38.00,171.00)
\put(47.50,157.50){\makebox(0,0)[b]{$x'$}}

\path(38.00,190.00)(57.00,190.00)(57.00,171.00)(38.00,171.00)(38.00,190.00)
\put(47.50,176.50){\makebox(0,0)[b]{2}}

\shade
\path(38.00,209.00)(57.00,209.00)(57.00,190.00)(38.00,190.00)(38.00,209.00)
\put(47.50,195.50){\makebox(0,0)[b]{$x$}}

\shade
\path(38.00,228.00)(57.00,228.00)(57.00,209.00)(38.00,209.00)(38.00,228.00)
\put(47.50,214.50){\makebox(0,0)[b]{$x'$}}

\path(57.00,247.00)(57.00,228.00)(38.00,228.00)(38.00,247.00)
\put(47.50,237.50){\makebox(0,0){$\cdot$}}
\put(47.50,243.83){\makebox(0,0){$\cdot$}}
\put(47.50,250.17){\makebox(0,0){$\cdot$}}

\path(57.00,0.00)(57.00,19.00)(76.00,19.00)(76.00,0.00)
\put(66.50,9.50){\makebox(0,0){$\cdot$}}
\put(66.50,3.17){\makebox(0,0){$\cdot$}}
\put(66.50,-3.17){\makebox(0,0){$\cdot$}}

\path(57.00,38.00)(76.00,38.00)(76.00,19.00)(57.00,19.00)(57.00,38.00)
\put(66.50,24.50){\makebox(0,0)[b]{1}}

\path(57.00,57.00)(76.00,57.00)(76.00,38.00)(57.00,38.00)(57.00,57.00)
\put(66.50,43.50){\makebox(0,0)[b]{1}}

\path(57.00,76.00)(76.00,76.00)(76.00,57.00)(57.00,57.00)(57.00,76.00)
\put(66.50,62.50){\makebox(0,0)[b]{2}}

\path(57.00,190.00)(76.00,190.00)(76.00,171.00)(57.00,171.00)(57.00,190.00)
\put(66.50,176.50){\makebox(0,0)[b]{2}}

\path(57.00,209.00)(76.00,209.00)(76.00,190.00)(57.00,190.00)(57.00,209.00)
\put(66.50,195.50){\makebox(0,0)[b]{1}}

\path(57.00,228.00)(76.00,228.00)(76.00,209.00)(57.00,209.00)(57.00,228.00)
\put(66.50,214.50){\makebox(0,0)[b]{1}}

\path(76.00,247.00)(76.00,228.00)(57.00,228.00)(57.00,247.00)
\put(66.50,237.50){\makebox(0,0){$\cdot$}}
\put(66.50,243.83){\makebox(0,0){$\cdot$}}
\put(66.50,250.17){\makebox(0,0){$\cdot$}}

\path(76.00,76.00)(95.00,76.00)(95.00,57.00)(76.00,57.00)(76.00,76.00)
\put(85.50,62.50){\makebox(0,0)[b]{1}}

\path(76.00,95.00)(95.00,95.00)(95.00,76.00)(76.00,76.00)(76.00,95.00)
\put(85.50,81.50){\makebox(0,0)[b]{3}}

\shade
\path(76.00,114.00)(95.00,114.00)(95.00,95.00)(76.00,95.00)(76.00,114.00)
\put(85.50,100.50){\makebox(0,0)[b]{$v'$}}

\path(76.00,133.00)(95.00,133.00)(95.00,114.00)(76.00,114.00)(76.00,133.00)
\put(85.50,119.50){\makebox(0,0)[b]{4}}

\shade
\path(76.00,152.00)(95.00,152.00)(95.00,133.00)(76.00,133.00)(76.00,152.00)
\put(85.50,138.50){\makebox(0,0)[b]{$a'$}}

\path(76.00,171.00)(95.00,171.00)(95.00,152.00)(76.00,152.00)(76.00,171.00)
\put(85.50,157.50){\makebox(0,0)[b]{3}}

\path(76.00,190.00)(95.00,190.00)(95.00,171.00)(76.00,171.00)(76.00,190.00)
\put(85.50,176.50){\makebox(0,0)[b]{1}}

\path(95.00,95.00)(114.00,95.00)(114.00,76.00)(95.00,76.00)(95.00,95.00)
\put(104.50,81.50){\makebox(0,0)[b]{1}}

\shade
\path(95.00,114.00)(114.00,114.00)(114.00,95.00)(95.00,95.00)(95.00,114.00)
\put(104.50,100.50){\makebox(0,0)[b]{$v$}}

\path(95.00,133.00)(114.00,133.00)(114.00,114.00)(95.00,114.00)(95.00,133.00)
\put(104.50,119.50){\makebox(0,0)[b]{2}}

\shade
\path(95.00,152.00)(114.00,152.00)(114.00,133.00)(95.00,133.00)(95.00,152.00)
\put(104.50,138.50){\makebox(0,0)[b]{$a$}}

\path(95.00,171.00)(114.00,171.00)(114.00,152.00)(95.00,152.00)(95.00,171.00)
\put(104.50,157.50){\makebox(0,0)[b]{1}}

\path(114.00,95.00)(133.00,95.00)(133.00,76.00)(114.00,76.00)(114.00,95.00)
\put(123.50,81.50){\makebox(0,0)[b]{1}}

\path(114.00,114.00)(133.00,114.00)(133.00,95.00)(114.00,95.00)(114.00,114.00)
\put(123.50,100.50){\makebox(0,0)[b]{1}}

\path(114.00,133.00)(133.00,133.00)(133.00,114.00)(114.00,114.00)(114.00,133.00)
\put(123.50,119.50){\makebox(0,0)[b]{2}}

\path(114.00,152.00)(133.00,152.00)(133.00,133.00)(114.00,133.00)(114.00,152.00)
\put(123.50,138.50){\makebox(0,0)[b]{1}}

\path(114.00,171.00)(133.00,171.00)(133.00,152.00)(114.00,152.00)(114.00,171.00)
\put(123.50,157.50){\makebox(0,0)[b]{1}}

\path(133.00,95.00)(152.00,95.00)(152.00,76.00)(133.00,76.00)(133.00,95.00)
\put(142.50,81.50){\makebox(0,0)[b]{1}}

\shade
\path(133.00,114.00)(152.00,114.00)(152.00,95.00)(133.00,95.00)(133.00,114.00)
\put(142.50,100.50){\makebox(0,0)[b]{$v'$}}

\path(133.00,133.00)(152.00,133.00)(152.00,114.00)(133.00,114.00)(133.00,133.00)
\put(142.50,119.50){\makebox(0,0)[b]{2}}

\shade
\path(133.00,152.00)(152.00,152.00)(152.00,133.00)(133.00,133.00)(133.00,152.00)
\put(142.50,138.50){\makebox(0,0)[b]{$a'$}}

\path(133.00,171.00)(152.00,171.00)(152.00,152.00)(133.00,152.00)(133.00,171.00)
\put(142.50,157.50){\makebox(0,0)[b]{1}}

\path(152.00,95.00)(171.00,95.00)(171.00,76.00)(152.00,76.00)(152.00,95.00)
\put(161.50,81.50){\makebox(0,0)[b]{1}}

\shade
\path(152.00,114.00)(171.00,114.00)(171.00,95.00)(152.00,95.00)(152.00,114.00)
\put(161.50,100.50){\makebox(0,0)[b]{$v$}}

\path(152.00,133.00)(171.00,133.00)(171.00,114.00)(152.00,114.00)(152.00,133.00)
\put(161.50,119.50){\makebox(0,0)[b]{4}}

\shade
\path(152.00,152.00)(171.00,152.00)(171.00,133.00)(152.00,133.00)(152.00,152.00)
\put(161.50,138.50){\makebox(0,0)[b]{$a$}}

\path(152.00,171.00)(171.00,171.00)(171.00,152.00)(152.00,152.00)(152.00,171.00)
\put(161.50,157.50){\makebox(0,0)[b]{2}}

\path(171.00,95.00)(190.00,95.00)(190.00,76.00)(171.00,76.00)(171.00,95.00)
\put(180.50,81.50){\makebox(0,0)[b]{1}}

\path(171.00,114.00)(190.00,114.00)(190.00,95.00)(171.00,95.00)(171.00,114.00)
\put(180.50,100.50){\makebox(0,0)[b]{2}}

\path(171.00,171.00)(190.00,171.00)(190.00,152.00)(171.00,152.00)(171.00,171.00)
\put(180.50,157.50){\makebox(0,0)[b]{4}}

\path(171.00,190.00)(190.00,190.00)(190.00,171.00)(171.00,171.00)(171.00,190.00)
\put(180.50,176.50){\makebox(0,0)[b]{2}}

\path(171.00,209.00)(190.00,209.00)(190.00,190.00)(171.00,190.00)(171.00,209.00)
\put(180.50,195.50){\makebox(0,0)[b]{1}}

\path(171.00,228.00)(190.00,228.00)(190.00,209.00)(171.00,209.00)(171.00,228.00)
\put(180.50,214.50){\makebox(0,0)[b]{1}}

\path(190.00,247.00)(190.00,228.00)(171.00,228.00)(171.00,247.00)
\put(180.50,237.50){\makebox(0,0){$\cdot$}}
\put(180.50,243.83){\makebox(0,0){$\cdot$}}
\put(180.50,250.17){\makebox(0,0){$\cdot$}}

\path(190.00,95.00)(209.00,95.00)(209.00,76.00)(190.00,76.00)(190.00,95.00)
\put(199.50,81.50){\makebox(0,0)[b]{1}}

\shade
\path(190.00,114.00)(209.00,114.00)(209.00,95.00)(190.00,95.00)(190.00,114.00)
\put(199.50,100.50){\makebox(0,0)[b]{$v'$}}

\path(190.00,133.00)(209.00,133.00)(209.00,114.00)(190.00,114.00)(190.00,133.00)
\put(199.50,119.50){\makebox(0,0)[b]{6}}

\shade
\path(190.00,190.00)(209.00,190.00)(209.00,171.00)(190.00,171.00)(190.00,190.00)
\put(199.50,176.50){\makebox(0,0)[b]{$a'$}}

\shade
\path(190.00,209.00)(209.00,209.00)(209.00,190.00)(190.00,190.00)(190.00,209.00)
\put(199.50,195.50){\makebox(0,0)[b]{$a$}}

\path(190.00,228.00)(209.00,228.00)(209.00,209.00)(190.00,209.00)(190.00,228.00)
\put(199.50,214.50){\makebox(0,0)[b]{1}}

\path(209.00,247.00)(209.00,228.00)(190.00,228.00)(190.00,247.00)
\put(199.50,237.50){\makebox(0,0){$\cdot$}}
\put(199.50,243.83){\makebox(0,0){$\cdot$}}
\put(199.50,250.17){\makebox(0,0){$\cdot$}}

\path(209.00,19.00)(209.00,38.00)(228.00,38.00)(228.00,19.00)
\put(218.50,28.50){\makebox(0,0){$\cdot$}}
\put(218.50,22.17){\makebox(0,0){$\cdot$}}
\put(218.50,15.83){\makebox(0,0){$\cdot$}}

\path(209.00,57.00)(228.00,57.00)(228.00,38.00)(209.00,38.00)(209.00,57.00)
\put(218.50,43.50){\makebox(0,0)[b]{1}}

\path(209.00,76.00)(228.00,76.00)(228.00,57.00)(209.00,57.00)(209.00,76.00)
\put(218.50,62.50){\makebox(0,0)[b]{1}}

\path(209.00,95.00)(228.00,95.00)(228.00,76.00)(209.00,76.00)(209.00,95.00)
\put(218.50,81.50){\makebox(0,0)[b]{2}}

\path(209.00,171.00)(228.00,171.00)(228.00,152.00)(209.00,152.00)(209.00,171.00)
\put(218.50,157.50){\makebox(0,0)[b]{5}}

\path(209.00,190.00)(228.00,190.00)(228.00,171.00)(209.00,171.00)(209.00,190.00)
\put(218.50,176.50){\makebox(0,0)[b]{3}}

\path(209.00,209.00)(228.00,209.00)(228.00,190.00)(209.00,190.00)(209.00,209.00)
\put(218.50,195.50){\makebox(0,0)[b]{1}}

\path(209.00,228.00)(228.00,228.00)(228.00,209.00)(209.00,209.00)(209.00,228.00)
\put(218.50,214.50){\makebox(0,0)[b]{1}}

\path(228.00,247.00)(228.00,228.00)(209.00,228.00)(209.00,247.00)
\put(218.50,237.50){\makebox(0,0){$\cdot$}}
\put(218.50,243.83){\makebox(0,0){$\cdot$}}
\put(218.50,250.17){\makebox(0,0){$\cdot$}}

\path(228.00,19.00)(228.00,38.00)(247.00,38.00)(247.00,19.00)
\put(237.50,28.50){\makebox(0,0){$\cdot$}}
\put(237.50,22.17){\makebox(0,0){$\cdot$}}
\put(237.50,15.83){\makebox(0,0){$\cdot$}}

\path(228.00,57.00)(247.00,57.00)(247.00,38.00)(228.00,38.00)(228.00,57.00)
\put(237.50,43.50){\makebox(0,0)[b]{1}}

\shade
\path(228.00,76.00)(247.00,76.00)(247.00,57.00)(228.00,57.00)(228.00,76.00)
\put(237.50,62.50){\makebox(0,0)[b]{$c$}}

\shade
\path(228.00,152.00)(247.00,152.00)(247.00,133.00)(228.00,133.00)(228.00,152.00)
\put(237.50,138.50){\makebox(0,0)[b]{$a$}}

\path(228.00,190.00)(247.00,190.00)(247.00,171.00)(228.00,171.00)(228.00,190.00)
\put(237.50,176.50){\makebox(0,0)[b]{2}}

\path(247.00,19.00)(247.00,38.00)(266.00,38.00)(266.00,19.00)
\put(256.50,28.50){\makebox(0,0){$\cdot$}}
\put(256.50,22.17){\makebox(0,0){$\cdot$}}
\put(256.50,15.83){\makebox(0,0){$\cdot$}}

\path(247.00,57.00)(266.00,57.00)(266.00,38.00)(247.00,38.00)(247.00,57.00)
\put(256.50,43.50){\makebox(0,0)[b]{1}}

\path(247.00,76.00)(266.00,76.00)(266.00,57.00)(247.00,57.00)(247.00,76.00)
\put(256.50,62.50){\makebox(0,0)[b]{1}}

\path(247.00,95.00)(266.00,95.00)(266.00,76.00)(247.00,76.00)(247.00,95.00)
\put(256.50,81.50){\makebox(0,0)[b]{2}}

\path(247.00,133.00)(266.00,133.00)(266.00,114.00)(247.00,114.00)(247.00,133.00)
\put(256.50,119.50){\makebox(0,0)[b]{2}}

\path(247.00,152.00)(266.00,152.00)(266.00,133.00)(247.00,133.00)(247.00,152.00)
\put(256.50,138.50){\makebox(0,0)[b]{3}}

\path(247.00,190.00)(266.00,190.00)(266.00,171.00)(247.00,171.00)(247.00,190.00)
\put(256.50,176.50){\makebox(0,0)[b]{2}}

\path(266.00,95.00)(285.00,95.00)(285.00,76.00)(266.00,76.00)(266.00,95.00)
\put(275.50,81.50){\makebox(0,0)[b]{1}}

\path(266.00,114.00)(285.00,114.00)(285.00,95.00)(266.00,95.00)(266.00,114.00)
\put(275.50,100.50){\makebox(0,0)[b]{1}}

\path(266.00,133.00)(285.00,133.00)(285.00,114.00)(266.00,114.00)(266.00,133.00)
\put(275.50,119.50){\makebox(0,0)[b]{1}}

\path(266.00,152.00)(285.00,152.00)(285.00,133.00)(266.00,133.00)(266.00,152.00)
\put(275.50,138.50){\makebox(0,0)[b]{1}}

\path(266.00,171.00)(285.00,171.00)(285.00,152.00)(266.00,152.00)(266.00,171.00)
\put(275.50,157.50){\makebox(0,0)[b]{1}}

\path(266.00,190.00)(285.00,190.00)(285.00,171.00)(266.00,171.00)(266.00,190.00)
\put(275.50,176.50){\makebox(0,0)[b]{1}}

\texture{c0c0c0c0 0 0 0 0c0c0c0c 0 0 0
          c0c0c0c0 0 0 0 0c0c0c0c 0 0 0
          c0c0c0c0 0 0 0 0c0c0c0c 0 0 0
          c0c0c0c0 0 0 0 0c0c0c0c 0 0 0}

\shade
\path(57.00,133.00)(76.00,133.00)(76.00,114.00)(57.00,114.00)(57.00,133.00)
\put(66.50,119.50){\makebox(0,0)[b]{$5$}}

\shade
\path(228.00,114.00)(247.00,114.00)(247.00,95.00)(228.00,95.00)(228.00,114.00)
\put(237.50,100.50){\makebox(0,0)[b]{$4$}}

\texture{cccccccc 0 0 0 cccccccc 0 0 0
         cccccccc 0 0 0 cccccccc 0 0 0
         cccccccc 0 0 0 cccccccc 0 0 0
         cccccccc 0 0 0 cccccccc 0 0 0}
\whiten

\path(19.00,133.00)(38.00,133.00)(38.00,114.00)(19.00,114.00)(19.00,133.00)
\put(28.50,123.50){\makebox(0,0){\circle*{9.50}}}

\path(38.00,133.00)(57.00,133.00)(57.00,114.00)(38.00,114.00)(38.00,133.00)
\put(47.50,123.50){\makebox(0,0){\circle*{9.50}}}

\path(57.00,95.00)(76.00,95.00)(76.00,76.00)(57.00,76.00)(57.00,95.00)
\put(66.50,85.50){\makebox(0,0){\circle*{9.50}}}

\path(57.00,114.00)(76.00,114.00)(76.00,95.00)(57.00,95.00)(57.00,114.00)
\put(66.50,104.50){\makebox(0,0){\circle*{9.50}}}

\path(57.00,152.00)(76.00,152.00)(76.00,133.00)(57.00,133.00)(57.00,152.00)
\put(66.50,142.50){\makebox(0,0){\circle*{9.50}}}

\path(57.00,171.00)(76.00,171.00)(76.00,152.00)(57.00,152.00)(57.00,171.00)
\put(66.50,161.50){\makebox(0,0){\circle*{9.50}}}

\path(171.00,133.00)(190.00,133.00)(190.00,114.00)(171.00,114.00)(171.00,133.00)
\put(180.50,123.50){\makebox(0,0){\circle*{9.50}}}

\path(171.00,152.00)(190.00,152.00)(190.00,133.00)(171.00,133.00)(171.00,152.00)
\put(180.50,142.50){\makebox(0,0){\circle*{9.50}}}

\path(190.00,152.00)(209.00,152.00)(209.00,133.00)(190.00,133.00)(190.00,152.00)
\put(199.50,142.50){\makebox(0,0){\circle*{9.50}}}

\path(190.00,171.00)(209.00,171.00)(209.00,152.00)(190.00,152.00)(190.00,171.00)
\put(199.50,161.50){\makebox(0,0){\circle*{9.50}}}

\path(209.00,133.00)(228.00,133.00)(228.00,114.00)(209.00,114.00)(209.00,133.00)
\put(218.50,123.50){\makebox(0,0){\circle*{9.50}}}

\path(209.00,152.00)(228.00,152.00)(228.00,133.00)(209.00,133.00)(209.00,152.00)
\put(218.50,142.50){\makebox(0,0){\circle*{9.50}}}

\path(228.00,171.00)(247.00,171.00)(247.00,152.00)(228.00,152.00)(228.00,171.00)
\put(237.50,161.50){\makebox(0,0){\circle*{9.50}}}

\path(247.00,114.00)(266.00,114.00)(266.00,95.00)(247.00,95.00)(247.00,114.00)
\put(256.50,104.50){\makebox(0,0){\circle*{9.50}}}

\path(247.00,171.00)(266.00,171.00)(266.00,152.00)(247.00,152.00)(247.00,171.00)
\put(256.50,161.50){\makebox(0,0){\circle*{9.50}}}

\texture{40004000 0 0 0 00400040 0 0 0
          40004000 0 0 0 00400040 0 0 0
          40004000 0 0 0 00400040 0 0 0
          40004000 0 0 0 00400040 0 0 0}

\Thicklines
\shade
\path(209.00,114.00)(228.00,114.00)(228.00,95.00)(209.00,95.00)(209.00,114.00)
\put(218.50,100.50){\makebox(0,0)[b]{$v$}}
\thinlines

\Thicklines
\shade
\path(228.00,95.00)(247.00,95.00)(247.00,76.00)(228.00,76.00)(228.00,95.00)
\put(237.50,81.50){\makebox(0,0)[b]{$c'$}}
\thinlines

\Thicklines
\shade
\path(228.00,133.00)(247.00,133.00)(247.00,114.00)(228.00,114.00)(228.00,133.00)
\put(237.50,119.50){\makebox(0,0)[b]{$a'$}}
\thinlines

\put(9.5,36){\makebox(0,0)[t]{$z$}}
\put(9.5,38){\vector(0,1){30}}
\put(9.5,211){\makebox(0,0)[b]{$x$}}
\put(9.5,209){\vector(0,-1){30}}
\dashline{2}(95,0)(95,57)
\dashline{2}(95,190)(95,247)
\dashline{2}(152,0)(152,76)
\dashline{2}(152,171)(152,247)
\put(237.5,207){\makebox(0,0)[t]{$a$}}
\put(237.5,209){\vector(0,1){30}}
\put(275.5,59){\makebox(0,0)[b]{$c$}}
\put(275.5,57){\vector(0,-1){30}}
\mathversion{normal}

\end{picture} \\[10pt]
\begin{picture}(272.62,236.50)(-6.62,0.00)

\mathversion{bold}
\filltype{shade}
\texture{40004000 0 0 0 00400040 0 0 0
         40004000 0 0 0 00400040 0 0 0
         40004000 0 0 0 00400040 0 0 0
         40004000 0 0 0 00400040 0 0 0}
\whiten

\path(9.50,49.50)(9.50,38.50)(19.00,33.00)(28.50,38.50)
     (28.50,49.50)(19.00,55.00)(9.50,49.50)
\put(19.00,40.00){\makebox(0,0)[b]{1}}

\path(9.50,16.50)(19.00,22.00)(19.00,33.00)
     (9.50,38.50)(0.00,33.00)(0.00,22.00)
\put(9.50,27.50){\makebox(0,0){$\cdot$}}
\put(6.33,22.00){\makebox(0,0){$\cdot$}}
\put(3.17,16.50){\makebox(0,0){$\cdot$}}

\path(0.00,165.00)(0.00,154.00)(9.50,148.50)
     (19.00,154.00)(19.00,165.00)(9.50,170.50)
\put(9.50,159.50){\makebox(0,0){$\cdot$}}
\put(6.33,165.00){\makebox(0,0){$\cdot$}}
\put(3.17,170.50){\makebox(0,0){$\cdot$}}

\path(19.00,0.00)(28.50,5.50)(28.50,16.50)
     (19.00,22.00)(9.50,16.50)(9.50,5.50)
\put(19.00,11.00){\makebox(0,0){$\cdot$}}
\put(15.83,5.50){\makebox(0,0){$\cdot$}}
\put(12.67,0.00){\makebox(0,0){$\cdot$}}

\path(9.50,148.50)(9.50,137.50)(19.00,132.00)(28.50,137.50)
     (28.50,148.50)(19.00,154.00)(9.50,148.50)
\put(19.00,139.00){\makebox(0,0)[b]{1}}

\path(9.50,181.50)(9.50,170.50)(19.00,165.00)
     (28.50,170.50)(28.50,181.50)(19.00,187.00)
\put(19.00,176.00){\makebox(0,0){$\cdot$}}
\put(15.83,181.50){\makebox(0,0){$\cdot$}}
\put(12.67,187.00){\makebox(0,0){$\cdot$}}

\shade
\path(19.00,33.00)(19.00,22.00)(28.50,16.50)(38.00,22.00)
     (38.00,33.00)(28.50,38.50)(19.00,33.00)
\put(28.50,23.50){\makebox(0,0)[b]{$z'$}}

\path(19.00,66.00)(19.00,55.00)(28.50,49.50)(38.00,55.00)
     (38.00,66.00)(28.50,71.50)(19.00,66.00)
\put(28.50,56.50){\makebox(0,0)[b]{1}}

\path(19.00,99.00)(19.00,88.00)(28.50,82.50)(38.00,88.00)
     (38.00,99.00)(28.50,104.50)(19.00,99.00)
\put(28.50,89.50){\makebox(0,0)[b]{1}}

\path(19.00,132.00)(19.00,121.00)(28.50,115.50)(38.00,121.00)
     (38.00,132.00)(28.50,137.50)(19.00,132.00)
\put(28.50,122.50){\makebox(0,0)[b]{1}}

\shade
\path(19.00,165.00)(19.00,154.00)(28.50,148.50)(38.00,154.00)
     (38.00,165.00)(28.50,170.50)(19.00,165.00)
\put(28.50,155.50){\makebox(0,0)[b]{$x'$}}

\path(38.00,0.00)(47.50,5.50)(47.50,16.50)
     (38.00,22.00)(28.50,16.50)(28.50,5.50)
\put(38.00,11.00){\makebox(0,0){$\cdot$}}
\put(34.83,5.50){\makebox(0,0){$\cdot$}}
\put(31.67,0.00){\makebox(0,0){$\cdot$}}

\shade
\path(28.50,49.50)(28.50,38.50)(38.00,33.00)(47.50,38.50)
     (47.50,49.50)(38.00,55.00)(28.50,49.50)
\put(38.00,40.00){\makebox(0,0)[b]{$z$}}

\path(28.50,82.50)(28.50,71.50)(38.00,66.00)(47.50,71.50)
     (47.50,82.50)(38.00,88.00)(28.50,82.50)
\put(38.00,73.00){\makebox(0,0)[b]{2}}

\path(28.50,115.50)(28.50,104.50)(38.00,99.00)(47.50,104.50)
     (47.50,115.50)(38.00,121.00)(28.50,115.50)
\put(38.00,106.00){\makebox(0,0)[b]{2}}

\shade
\path(28.50,148.50)(28.50,137.50)(38.00,132.00)(47.50,137.50)
     (47.50,148.50)(38.00,154.00)(28.50,148.50)
\put(38.00,139.00){\makebox(0,0)[b]{$x$}}

\path(28.50,181.50)(28.50,170.50)(38.00,165.00)
     (47.50,170.50)(47.50,181.50)(38.00,187.00)
\put(38.00,176.00){\makebox(0,0){$\cdot$}}
\put(34.83,181.50){\makebox(0,0){$\cdot$}}
\put(31.67,187.00){\makebox(0,0){$\cdot$}}

\path(38.00,33.00)(38.00,22.00)(47.50,16.50)(57.00,22.00)
     (57.00,33.00)(47.50,38.50)(38.00,33.00)
\put(47.50,23.50){\makebox(0,0)[b]{1}}

\shade
\path(38.00,66.00)(38.00,55.00)(47.50,49.50)(57.00,55.00)
     (57.00,66.00)(47.50,71.50)(38.00,66.00)
\put(47.50,56.50){\makebox(0,0)[b]{$z'$}}

\shade
\path(38.00,132.00)(38.00,121.00)(47.50,115.50)(57.00,121.00)
     (57.00,132.00)(47.50,137.50)(38.00,132.00)
\put(47.50,122.50){\makebox(0,0)[b]{$x'$}}

\path(38.00,165.00)(38.00,154.00)(47.50,148.50)(57.00,154.00)
     (57.00,165.00)(47.50,170.50)(38.00,165.00)
\put(47.50,155.50){\makebox(0,0)[b]{1}}

\path(47.50,49.50)(47.50,38.50)(57.00,33.00)(66.50,38.50)
     (66.50,49.50)(57.00,55.00)(47.50,49.50)
\put(57.00,40.00){\makebox(0,0)[b]{1}}

\shade
\path(47.50,82.50)(47.50,71.50)(57.00,66.00)(66.50,71.50)
     (66.50,82.50)(57.00,88.00)(47.50,82.50)
\put(57.00,73.00){\makebox(0,0)[b]{$z$}}

\path(47.50,148.50)(47.50,137.50)(57.00,132.00)(66.50,137.50)
     (66.50,148.50)(57.00,154.00)(47.50,148.50)
\put(57.00,139.00){\makebox(0,0)[b]{2}}

\path(57.00,66.00)(57.00,55.00)(66.50,49.50)(76.00,55.00)
     (76.00,66.00)(66.50,71.50)(57.00,66.00)
\put(66.50,56.50){\makebox(0,0)[b]{2}}

\path(66.50,148.50)(66.50,137.50)(76.00,132.00)(85.50,137.50)
     (85.50,148.50)(76.00,154.00)(66.50,148.50)
\put(76.00,139.00){\makebox(0,0)[b]{1}}

\path(76.00,66.00)(76.00,55.00)(85.50,49.50)(95.00,55.00)
     (95.00,66.00)(85.50,71.50)(76.00,66.00)
\put(85.50,56.50){\makebox(0,0)[b]{1}}

\shade
\path(76.00,99.00)(76.00,88.00)(85.50,82.50)(95.00,88.00)
     (95.00,99.00)(85.50,104.50)(76.00,99.00)
\put(85.50,89.50){\makebox(0,0)[b]{$v'$}}

\path(76.00,132.00)(76.00,121.00)(85.50,115.50)(95.00,121.00)
     (95.00,132.00)(85.50,137.50)(76.00,132.00)
\put(85.50,122.50){\makebox(0,0)[b]{2}}

\path(76.00,165.00)(76.00,154.00)(85.50,148.50)(95.00,154.00)
     (95.00,165.00)(85.50,170.50)(76.00,165.00)
\put(85.50,155.50){\makebox(0,0)[b]{1}}

\path(85.50,82.50)(85.50,71.50)(95.00,66.00)(104.50,71.50)
     (104.50,82.50)(95.00,88.00)(85.50,82.50)
\put(95.00,73.00){\makebox(0,0)[b]{2}}

\shade
\path(85.50,115.50)(85.50,104.50)(95.00,99.00)(104.50,104.50)
     (104.50,115.50)(95.00,121.00)(85.50,115.50)
\put(95.00,106.00){\makebox(0,0)[b]{$a$}}

\path(85.50,148.50)(85.50,137.50)(95.00,132.00)(104.50,137.50)
     (104.50,148.50)(95.00,154.00)(85.50,148.50)
\put(95.00,139.00){\makebox(0,0)[b]{1}}

\path(85.50,181.50)(85.50,170.50)(95.00,165.00)(104.50,170.50)
     (104.50,181.50)(95.00,187.00)(85.50,181.50)
\put(95.00,172.00){\makebox(0,0)[b]{1}}

\path(95.00,132.00)(95.00,121.00)(104.50,115.50)(114.00,121.00)
     (114.00,132.00)(104.50,137.50)(95.00,132.00)
\put(104.50,122.50){\makebox(0,0)[b]{2}}

\path(104.50,82.50)(104.50,71.50)(114.00,66.00)(123.50,71.50)
     (123.50,82.50)(114.00,88.00)(104.50,82.50)
\put(114.00,73.00){\makebox(0,0)[b]{1}}

\path(104.50,148.50)(104.50,137.50)(114.00,132.00)(123.50,137.50)
     (123.50,148.50)(114.00,154.00)(104.50,148.50)
\put(114.00,139.00){\makebox(0,0)[b]{2}}

\path(104.50,181.50)(104.50,170.50)(114.00,165.00)(123.50,170.50)
     (123.50,181.50)(114.00,187.00)(104.50,181.50)
\put(114.00,172.00){\makebox(0,0)[b]{2}}

\shade
\path(114.00,99.00)(114.00,88.00)(123.50,82.50)(133.00,88.00)
     (133.00,99.00)(123.50,104.50)(114.00,99.00)
\put(123.50,89.50){\makebox(0,0)[b]{$v'$}}

\shade
\path(114.00,132.00)(114.00,121.00)(123.50,115.50)(133.00,121.00)
     (133.00,132.00)(123.50,137.50)(114.00,132.00)
\put(123.50,122.50){\makebox(0,0)[b]{$a'$}}

\shade
\path(114.00,165.00)(114.00,154.00)(123.50,148.50)(133.00,154.00)
     (133.00,165.00)(123.50,170.50)(114.00,165.00)
\put(123.50,155.50){\makebox(0,0)[b]{$a$}}

\path(114.00,198.00)(114.00,187.00)(123.50,181.50)(133.00,187.00)
     (133.00,198.00)(123.50,203.50)(114.00,198.00)
\put(123.50,188.50){\makebox(0,0)[b]{1}}

\path(123.50,82.50)(123.50,71.50)(133.00,66.00)(142.50,71.50)
     (142.50,82.50)(133.00,88.00)(123.50,82.50)
\put(133.00,73.00){\makebox(0,0)[b]{1}}

\path(123.50,148.50)(123.50,137.50)(133.00,132.00)(142.50,137.50)
     (142.50,148.50)(133.00,154.00)(123.50,148.50)
\put(133.00,139.00){\makebox(0,0)[b]{1}}

\shade
\path(123.50,181.50)(123.50,170.50)(133.00,165.00)(142.50,170.50)
     (142.50,181.50)(133.00,187.00)(123.50,181.50)
\put(133.00,172.00){\makebox(0,0)[b]{$a'$}}

\path(123.50,214.50)(123.50,203.50)(133.00,198.00)(142.50,203.50)
     (142.50,214.50)(133.00,220.00)(123.50,214.50)
\put(133.00,205.00){\makebox(0,0)[b]{1}}

\path(133.00,132.00)(133.00,121.00)(142.50,115.50)(152.00,121.00)
     (152.00,132.00)(142.50,137.50)(133.00,132.00)
\put(142.50,122.50){\makebox(0,0)[b]{2}}

\path(133.00,165.00)(133.00,154.00)(142.50,148.50)(152.00,154.00)
     (152.00,165.00)(142.50,170.50)(133.00,165.00)
\put(142.50,155.50){\makebox(0,0)[b]{1}}

\shade
\path(133.00,198.00)(133.00,187.00)(142.50,181.50)(152.00,187.00)
     (152.00,198.00)(142.50,203.50)(133.00,198.00)
\put(142.50,188.50){\makebox(0,0)[b]{$a$}}

\path(142.50,236.50)(133.00,231.00)(133.00,220.00)
     (142.50,214.50)(152.00,220.00)(152.00,231.00)
\put(142.50,225.50){\makebox(0,0){$\cdot$}}
\put(145.67,231.00){\makebox(0,0){$\cdot$}}
\put(148.83,236.50){\makebox(0,0){$\cdot$}}

\path(142.50,82.50)(142.50,71.50)(152.00,66.00)(161.50,71.50)
     (161.50,82.50)(152.00,88.00)(142.50,82.50)
\put(152.00,73.00){\makebox(0,0)[b]{1}}

\shade
\path(142.50,115.50)(142.50,104.50)(152.00,99.00)(161.50,104.50)
     (161.50,115.50)(152.00,121.00)(142.50,115.50)
\put(152.00,106.00){\makebox(0,0)[b]{$a$}}

\path(142.50,181.50)(142.50,170.50)(152.00,165.00)(161.50,170.50)
     (161.50,181.50)(152.00,187.00)(142.50,181.50)
\put(152.00,172.00){\makebox(0,0)[b]{1}}

\path(152.00,220.00)(142.50,214.50)(142.50,203.50)
     (152.00,198.00)(161.50,203.50)(161.50,214.50)
\put(152.00,209.00){\makebox(0,0){$\cdot$}}
\put(155.17,214.50){\makebox(0,0){$\cdot$}}
\put(158.33,220.00){\makebox(0,0){$\cdot$}}

\shade
\path(152.00,99.00)(152.00,88.00)(161.50,82.50)(171.00,88.00)
     (171.00,99.00)(161.50,104.50)(152.00,99.00)
\put(161.50,89.50){\makebox(0,0)[b]{$v'$}}

\path(152.00,132.00)(152.00,121.00)(161.50,115.50)(171.00,121.00)
     (171.00,132.00)(161.50,137.50)(152.00,132.00)
\put(161.50,122.50){\makebox(0,0)[b]{2}}

\path(152.00,198.00)(152.00,187.00)(161.50,181.50)(171.00,187.00)
     (171.00,198.00)(161.50,203.50)(152.00,198.00)
\put(161.50,188.50){\makebox(0,0)[b]{1}}

\path(161.50,82.50)(161.50,71.50)(171.00,66.00)(180.50,71.50)
     (180.50,82.50)(171.00,88.00)(161.50,82.50)
\put(171.00,73.00){\makebox(0,0)[b]{2}}

\path(161.50,148.50)(161.50,137.50)(171.00,132.00)(180.50,137.50)
     (180.50,148.50)(171.00,154.00)(161.50,148.50)
\put(171.00,139.00){\makebox(0,0)[b]{1}}

\path(171.00,220.00)(161.50,214.50)(161.50,203.50)
     (171.00,198.00)(180.50,203.50)(180.50,214.50)
\put(171.00,209.00){\makebox(0,0){$\cdot$}}
\put(174.17,214.50){\makebox(0,0){$\cdot$}}
\put(177.33,220.00){\makebox(0,0){$\cdot$}}

\path(171.00,66.00)(171.00,55.00)(180.50,49.50)(190.00,55.00)
     (190.00,66.00)(180.50,71.50)(171.00,66.00)
\put(180.50,56.50){\makebox(0,0)[b]{1}}

\path(180.50,148.50)(180.50,137.50)(190.00,132.00)(199.50,137.50)
     (199.50,148.50)(190.00,154.00)(180.50,148.50)
\put(190.00,139.00){\makebox(0,0)[b]{1}}

\path(190.00,66.00)(190.00,55.00)(199.50,49.50)(209.00,55.00)
     (209.00,66.00)(199.50,71.50)(190.00,66.00)
\put(199.50,56.50){\makebox(0,0)[b]{2}}

\path(190.00,132.00)(190.00,121.00)(199.50,115.50)(209.00,121.00)
     (209.00,132.00)(199.50,137.50)(190.00,132.00)
\put(199.50,122.50){\makebox(0,0)[b]{2}}

\path(199.50,49.50)(199.50,38.50)(209.00,33.00)(218.50,38.50)
     (218.50,49.50)(209.00,55.00)(199.50,49.50)
\put(209.00,40.00){\makebox(0,0)[b]{1}}

\shade
\path(199.50,82.50)(199.50,71.50)(209.00,66.00)(218.50,71.50)
     (218.50,82.50)(209.00,88.00)(199.50,82.50)
\put(209.00,73.00){\makebox(0,0)[b]{$c$}}

\path(209.00,33.00)(209.00,22.00)(218.50,16.50)(228.00,22.00)
     (228.00,33.00)(218.50,38.50)(209.00,33.00)
\put(218.50,23.50){\makebox(0,0)[b]{1}}

\shade
\path(209.00,66.00)(209.00,55.00)(218.50,49.50)(228.00,55.00)
     (228.00,66.00)(218.50,71.50)(209.00,66.00)
\put(218.50,56.50){\makebox(0,0)[b]{$c'$}}

\path(209.00,99.00)(209.00,88.00)(218.50,82.50)(228.00,88.00)
     (228.00,99.00)(218.50,104.50)(209.00,99.00)
\put(218.50,89.50){\makebox(0,0)[b]{2}}

\path(209.00,132.00)(209.00,121.00)(218.50,115.50)(228.00,121.00)
     (228.00,132.00)(218.50,137.50)(209.00,132.00)
\put(218.50,122.50){\makebox(0,0)[b]{1}}

\path(237.50,5.50)(237.50,16.50)(228.00,22.00)
     (218.50,16.50)(218.50,5.50)(228.00,0.00)
\put(228.00,11.00){\makebox(0,0){$\cdot$}}
\put(231.17,5.50){\makebox(0,0){$\cdot$}}
\put(234.33,0.00){\makebox(0,0){$\cdot$}}

\shade
\path(218.50,49.50)(218.50,38.50)(228.00,33.00)(237.50,38.50)
     (237.50,49.50)(228.00,55.00)(218.50,49.50)
\put(228.00,40.00){\makebox(0,0)[b]{$c$}}

\path(218.50,82.50)(218.50,71.50)(228.00,66.00)(237.50,71.50)
     (237.50,82.50)(228.00,88.00)(218.50,82.50)
\put(228.00,73.00){\makebox(0,0)[b]{1}}

\path(218.50,115.50)(218.50,104.50)(228.00,99.00)(237.50,104.50)
     (237.50,115.50)(228.00,121.00)(218.50,115.50)
\put(228.00,106.00){\makebox(0,0)[b]{1}}

\path(247.00,22.00)(247.00,33.00)(237.50,38.50)
     (228.00,33.00)(228.00,22.00)(237.50,16.50)
\put(237.50,27.50){\makebox(0,0){$\cdot$}}
\put(240.67,22.00){\makebox(0,0){$\cdot$}}
\put(243.83,16.50){\makebox(0,0){$\cdot$}}

\path(228.00,66.00)(228.00,55.00)(237.50,49.50)(247.00,55.00)
     (247.00,66.00)(237.50,71.50)(228.00,66.00)
\put(237.50,56.50){\makebox(0,0)[b]{1}}

\path(237.50,49.50)(237.50,38.50)(247.00,33.00)(256.50,38.50)
     (256.50,49.50)(247.00,55.00)(237.50,49.50)
\put(247.00,40.00){\makebox(0,0)[b]{1}}

\path(266.00,22.00)(266.00,33.00)(256.50,38.50)
     (247.00,33.00)(247.00,22.00)(256.50,16.50)
\put(256.50,27.50){\makebox(0,0){$\cdot$}}
\put(259.67,22.00){\makebox(0,0){$\cdot$}}
\put(262.83,16.50){\makebox(0,0){$\cdot$}}

\texture{c0c0c0c0 0 0 0 0c0c0c0c 0 0 0
          c0c0c0c0 0 0 0 0c0c0c0c 0 0 0
          c0c0c0c0 0 0 0 0c0c0c0c 0 0 0
          c0c0c0c0 0 0 0 0c0c0c0c 0 0 0}

\shade
\path(57.00,99.00)(57.00,88.00)(66.50,82.50)(76.00,88.00)
     (76.00,99.00)(66.50,104.50)(57.00,99.00)
\put(66.50,89.50){\makebox(0,0)[b]{$4$}}

\shade
\path(180.50,115.50)(180.50,104.50)(190.00,99.00)(199.50,104.50)
     (199.50,115.50)(190.00,121.00)(180.50,115.50)
\put(190.00,106.00){\makebox(0,0)[b]{$4$}}

\texture{cccccccc 0 0 0 cccccccc 0 0 0
         cccccccc 0 0 0 cccccccc 0 0 0
         cccccccc 0 0 0 cccccccc 0 0 0
         cccccccc 0 0 0 cccccccc 0 0 0}
\whiten

\path(38.00,99.00)(38.00,88.00)(47.50,82.50)(57.00,88.00)
     (57.00,99.00)(47.50,104.50)(38.00,99.00)
\put(47.50,93.50){\makebox(0,0){\circle*{9.50}}}

\path(57.00,132.00)(57.00,121.00)(66.50,115.50)(76.00,121.00)
     (76.00,132.00)(66.50,137.50)(57.00,132.00)
\put(66.50,126.50){\makebox(0,0){\circle*{9.50}}}

\path(66.50,82.50)(66.50,71.50)(76.00,66.00)(85.50,71.50)
     (85.50,82.50)(76.00,88.00)(66.50,82.50)
\put(76.00,77.00){\makebox(0,0){\circle*{9.50}}}

\path(95.00,165.00)(95.00,154.00)(104.50,148.50)(114.00,154.00)
     (114.00,165.00)(104.50,170.50)(95.00,165.00)
\put(104.50,159.50){\makebox(0,0){\circle*{9.50}}}

\path(104.50,115.50)(104.50,104.50)(114.00,99.00)(123.50,104.50)
     (123.50,115.50)(114.00,121.00)(104.50,115.50)
\put(114.00,110.00){\makebox(0,0){\circle*{9.50}}}

\path(123.50,115.50)(123.50,104.50)(133.00,99.00)(142.50,104.50)
     (142.50,115.50)(133.00,121.00)(123.50,115.50)
\put(133.00,110.00){\makebox(0,0){\circle*{9.50}}}

\path(171.00,132.00)(171.00,121.00)(180.50,115.50)(190.00,121.00)
     (190.00,132.00)(180.50,137.50)(171.00,132.00)
\put(180.50,126.50){\makebox(0,0){\circle*{9.50}}}

\path(180.50,82.50)(180.50,71.50)(190.00,66.00)(199.50,71.50)
     (199.50,82.50)(190.00,88.00)(180.50,82.50)
\put(190.00,77.00){\makebox(0,0){\circle*{9.50}}}

\path(199.50,115.50)(199.50,104.50)(209.00,99.00)(218.50,104.50)
     (218.50,115.50)(209.00,121.00)(199.50,115.50)
\put(209.00,110.00){\makebox(0,0){\circle*{9.50}}}

\texture{40004000 0 0 0 00400040 0 0 0
          40004000 0 0 0 00400040 0 0 0
          40004000 0 0 0 00400040 0 0 0
          40004000 0 0 0 00400040 0 0 0}

\Thicklines
\shade
\path(47.50,115.50)(47.50,104.50)(57.00,99.00)(66.50,104.50)
     (66.50,115.50)(57.00,121.00)(47.50,115.50)
\put(57.00,106.00){\makebox(0,0)[b]{$x$}}
\thinlines

\Thicklines
\shade
\path(66.50,115.50)(66.50,104.50)(76.00,99.00)(85.50,104.50)
     (85.50,115.50)(76.00,121.00)(66.50,115.50)
\put(76.00,106.00){\makebox(0,0)[b]{$a'$}}
\thinlines

\Thicklines
\shade
\path(95.00,99.00)(95.00,88.00)(104.50,82.50)(114.00,88.00)
     (114.00,99.00)(104.50,104.50)(95.00,99.00)
\put(104.50,89.50){\makebox(0,0)[b]{$v$}}
\thinlines

\Thicklines
\shade
\path(133.00,99.00)(133.00,88.00)(142.50,82.50)(152.00,88.00)
     (152.00,99.00)(142.50,104.50)(133.00,99.00)
\put(142.50,89.50){\makebox(0,0)[b]{$v$}}
\thinlines

\Thicklines
\shade
\path(161.50,115.50)(161.50,104.50)(171.00,99.00)(180.50,104.50)
     (180.50,115.50)(171.00,121.00)(161.50,115.50)
\put(171.00,106.00){\makebox(0,0)[b]{$a'$}}
\thinlines

\Thicklines
\shade
\path(171.00,99.00)(171.00,88.00)(180.50,82.50)(190.00,88.00)
     (190.00,99.00)(180.50,104.50)(171.00,99.00)
\put(180.50,89.50){\makebox(0,0)[b]{$v$}}
\thinlines

\Thicklines
\shade
\path(190.00,99.00)(190.00,88.00)(199.50,82.50)(209.00,88.00)
     (209.00,99.00)(199.50,104.50)(190.00,99.00)
\put(199.50,89.50){\makebox(0,0)[b]{$c'$}}
\thinlines

\put(-1.62,43.5){\makebox(0,0)[rt]{$z$}}
\put(-2.62,45.5){\vector(1,2){13}}
\put(-1.62,143.5){\makebox(0,0)[rb]{$x$}}
\put(-2.62,141.5){\vector(1,-2){13}}
\put(169.86,164.75){\makebox(0,0)[rt]{$a$}}
\put(168.86,166.75){\vector(1,2){13}}
\put(245.86,88.75){\makebox(0,0)[rb]{$c$}}
\put(244.86,86.75){\vector(1,-2){13}}
\mathversion{normal}

\end{picture}
\end{center} \vspace*{-20pt}
\caption{A component that outputs $a = x \cdot z$ and $c = x + z$} 
\label{addmul}
\end{figure}

Since the adder-multiplier combi is the most complicated component by far,
it needs some explanation. Let us concentrate on the one for normal 
minesweeper first. The part between the dashed lines is optional and its
benefit will be discussed later. 

The dark shaded compartment on the left hand side with the number $4$
enforces the equation $x + z + a' + v' = 2$. Since also
$a + v + a' + v' = 2$, the sets $\{x,z\}$ and $\{a,v\}$ are 
enforced to be equal. So the dark shaded compartment on the left hand side is in 
fact the heart of a {\em shaker}\/: its outputs $a$ and $v$ are a nondeterministic
permutation of $x$ and $z$.

We show that $a = \min(x,z)$ and $v = \max(x,z)$. Suppose that this is not 
the case. Then $a = 1$ and $v = 0$. So $a' = v = 0$. It follows that
the dark shaded compartment on the right hand side with the number $4$ 
is surrounded by three mines at most. This contradicts the number $4$ in it, 
so $a = \min(x,z)$ and $v = \max(x,z)$. Thus the dark shaded compartment 
on the right hand side is the heart of a {\em tester-adder combi}\/: its 
inputs are tested and at least one of them must be equal to one.

Since $x \cdot z = \min(x,z)$, the output $a$ equals $x \cdot z$.
Next, the heart on the right hand side enforces
the equation $a' + v + c' = 2$. Since $v = 0$ implies $a' = 1$ and
$a' = 0$ implies $v = 1$, $1 \le a' + v \le 2$. Thus $c'$
can be chosen such that $a' + v + c' = 2$.

Modulo $2$, the heart on the right hand side gives the following information:
$$
0 \equiv a' + v + c' \equiv a + v + c \equiv x + z + c \pmod 2
$$
It follows that $c \equiv x + z$ modulo $2$. 

But we are not yet done now with the adder-multiplier combi for normal 
minesweeper. This is because for the square compartments which are bold,
revealing these compartments might give new information about the board
at first glance. But that only seems so. If you e.g.\@ reveal the bold
compartment with $v$, then $v = 0$ and you already know before revealing 
it that $a' = c' = 1$.

The adder-multiplier combi for hexagonal minesweeper works essentially
the same as that for normal minesweeper. Notice that the heart on the left
hand side is the heart of a somewhat buggy shaker this time, since you can 
reveal $x$ when $a = 1$ by way of the bold compartment with $a'$. But since 
$a = 1$ implies $x = 1$, the shaker subcomponent is correct within its 
context.

$v$ is not an output of the adder-multiplier combi, but one can reconstruct
$v$ by adding $a$ and $c$ with another adder-multiplier combi or just only
a modified version of the tester-adder combi. One can make a tester-adder 
combi without `fragile' literal compartments for both normal and hexagonal 
Minesweeper.

\section{Triangular Minesweeper}

\begin{figure}[!htp]
\begin{center}
\begin{picture}(158.33,157.67)(-3.17,0.00)

\filltype{shade}
\texture{40004000 0 0 0 00400040 0 0 0
         40004000 0 0 0 00400040 0 0 0
         40004000 0 0 0 00400040 0 0 0
         40004000 0 0 0 00400040 0 0 0}
\whiten

\path(19.00,82.50)(19.00,71.50)(28.50,66.00)(38.00,71.50)
     (38.00,82.50)(28.50,88.00)(19.00,82.50)
\put(28.50,73.00){\makebox(0,0)[b]{2}}

\put(9.50,77.00){\makebox(0,0){$\cdot$}}
\put(3.17,77.00){\makebox(0,0){$\cdot$}}
\put(-3.17,77.00){\makebox(0,0){$\cdot$}}

\put(19.00,60.50){\makebox(0,0){$\cdot$}}
\put(12.67,56.83){\makebox(0,0){$\cdot$}}
\put(6.33,53.17){\makebox(0,0){$\cdot$}}
\path(19.00,49.50)(28.50,55.00)
\path(19.00,71.50)(9.50,66.00)

\put(19.00,93.50){\makebox(0,0){$\cdot$}}
\put(12.67,97.17){\makebox(0,0){$\cdot$}}
\put(6.33,100.83){\makebox(0,0){$\cdot$}}
\path(19.00,82.50)(9.50,88.00)
\path(19.00,104.50)(28.50,99.00)

\put(28.50,44.00){\makebox(0,0){$\cdot$}}
\put(22.17,40.33){\makebox(0,0){$\cdot$}}
\put(15.83,36.67){\makebox(0,0){$\cdot$}}
\path(28.50,33.00)(38.00,38.50)
\path(28.50,55.00)(19.00,49.50)

\put(28.50,110.00){\makebox(0,0){$\cdot$}}
\put(22.17,113.67){\makebox(0,0){$\cdot$}}
\put(15.83,117.33){\makebox(0,0){$\cdot$}}
\path(28.50,99.00)(19.00,104.50)
\path(28.50,121.00)(38.00,115.50)

\put(38.00,27.50){\makebox(0,0){$\cdot$}}
\put(31.67,23.83){\makebox(0,0){$\cdot$}}
\put(25.33,20.17){\makebox(0,0){$\cdot$}}
\path(38.00,16.50)(47.50,22.00)
\path(38.00,38.50)(28.50,33.00)

\path(28.50,66.00)(28.50,55.00)(38.00,49.50)(47.50,55.00)
     (47.50,66.00)(38.00,71.50)(28.50,66.00)
\put(38.00,56.50){\makebox(0,0)[b]{1}}

\path(28.50,99.00)(28.50,88.00)(38.00,82.50)(47.50,88.00)
     (47.50,99.00)(38.00,104.50)(28.50,99.00)
\put(38.00,89.50){\makebox(0,0)[b]{1}}

\put(38.00,126.50){\makebox(0,0){$\cdot$}}
\put(31.67,130.17){\makebox(0,0){$\cdot$}}
\put(25.33,133.83){\makebox(0,0){$\cdot$}}
\path(38.00,115.50)(28.50,121.00)
\path(38.00,137.50)(47.50,132.00)

\put(47.50,11.00){\makebox(0,0){$\cdot$}}
\put(44.33,5.50){\makebox(0,0){$\cdot$}}
\put(41.17,0.00){\makebox(0,0){$\cdot$}}

\path(38.00,49.50)(38.00,38.50)(47.50,33.00)(57.00,38.50)
     (57.00,49.50)(47.50,55.00)(38.00,49.50)
\put(47.50,40.00){\makebox(0,0)[b]{ }}

\path(38.00,115.50)(38.00,104.50)(47.50,99.00)(57.00,104.50)
     (57.00,115.50)(47.50,121.00)(38.00,115.50)
\put(47.50,106.00){\makebox(0,0)[b]{ }}

\put(47.50,143.00){\makebox(0,0){$\cdot$}}
\put(44.33,148.50){\makebox(0,0){$\cdot$}}
\put(41.17,154.00){\makebox(0,0){$\cdot$}}

\path(47.50,33.00)(47.50,22.00)(57.00,16.50)(66.50,22.00)
     (66.50,33.00)(57.00,38.50)(47.50,33.00)
\put(57.00,23.50){\makebox(0,0)[b]{ }}

\path(47.50,66.00)(47.50,55.00)(57.00,49.50)(66.50,55.00)
     (66.50,66.00)(57.00,71.50)(47.50,66.00)
\put(57.00,56.50){\makebox(0,0)[b]{1}}

\path(47.50,99.00)(47.50,88.00)(57.00,82.50)(66.50,88.00)
     (66.50,99.00)(57.00,104.50)(47.50,99.00)
\put(57.00,89.50){\makebox(0,0)[b]{1}}

\path(47.50,132.00)(47.50,121.00)(57.00,115.50)(66.50,121.00)
     (66.50,132.00)(57.00,137.50)(47.50,132.00)
\put(57.00,122.50){\makebox(0,0)[b]{ }}

\put(66.50,11.00){\makebox(0,0){$\cdot$}}
\put(66.50,3.67){\makebox(0,0){$\cdot$}}
\put(66.50,-3.67){\makebox(0,0){$\cdot$}}
\path(57.00,5.50)(57.00,16.50)
\path(76.00,5.50)(76.00,16.50)

\path(57.00,49.50)(57.00,38.50)(66.50,33.00)(76.00,38.50)
     (76.00,49.50)(66.50,55.00)(57.00,49.50)
\put(66.50,40.00){\makebox(0,0)[b]{ }}

\path(57.00,82.50)(57.00,71.50)(66.50,66.00)(76.00,71.50)
     (76.00,82.50)(66.50,88.00)(57.00,82.50)
\put(66.50,73.00){\makebox(0,0)[b]{2}}

\path(57.00,115.50)(57.00,104.50)(66.50,99.00)(76.00,104.50)
     (76.00,115.50)(66.50,121.00)(57.00,115.50)
\put(66.50,106.00){\makebox(0,0)[b]{ }}

\put(66.50,143.00){\makebox(0,0){$\cdot$}}
\put(66.50,150.33){\makebox(0,0){$\cdot$}}
\put(66.50,157.67){\makebox(0,0){$\cdot$}}
\path(57.00,137.50)(57.00,148.50)
\path(76.00,137.50)(76.00,148.50)

\path(66.50,33.00)(66.50,22.00)(76.00,16.50)(85.50,22.00)
     (85.50,33.00)(76.00,38.50)(66.50,33.00)
\put(76.00,23.50){\makebox(0,0)[b]{ }}

\path(66.50,66.00)(66.50,55.00)(76.00,49.50)(85.50,55.00)
     (85.50,66.00)(76.00,71.50)(66.50,66.00)
\put(76.00,56.50){\makebox(0,0)[b]{1}}

\path(66.50,99.00)(66.50,88.00)(76.00,82.50)(85.50,88.00)
     (85.50,99.00)(76.00,104.50)(66.50,99.00)
\put(76.00,89.50){\makebox(0,0)[b]{1}}

\path(66.50,132.00)(66.50,121.00)(76.00,115.50)(85.50,121.00)
     (85.50,132.00)(76.00,137.50)(66.50,132.00)
\put(76.00,122.50){\makebox(0,0)[b]{ }}

\put(85.50,11.00){\makebox(0,0){$\cdot$}}
\put(85.50,3.67){\makebox(0,0){$\cdot$}}
\put(85.50,-3.67){\makebox(0,0){$\cdot$}}
\path(76.00,5.50)(76.00,16.50)
\path(95.00,5.50)(95.00,16.50)

\path(76.00,49.50)(76.00,38.50)(85.50,33.00)(95.00,38.50)
     (95.00,49.50)(85.50,55.00)(76.00,49.50)
\put(85.50,40.00){\makebox(0,0)[b]{ }}

\path(76.00,115.50)(76.00,104.50)(85.50,99.00)(95.00,104.50)
     (95.00,115.50)(85.50,121.00)(76.00,115.50)
\put(85.50,106.00){\makebox(0,0)[b]{ }}

\put(85.50,143.00){\makebox(0,0){$\cdot$}}
\put(85.50,150.33){\makebox(0,0){$\cdot$}}
\put(85.50,157.67){\makebox(0,0){$\cdot$}}
\path(76.00,137.50)(76.00,148.50)
\path(95.00,137.50)(95.00,148.50)

\path(85.50,33.00)(85.50,22.00)(95.00,16.50)(104.50,22.00)
     (104.50,33.00)(95.00,38.50)(85.50,33.00)
\put(95.00,23.50){\makebox(0,0)[b]{ }}

\path(85.50,66.00)(85.50,55.00)(95.00,49.50)(104.50,55.00)
     (104.50,66.00)(95.00,71.50)(85.50,66.00)
\put(95.00,56.50){\makebox(0,0)[b]{1}}

\path(85.50,99.00)(85.50,88.00)(95.00,82.50)(104.50,88.00)
     (104.50,99.00)(95.00,104.50)(85.50,99.00)
\put(95.00,89.50){\makebox(0,0)[b]{1}}

\path(85.50,132.00)(85.50,121.00)(95.00,115.50)(104.50,121.00)
     (104.50,132.00)(95.00,137.50)(85.50,132.00)
\put(95.00,122.50){\makebox(0,0)[b]{ }}

\put(104.50,11.00){\makebox(0,0){$\cdot$}}
\put(107.67,5.50){\makebox(0,0){$\cdot$}}
\put(110.83,0.00){\makebox(0,0){$\cdot$}}

\path(95.00,49.50)(95.00,38.50)(104.50,33.00)(114.00,38.50)
     (114.00,49.50)(104.50,55.00)(95.00,49.50)
\put(104.50,40.00){\makebox(0,0)[b]{ }}

\path(95.00,82.50)(95.00,71.50)(104.50,66.00)(114.00,71.50)
     (114.00,82.50)(104.50,88.00)(95.00,82.50)
\put(104.50,73.00){\makebox(0,0)[b]{2}}

\path(95.00,115.50)(95.00,104.50)(104.50,99.00)(114.00,104.50)
     (114.00,115.50)(104.50,121.00)(95.00,115.50)
\put(104.50,106.00){\makebox(0,0)[b]{ }}

\put(104.50,143.00){\makebox(0,0){$\cdot$}}
\put(107.67,148.50){\makebox(0,0){$\cdot$}}
\put(110.83,154.00){\makebox(0,0){$\cdot$}}

\put(114.00,27.50){\makebox(0,0){$\cdot$}}
\put(120.33,23.83){\makebox(0,0){$\cdot$}}
\put(126.67,20.17){\makebox(0,0){$\cdot$}}
\path(114.00,16.50)(104.50,22.00)
\path(114.00,38.50)(123.50,33.00)

\path(104.50,66.00)(104.50,55.00)(114.00,49.50)(123.50,55.00)
     (123.50,66.00)(114.00,71.50)(104.50,66.00)
\put(114.00,56.50){\makebox(0,0)[b]{1}}

\path(104.50,99.00)(104.50,88.00)(114.00,82.50)(123.50,88.00)
     (123.50,99.00)(114.00,104.50)(104.50,99.00)
\put(114.00,89.50){\makebox(0,0)[b]{1}}

\put(114.00,126.50){\makebox(0,0){$\cdot$}}
\put(120.33,130.17){\makebox(0,0){$\cdot$}}
\put(126.67,133.83){\makebox(0,0){$\cdot$}}
\path(114.00,115.50)(123.50,121.00)
\path(114.00,137.50)(104.50,132.00)

\put(123.50,44.00){\makebox(0,0){$\cdot$}}
\put(129.83,40.33){\makebox(0,0){$\cdot$}}
\put(136.17,36.67){\makebox(0,0){$\cdot$}}
\path(123.50,33.00)(114.00,38.50)
\path(123.50,55.00)(133.00,49.50)

\put(123.50,110.00){\makebox(0,0){$\cdot$}}
\put(129.83,113.67){\makebox(0,0){$\cdot$}}
\put(136.17,117.33){\makebox(0,0){$\cdot$}}
\path(123.50,99.00)(133.00,104.50)
\path(123.50,121.00)(114.00,115.50)

\put(133.00,60.50){\makebox(0,0){$\cdot$}}
\put(139.33,56.83){\makebox(0,0){$\cdot$}}
\put(145.67,53.17){\makebox(0,0){$\cdot$}}
\path(133.00,49.50)(123.50,55.00)
\path(133.00,71.50)(142.50,66.00)

\put(133.00,93.50){\makebox(0,0){$\cdot$}}
\put(139.33,97.17){\makebox(0,0){$\cdot$}}
\put(145.67,100.83){\makebox(0,0){$\cdot$}}
\path(133.00,82.50)(142.50,88.00)
\path(133.00,104.50)(123.50,99.00)

\put(142.50,77.00){\makebox(0,0){$\cdot$}}
\put(148.83,77.00){\makebox(0,0){$\cdot$}}
\put(155.17,77.00){\makebox(0,0){$\cdot$}}

\texture{cccccccc 0 0 0 cccccccc 0 0 0
         cccccccc 0 0 0 cccccccc 0 0 0
         cccccccc 0 0 0 cccccccc 0 0 0
         cccccccc 0 0 0 cccccccc 0 0 0}
\whiten

\path(38.00,82.50)(38.00,71.50)(47.50,66.00)(57.00,71.50)
     (57.00,82.50)(47.50,88.00)(38.00,82.50)
\put(47.50,77.00){\makebox(0,0){\circle*{9.50}}}

\path(76.00,82.50)(76.00,71.50)(85.50,66.00)(95.00,71.50)
     (95.00,82.50)(85.50,88.00)(76.00,82.50)
\put(85.50,77.00){\makebox(0,0){\circle*{9.50}}}

\path(114.00,82.50)(114.00,71.50)(123.50,66.00)(133.00,71.50)
     (133.00,82.50)(123.50,88.00)(114.00,82.50)
\put(123.50,77.00){\makebox(0,0){\circle*{9.50}}}

\end{picture}
\quad
\nolinebreak ~ \nolinebreak ~ \nolinebreak
\begin{picture}(158.33,157.67)(-3.17,0.00)

\filltype{shade}
\texture{c0c0c0c0 0 0 0 0c0c0c0c 0 0 0
         c0c0c0c0 0 0 0 0c0c0c0c 0 0 0
         c0c0c0c0 0 0 0 0c0c0c0c 0 0 0
         c0c0c0c0 0 0 0 0c0c0c0c 0 0 0}
\whiten

\path(19.00,77.00)(28.50,93.50)(38.00,77.00)(19.00,77.00)
\path(19.00,77.00)(28.50,60.50)(38.00,77.00)(19.00,77.00)
\put(28.50,79.70){\makebox(0,0)[b]{\footnotesize 8}}
\put(28.50,71.50){\circle*{6.65}}

\put(9.50,77.00){\makebox(0,0){$\cdot$}}
\put(3.17,77.00){\makebox(0,0){$\cdot$}}
\put(-3.17,77.00){\makebox(0,0){$\cdot$}}

\put(19.00,60.50){\makebox(0,0){$\cdot$}}
\put(12.67,56.83){\makebox(0,0){$\cdot$}}
\put(6.33,53.17){\makebox(0,0){$\cdot$}}

\put(19.00,93.50){\makebox(0,0){$\cdot$}}
\put(12.67,97.17){\makebox(0,0){$\cdot$}}
\put(6.33,100.83){\makebox(0,0){$\cdot$}}

\put(28.50,44.00){\makebox(0,0){$\cdot$}}
\put(22.17,40.33){\makebox(0,0){$\cdot$}}
\put(15.83,36.67){\makebox(0,0){$\cdot$}}

\put(28.50,110.00){\makebox(0,0){$\cdot$}}
\put(22.17,113.67){\makebox(0,0){$\cdot$}}
\put(15.83,117.33){\makebox(0,0){$\cdot$}}

\put(38.00,27.50){\makebox(0,0){$\cdot$}}
\put(31.67,23.83){\makebox(0,0){$\cdot$}}
\put(25.33,20.17){\makebox(0,0){$\cdot$}}

\path(28.50,60.50)(38.00,77.00)(47.50,60.50)(28.50,60.50)
\path(28.50,60.50)(38.00,44.00)(47.50,60.50)(28.50,60.50)
\put(38.00,63.20){\makebox(0,0)[b]{\footnotesize 7}}
\put(38.00,55.00){\circle*{6.65}}

\path(28.50,93.50)(38.00,110.00)(47.50,93.50)(28.50,93.50)
\path(28.50,93.50)(38.00,77.00)(47.50,93.50)(28.50,93.50)
\put(38.00,96.20){\makebox(0,0)[b]{\footnotesize 7}}
\put(38.00,88.00){\circle*{6.65}}

\put(38.00,126.50){\makebox(0,0){$\cdot$}}
\put(31.67,130.17){\makebox(0,0){$\cdot$}}
\put(25.33,133.83){\makebox(0,0){$\cdot$}}

\put(47.50,11.00){\makebox(0,0){$\cdot$}}
\put(44.33,5.50){\makebox(0,0){$\cdot$}}
\put(41.17,0.00){\makebox(0,0){$\cdot$}}

\path(38.00,44.00)(47.50,60.50)(57.00,44.00)(38.00,44.00)
\path(38.00,44.00)(47.50,27.50)(57.00,44.00)(38.00,44.00)
\put(47.50,46.70){\makebox(0,0)[b]{\footnotesize 6}}
\put(47.50,38.50){\circle*{6.65}}

\path(38.00,110.00)(47.50,126.50)(57.00,110.00)(38.00,110.00)
\path(38.00,110.00)(47.50,93.50)(57.00,110.00)(38.00,110.00)
\put(47.50,112.70){\makebox(0,0)[b]{\footnotesize 6}}
\put(47.50,104.50){\circle*{6.65}}

\put(47.50,143.00){\makebox(0,0){$\cdot$}}
\put(44.33,148.50){\makebox(0,0){$\cdot$}}
\put(41.17,154.00){\makebox(0,0){$\cdot$}}

\path(47.50,27.50)(57.00,44.00)(66.50,27.50)(47.50,27.50)
\path(47.50,27.50)(57.00,11.00)(66.50,27.50)(47.50,27.50)
\put(57.00,30.20){\makebox(0,0)[b]{\footnotesize 6}}
\put(57.00,22.00){\circle*{6.65}}

\path(47.50,60.50)(57.00,77.00)(66.50,60.50)(47.50,60.50)
\path(47.50,60.50)(57.00,44.00)(66.50,60.50)(47.50,60.50)
\put(57.00,63.20){\makebox(0,0)[b]{\footnotesize 7}}
\put(57.00,55.00){\circle*{6.65}}

\path(47.50,93.50)(57.00,110.00)(66.50,93.50)(47.50,93.50)
\path(47.50,93.50)(57.00,77.00)(66.50,93.50)(47.50,93.50)
\put(57.00,96.20){\makebox(0,0)[b]{\footnotesize 7}}
\put(57.00,88.00){\circle*{6.65}}

\path(47.50,126.50)(57.00,143.00)(66.50,126.50)(47.50,126.50)
\path(47.50,126.50)(57.00,110.00)(66.50,126.50)(47.50,126.50)
\put(57.00,129.20){\makebox(0,0)[b]{\footnotesize 6}}
\put(57.00,121.00){\circle*{6.65}}

\put(66.50,11.00){\makebox(0,0){$\cdot$}}
\put(66.50,3.67){\makebox(0,0){$\cdot$}}
\put(66.50,-3.67){\makebox(0,0){$\cdot$}}

\path(57.00,44.00)(66.50,60.50)(76.00,44.00)(57.00,44.00)
\path(57.00,44.00)(66.50,27.50)(76.00,44.00)(57.00,44.00)
\put(66.50,46.70){\makebox(0,0)[b]{\footnotesize 6}}
\put(66.50,38.50){\circle*{6.65}}

\path(57.00,77.00)(66.50,93.50)(76.00,77.00)(57.00,77.00)
\path(57.00,77.00)(66.50,60.50)(76.00,77.00)(57.00,77.00)
\put(66.50,79.70){\makebox(0,0)[b]{\footnotesize 8}}
\put(66.50,71.50){\circle*{6.65}}

\path(57.00,110.00)(66.50,126.50)(76.00,110.00)(57.00,110.00)
\path(57.00,110.00)(66.50,93.50)(76.00,110.00)(57.00,110.00)
\put(66.50,112.70){\makebox(0,0)[b]{\footnotesize 6}}
\put(66.50,104.50){\circle*{6.65}}

\put(66.50,143.00){\makebox(0,0){$\cdot$}}
\put(66.50,150.33){\makebox(0,0){$\cdot$}}
\put(66.50,157.67){\makebox(0,0){$\cdot$}}

\path(66.50,27.50)(76.00,44.00)(85.50,27.50)(66.50,27.50)
\path(66.50,27.50)(76.00,11.00)(85.50,27.50)(66.50,27.50)
\put(76.00,30.20){\makebox(0,0)[b]{\footnotesize 6}}
\put(76.00,22.00){\circle*{6.65}}

\path(66.50,60.50)(76.00,77.00)(85.50,60.50)(66.50,60.50)
\path(66.50,60.50)(76.00,44.00)(85.50,60.50)(66.50,60.50)
\put(76.00,63.20){\makebox(0,0)[b]{\footnotesize 7}}
\put(76.00,55.00){\circle*{6.65}}

\path(66.50,93.50)(76.00,110.00)(85.50,93.50)(66.50,93.50)
\path(66.50,93.50)(76.00,77.00)(85.50,93.50)(66.50,93.50)
\put(76.00,96.20){\makebox(0,0)[b]{\footnotesize 7}}
\put(76.00,88.00){\circle*{6.65}}

\path(66.50,126.50)(76.00,143.00)(85.50,126.50)(66.50,126.50)
\path(66.50,126.50)(76.00,110.00)(85.50,126.50)(66.50,126.50)
\put(76.00,129.20){\makebox(0,0)[b]{\footnotesize 6}}
\put(76.00,121.00){\circle*{6.65}}

\put(85.50,11.00){\makebox(0,0){$\cdot$}}
\put(85.50,3.67){\makebox(0,0){$\cdot$}}
\put(85.50,-3.67){\makebox(0,0){$\cdot$}}

\path(76.00,44.00)(85.50,60.50)(95.00,44.00)(76.00,44.00)
\path(76.00,44.00)(85.50,27.50)(95.00,44.00)(76.00,44.00)
\put(85.50,46.70){\makebox(0,0)[b]{\footnotesize 6}}
\put(85.50,38.50){\circle*{6.65}}

\path(76.00,110.00)(85.50,126.50)(95.00,110.00)(76.00,110.00)
\path(76.00,110.00)(85.50,93.50)(95.00,110.00)(76.00,110.00)
\put(85.50,112.70){\makebox(0,0)[b]{\footnotesize 6}}
\put(85.50,104.50){\circle*{6.65}}

\put(85.50,143.00){\makebox(0,0){$\cdot$}}
\put(85.50,150.33){\makebox(0,0){$\cdot$}}
\put(85.50,157.67){\makebox(0,0){$\cdot$}}

\path(85.50,27.50)(95.00,44.00)(104.50,27.50)(85.50,27.50)
\path(85.50,27.50)(95.00,11.00)(104.50,27.50)(85.50,27.50)
\put(95.00,30.20){\makebox(0,0)[b]{\footnotesize 6}}
\put(95.00,22.00){\circle*{6.65}}

\path(85.50,60.50)(95.00,77.00)(104.50,60.50)(85.50,60.50)
\path(85.50,60.50)(95.00,44.00)(104.50,60.50)(85.50,60.50)
\put(95.00,63.20){\makebox(0,0)[b]{\footnotesize 7}}
\put(95.00,55.00){\circle*{6.65}}

\path(85.50,93.50)(95.00,110.00)(104.50,93.50)(85.50,93.50)
\path(85.50,93.50)(95.00,77.00)(104.50,93.50)(85.50,93.50)
\put(95.00,96.20){\makebox(0,0)[b]{\footnotesize 7}}
\put(95.00,88.00){\circle*{6.65}}

\path(85.50,126.50)(95.00,143.00)(104.50,126.50)(85.50,126.50)
\path(85.50,126.50)(95.00,110.00)(104.50,126.50)(85.50,126.50)
\put(95.00,129.20){\makebox(0,0)[b]{\footnotesize 6}}
\put(95.00,121.00){\circle*{6.65}}

\put(104.50,11.00){\makebox(0,0){$\cdot$}}
\put(107.67,5.50){\makebox(0,0){$\cdot$}}
\put(110.83,0.00){\makebox(0,0){$\cdot$}}

\path(95.00,44.00)(104.50,60.50)(114.00,44.00)(95.00,44.00)
\path(95.00,44.00)(104.50,27.50)(114.00,44.00)(95.00,44.00)
\put(104.50,46.70){\makebox(0,0)[b]{\footnotesize 6}}
\put(104.50,38.50){\circle*{6.65}}

\path(95.00,77.00)(104.50,93.50)(114.00,77.00)(95.00,77.00)
\path(95.00,77.00)(104.50,60.50)(114.00,77.00)(95.00,77.00)
\put(104.50,79.70){\makebox(0,0)[b]{\footnotesize 8}}
\put(104.50,71.50){\circle*{6.65}}

\path(95.00,110.00)(104.50,126.50)(114.00,110.00)(95.00,110.00)
\path(95.00,110.00)(104.50,93.50)(114.00,110.00)(95.00,110.00)
\put(104.50,112.70){\makebox(0,0)[b]{\footnotesize 6}}
\put(104.50,104.50){\circle*{6.65}}

\put(104.50,143.00){\makebox(0,0){$\cdot$}}
\put(107.67,148.50){\makebox(0,0){$\cdot$}}
\put(110.83,154.00){\makebox(0,0){$\cdot$}}

\put(114.00,27.50){\makebox(0,0){$\cdot$}}
\put(120.33,23.83){\makebox(0,0){$\cdot$}}
\put(126.67,20.17){\makebox(0,0){$\cdot$}}

\path(104.50,60.50)(114.00,77.00)(123.50,60.50)(104.50,60.50)
\path(104.50,60.50)(114.00,44.00)(123.50,60.50)(104.50,60.50)
\put(114.00,63.20){\makebox(0,0)[b]{\footnotesize 7}}
\put(114.00,55.00){\circle*{6.65}}

\path(104.50,93.50)(114.00,110.00)(123.50,93.50)(104.50,93.50)
\path(104.50,93.50)(114.00,77.00)(123.50,93.50)(104.50,93.50)
\put(114.00,96.20){\makebox(0,0)[b]{\footnotesize 7}}
\put(114.00,88.00){\circle*{6.65}}

\put(114.00,126.50){\makebox(0,0){$\cdot$}}
\put(120.33,130.17){\makebox(0,0){$\cdot$}}
\put(126.67,133.83){\makebox(0,0){$\cdot$}}

\put(123.50,44.00){\makebox(0,0){$\cdot$}}
\put(129.83,40.33){\makebox(0,0){$\cdot$}}
\put(136.17,36.67){\makebox(0,0){$\cdot$}}

\put(123.50,110.00){\makebox(0,0){$\cdot$}}
\put(129.83,113.67){\makebox(0,0){$\cdot$}}
\put(136.17,117.33){\makebox(0,0){$\cdot$}}

\put(133.00,60.50){\makebox(0,0){$\cdot$}}
\put(139.33,56.83){\makebox(0,0){$\cdot$}}
\put(145.67,53.17){\makebox(0,0){$\cdot$}}

\put(133.00,93.50){\makebox(0,0){$\cdot$}}
\put(139.33,97.17){\makebox(0,0){$\cdot$}}
\put(145.67,100.83){\makebox(0,0){$\cdot$}}

\put(142.50,77.00){\makebox(0,0){$\cdot$}}
\put(148.83,77.00){\makebox(0,0){$\cdot$}}
\put(155.17,77.00){\makebox(0,0){$\cdot$}}

\path(38.00,77.00)(47.50,93.50)(57.00,77.00)(38.00,77.00)
\path(38.00,77.00)(47.50,60.50)(57.00,77.00)(38.00,77.00)
\put(47.50,79.70){\makebox(0,0)[b]{\footnotesize  }}
\put(47.50,71.50){\circle*{6.65}}

\path(76.00,77.00)(85.50,93.50)(95.00,77.00)(76.00,77.00)
\path(76.00,77.00)(85.50,60.50)(95.00,77.00)(76.00,77.00)
\put(85.50,79.70){\makebox(0,0)[b]{\footnotesize  }}
\put(85.50,71.50){\circle*{6.65}}

\path(114.00,77.00)(123.50,93.50)(133.00,77.00)(114.00,77.00)
\path(114.00,77.00)(123.50,60.50)(133.00,77.00)(114.00,77.00)
\put(123.50,79.70){\makebox(0,0)[b]{\footnotesize  }}
\put(123.50,71.50){\circle*{6.65}}

\texture{cccccccc 0 0 0 cccccccc 0 0 0
         cccccccc 0 0 0 cccccccc 0 0 0
         cccccccc 0 0 0 cccccccc 0 0 0
         cccccccc 0 0 0 cccccccc 0 0 0}
\whiten

\put(47.50,82.50){\makebox(0,0){\circle*{6.65}}}

\put(85.50,82.50){\makebox(0,0){\circle*{6.65}}}

\put(123.50,82.50){\makebox(0,0){\circle*{6.65}}}

\end{picture}
\end{center} \vspace*{-20pt}
\caption{Hexagonal Minesweeper solvability reduces to triangular Minesweeper solvability}
\label{hex2tri}
\end{figure}

\noindent
One can reduce hexagonal minesweeper to triangular minesweeper as follows.
In figure \ref{hex2tri}, each hexagonal compartment on the left hand side
is replaced by two triangular compartments: one with a light mine that
points downwards and another that points upwards. The triangle that points
upwards contains a dark mine if the corresponding hexagonal compartment 
on the left hand does. Otherwise, it contains the number of the corresponding
hexagon plus $6$. The `plus $6$' counts (for) the light mines surrounding
the triangular compartment. So apart for some minor border issues that are ignored
here, we have a reduction from hexagonal minesweeper to triangular minesweeper
here. It follows that triangular minesweeper is NP-complete.

\section{A stronger solvability result for normal Minesweeper}

Since Richard Kaye already proved that Minesweeper solvability is NP-com\-plete, 
it would be nice to improve on that. One way to do so is to observe that Minesweeper 
solvability is ASP-complete. It is true that all three variants of Minesweeper discussed 
above are indeed shown to be ASP-complete, while Richard Kaye's proof is not an ASP-proof
(if the inputs $u$ and $v$ of his AND-gate are both zero, then $r$ and $s$
can be interchanged). 

But that is not what I want to discuss here. No, we are going to show that determining
the solvability of a minesweeper board of which only one square is uncovered initially
is NP-complete. Of course, the uncovered square can only be surrounded by zero
mines, since otherwise it would be impossible to uncover any other square, in
which case you will not get any further.

Ok, say that there is one uncovered square with no mines surrounding it.
Then you can uncover all surrounding squares, and for each such square that has
no mines surrounding it either you can uncover the surrounding squares as well,
etc. All programs for minesweeper do this automatically.

So we get an area of uncovered squares consisting of connected squares with
no mines around them, from now on called a {\em whitespace component}. 
Furthermore, the border of the first whitespace component is uncovered as 
well, but the border squares do have mines around them. 

In order to show that determining
the solvability of a minesweeper board of which only one square 
is uncovered initially is NP-complete, it suffices to be able to do the 
following by local reasoning:
\begin{itemize}

\item reason through wires to uncover all whitespace components,

\item get to know all components except for the values of their variables.

\end{itemize}
Figure \ref{phasereason} shows how to get to know wires and to reason 
through them.

\begin{figure}[!htp]
\begin{center}
\begin{picture}(329.33,79.17)(-3.17,-41.17)

\mathversion{bold}
\filltype{shade}
\texture{40004000 0 0 0 00400040 0 0 0
         40004000 0 0 0 00400040 0 0 0
         40004000 0 0 0 00400040 0 0 0
         40004000 0 0 0 00400040 0 0 0}
\whiten

\path(19.00,38.00)(38.00,38.00)(38.00,19.00)(19.00,19.00)(19.00,38.00)
\put(28.50,24.50){\makebox(0,0)[b]{1}}

\path(0.00,0.00)(19.00,0.00)(19.00,-19.00)(0.00,-19.00)
\put(9.50,-9.50){\makebox(0,0){$\cdot$}}
\put(3.17,-9.50){\makebox(0,0){$\cdot$}}
\put(-3.17,-9.50){\makebox(0,0){$\cdot$}}

\path(0.00,19.00)(19.00,19.00)(19.00,0.00)(0.00,0.00)
\put(9.50,9.50){\makebox(0,0){$\cdot$}}
\put(3.17,9.50){\makebox(0,0){$\cdot$}}
\put(-3.17,9.50){\makebox(0,0){$\cdot$}}

\path(0.00,38.00)(19.00,38.00)(19.00,19.00)(0.00,19.00)
\put(9.50,28.50){\makebox(0,0){$\cdot$}}
\put(3.17,28.50){\makebox(0,0){$\cdot$}}
\put(-3.17,28.50){\makebox(0,0){$\cdot$}}

\path(19.00,-38.00)(19.00,-19.00)(38.00,-19.00)(38.00,-38.00)
\put(28.50,-28.50){\makebox(0,0){$\cdot$}}
\put(28.50,-34.83){\makebox(0,0){$\cdot$}}
\put(28.50,-41.17){\makebox(0,0){$\cdot$}}

\path(19.00,0.00)(38.00,0.00)(38.00,-19.00)(19.00,-19.00)(19.00,0.00)
\put(28.50,-9.50){\makebox(0,0){?}}

\path(19.00,19.00)(38.00,19.00)(38.00,0.00)(19.00,0.00)(19.00,19.00)
\put(28.50,9.50){\makebox(0,0){?}}

\path(38.00,-38.00)(38.00,-19.00)(57.00,-19.00)(57.00,-38.00)
\put(47.50,-28.50){\makebox(0,0){$\cdot$}}
\put(47.50,-34.83){\makebox(0,0){$\cdot$}}
\put(47.50,-41.17){\makebox(0,0){$\cdot$}}

\path(38.00,0.00)(57.00,0.00)(57.00,-19.00)(38.00,-19.00)(38.00,0.00)
\put(47.50,-9.50){\makebox(0,0){?}}

\path(38.00,19.00)(57.00,19.00)(57.00,0.00)(38.00,0.00)(38.00,19.00)
\put(47.50,9.50){\makebox(0,0){?}}

\path(38.00,38.00)(57.00,38.00)(57.00,19.00)(38.00,19.00)(38.00,38.00)
\put(47.50,24.50){\makebox(0,0)[b]{1}}

\path(57.00,-38.00)(57.00,-19.00)(76.00,-19.00)(76.00,-38.00)
\put(66.50,-28.50){\makebox(0,0){$\cdot$}}
\put(66.50,-34.83){\makebox(0,0){$\cdot$}}
\put(66.50,-41.17){\makebox(0,0){$\cdot$}}

\path(57.00,0.00)(76.00,0.00)(76.00,-19.00)(57.00,-19.00)(57.00,0.00)
\put(66.50,-9.50){\makebox(0,0){?}}

\path(57.00,19.00)(76.00,19.00)(76.00,0.00)(57.00,0.00)(57.00,19.00)
\put(66.50,9.50){\makebox(0,0){?}}

\path(57.00,38.00)(76.00,38.00)(76.00,19.00)(57.00,19.00)(57.00,38.00)
\put(66.50,24.50){\makebox(0,0)[b]{1}}

\path(76.00,-38.00)(76.00,-19.00)(95.00,-19.00)(95.00,-38.00)
\put(85.50,-28.50){\makebox(0,0){$\cdot$}}
\put(85.50,-34.83){\makebox(0,0){$\cdot$}}
\put(85.50,-41.17){\makebox(0,0){$\cdot$}}

\path(76.00,0.00)(95.00,0.00)(95.00,-19.00)(76.00,-19.00)(76.00,0.00)
\put(85.50,-9.50){\makebox(0,0){?}}

\path(76.00,19.00)(95.00,19.00)(95.00,0.00)(76.00,0.00)(76.00,19.00)
\put(85.50,9.50){\makebox(0,0){?}}

\path(76.00,38.00)(95.00,38.00)(95.00,19.00)(76.00,19.00)(76.00,38.00)
\put(85.50,24.50){\makebox(0,0)[b]{1}}

\path(95.00,-38.00)(95.00,-19.00)(114.00,-19.00)(114.00,-38.00)
\put(104.50,-28.50){\makebox(0,0){$\cdot$}}
\put(104.50,-34.83){\makebox(0,0){$\cdot$}}
\put(104.50,-41.17){\makebox(0,0){$\cdot$}}

\path(95.00,0.00)(114.00,0.00)(114.00,-19.00)(95.00,-19.00)(95.00,0.00)
\put(104.50,-9.50){\makebox(0,0){?}}

\path(95.00,19.00)(114.00,19.00)(114.00,0.00)(95.00,0.00)(95.00,19.00)
\put(104.50,9.50){\makebox(0,0){?}}

\path(95.00,38.00)(114.00,38.00)(114.00,19.00)(95.00,19.00)(95.00,38.00)
\put(104.50,24.50){\makebox(0,0)[b]{1}}

\path(114.00,-38.00)(114.00,-19.00)(133.00,-19.00)(133.00,-38.00)
\put(123.50,-28.50){\makebox(0,0){$\cdot$}}
\put(123.50,-34.83){\makebox(0,0){$\cdot$}}
\put(123.50,-41.17){\makebox(0,0){$\cdot$}}

\path(114.00,0.00)(133.00,0.00)(133.00,-19.00)(114.00,-19.00)(114.00,0.00)
\put(123.50,-9.50){\makebox(0,0){?}}

\path(114.00,19.00)(133.00,19.00)(133.00,0.00)(114.00,0.00)(114.00,19.00)
\put(123.50,9.50){\makebox(0,0){?}}

\path(114.00,38.00)(133.00,38.00)(133.00,19.00)(114.00,19.00)(114.00,38.00)
\put(123.50,24.50){\makebox(0,0)[b]{1}}

\path(133.00,-38.00)(133.00,-19.00)(152.00,-19.00)(152.00,-38.00)
\put(142.50,-28.50){\makebox(0,0){$\cdot$}}
\put(142.50,-34.83){\makebox(0,0){$\cdot$}}
\put(142.50,-41.17){\makebox(0,0){$\cdot$}}

\path(133.00,0.00)(152.00,0.00)(152.00,-19.00)(133.00,-19.00)(133.00,0.00)
\put(142.50,-9.50){\makebox(0,0){?}}

\path(133.00,19.00)(152.00,19.00)(152.00,0.00)(133.00,0.00)(133.00,19.00)
\put(142.50,9.50){\makebox(0,0){?}}

\path(133.00,38.00)(152.00,38.00)(152.00,19.00)(133.00,19.00)(133.00,38.00)
\put(142.50,24.50){\makebox(0,0)[b]{1}}

\path(152.00,-38.00)(152.00,-19.00)(171.00,-19.00)(171.00,-38.00)
\put(161.50,-28.50){\makebox(0,0){$\cdot$}}
\put(161.50,-34.83){\makebox(0,0){$\cdot$}}
\put(161.50,-41.17){\makebox(0,0){$\cdot$}}

\path(152.00,0.00)(171.00,0.00)(171.00,-19.00)(152.00,-19.00)(152.00,0.00)
\put(161.50,-9.50){\makebox(0,0){?}}

\path(152.00,19.00)(171.00,19.00)(171.00,0.00)(152.00,0.00)(152.00,19.00)
\put(161.50,9.50){\makebox(0,0){?}}

\path(152.00,38.00)(171.00,38.00)(171.00,19.00)(152.00,19.00)(152.00,38.00)
\put(161.50,24.50){\makebox(0,0)[b]{1}}

\path(171.00,-38.00)(171.00,-19.00)(190.00,-19.00)(190.00,-38.00)
\put(180.50,-28.50){\makebox(0,0){$\cdot$}}
\put(180.50,-34.83){\makebox(0,0){$\cdot$}}
\put(180.50,-41.17){\makebox(0,0){$\cdot$}}

\path(171.00,0.00)(190.00,0.00)(190.00,-19.00)(171.00,-19.00)(171.00,0.00)
\put(180.50,-9.50){\makebox(0,0){?}}

\path(171.00,19.00)(190.00,19.00)(190.00,0.00)(171.00,0.00)(171.00,19.00)
\put(180.50,9.50){\makebox(0,0){?}}

\path(171.00,38.00)(190.00,38.00)(190.00,19.00)(171.00,19.00)(171.00,38.00)
\put(180.50,24.50){\makebox(0,0)[b]{1}}

\path(190.00,-38.00)(190.00,-19.00)(209.00,-19.00)(209.00,-38.00)
\put(199.50,-28.50){\makebox(0,0){$\cdot$}}
\put(199.50,-34.83){\makebox(0,0){$\cdot$}}
\put(199.50,-41.17){\makebox(0,0){$\cdot$}}

\path(190.00,0.00)(209.00,0.00)(209.00,-19.00)(190.00,-19.00)(190.00,0.00)
\put(199.50,-9.50){\makebox(0,0){?}}

\path(190.00,19.00)(209.00,19.00)(209.00,0.00)(190.00,0.00)(190.00,19.00)
\put(199.50,9.50){\makebox(0,0){?}}

\path(190.00,38.00)(209.00,38.00)(209.00,19.00)(190.00,19.00)(190.00,38.00)
\put(199.50,24.50){\makebox(0,0)[b]{1}}

\path(209.00,-38.00)(209.00,-19.00)(228.00,-19.00)(228.00,-38.00)
\put(218.50,-28.50){\makebox(0,0){$\cdot$}}
\put(218.50,-34.83){\makebox(0,0){$\cdot$}}
\put(218.50,-41.17){\makebox(0,0){$\cdot$}}

\path(209.00,0.00)(228.00,0.00)(228.00,-19.00)(209.00,-19.00)(209.00,0.00)
\put(218.50,-9.50){\makebox(0,0){?}}

\path(209.00,19.00)(228.00,19.00)(228.00,0.00)(209.00,0.00)(209.00,19.00)
\put(218.50,9.50){\makebox(0,0){?}}

\path(209.00,38.00)(228.00,38.00)(228.00,19.00)(209.00,19.00)(209.00,38.00)
\put(218.50,24.50){\makebox(0,0)[b]{1}}

\path(228.00,-38.00)(228.00,-19.00)(247.00,-19.00)(247.00,-38.00)
\put(237.50,-28.50){\makebox(0,0){$\cdot$}}
\put(237.50,-34.83){\makebox(0,0){$\cdot$}}
\put(237.50,-41.17){\makebox(0,0){$\cdot$}}

\path(228.00,0.00)(247.00,0.00)(247.00,-19.00)(228.00,-19.00)(228.00,0.00)
\put(237.50,-9.50){\makebox(0,0){?}}

\path(228.00,19.00)(247.00,19.00)(247.00,0.00)(228.00,0.00)(228.00,19.00)
\put(237.50,9.50){\makebox(0,0){?}}

\path(228.00,38.00)(247.00,38.00)(247.00,19.00)(228.00,19.00)(228.00,38.00)
\put(237.50,24.50){\makebox(0,0)[b]{1}}

\path(247.00,-38.00)(247.00,-19.00)(266.00,-19.00)(266.00,-38.00)
\put(256.50,-28.50){\makebox(0,0){$\cdot$}}
\put(256.50,-34.83){\makebox(0,0){$\cdot$}}
\put(256.50,-41.17){\makebox(0,0){$\cdot$}}

\path(247.00,0.00)(266.00,0.00)(266.00,-19.00)(247.00,-19.00)(247.00,0.00)
\put(256.50,-9.50){\makebox(0,0){?}}

\path(247.00,19.00)(266.00,19.00)(266.00,0.00)(247.00,0.00)(247.00,19.00)
\put(256.50,9.50){\makebox(0,0){?}}

\path(247.00,38.00)(266.00,38.00)(266.00,19.00)(247.00,19.00)(247.00,38.00)
\put(256.50,24.50){\makebox(0,0)[b]{1}}

\path(266.00,-38.00)(266.00,-19.00)(285.00,-19.00)(285.00,-38.00)
\put(275.50,-28.50){\makebox(0,0){$\cdot$}}
\put(275.50,-34.83){\makebox(0,0){$\cdot$}}
\put(275.50,-41.17){\makebox(0,0){$\cdot$}}

\path(266.00,0.00)(285.00,0.00)(285.00,-19.00)(266.00,-19.00)(266.00,0.00)
\put(275.50,-9.50){\makebox(0,0){?}}

\path(266.00,19.00)(285.00,19.00)(285.00,0.00)(266.00,0.00)(266.00,19.00)
\put(275.50,9.50){\makebox(0,0){?}}

\path(266.00,38.00)(285.00,38.00)(285.00,19.00)(266.00,19.00)(266.00,38.00)
\put(275.50,24.50){\makebox(0,0)[b]{1}}

\path(285.00,-38.00)(285.00,-19.00)(304.00,-19.00)(304.00,-38.00)
\put(294.50,-28.50){\makebox(0,0){$\cdot$}}
\put(294.50,-34.83){\makebox(0,0){$\cdot$}}
\put(294.50,-41.17){\makebox(0,0){$\cdot$}}

\path(285.00,0.00)(304.00,0.00)(304.00,-19.00)(285.00,-19.00)(285.00,0.00)
\put(294.50,-9.50){\makebox(0,0){?}}

\path(285.00,19.00)(304.00,19.00)(304.00,0.00)(285.00,0.00)(285.00,19.00)
\put(294.50,9.50){\makebox(0,0){?}}

\path(285.00,38.00)(304.00,38.00)(304.00,19.00)(285.00,19.00)(285.00,38.00)
\put(294.50,24.50){\makebox(0,0)[b]{1}}

\path(323.00,0.00)(304.00,0.00)(304.00,19.00)(323.00,19.00)
\put(313.50,9.50){\makebox(0,0){$\cdot$}}
\put(319.83,9.50){\makebox(0,0){$\cdot$}}
\put(326.17,9.50){\makebox(0,0){$\cdot$}}

\path(323.00,-19.00)(304.00,-19.00)(304.00,0.00)(323.00,0.00)
\put(313.50,-9.50){\makebox(0,0){$\cdot$}}
\put(319.83,-9.50){\makebox(0,0){$\cdot$}}
\put(326.17,-9.50){\makebox(0,0){$\cdot$}}

\path(323.00,19.00)(304.00,19.00)(304.00,38.00)(323.00,38.00)
\put(313.50,28.50){\makebox(0,0){$\cdot$}}
\put(319.83,28.50){\makebox(0,0){$\cdot$}}
\put(326.17,28.50){\makebox(0,0){$\cdot$}}

\mathversion{normal}

\end{picture}
\\[10pt]
\begin{picture}(329.33,79.17)(-3.17,-41.17)

\mathversion{bold}
\filltype{shade}
\texture{40004000 0 0 0 00400040 0 0 0
         40004000 0 0 0 00400040 0 0 0
         40004000 0 0 0 00400040 0 0 0
         40004000 0 0 0 00400040 0 0 0}
\whiten

\path(19.00,38.00)(38.00,38.00)(38.00,19.00)(19.00,19.00)(19.00,38.00)
\put(28.50,24.50){\makebox(0,0)[b]{1}}

\path(0.00,0.00)(19.00,0.00)(19.00,-19.00)(0.00,-19.00)
\put(9.50,-9.50){\makebox(0,0){$\cdot$}}
\put(3.17,-9.50){\makebox(0,0){$\cdot$}}
\put(-3.17,-9.50){\makebox(0,0){$\cdot$}}

\path(0.00,19.00)(19.00,19.00)(19.00,0.00)(0.00,0.00)
\put(9.50,9.50){\makebox(0,0){$\cdot$}}
\put(3.17,9.50){\makebox(0,0){$\cdot$}}
\put(-3.17,9.50){\makebox(0,0){$\cdot$}}

\path(0.00,38.00)(19.00,38.00)(19.00,19.00)(0.00,19.00)
\put(9.50,28.50){\makebox(0,0){$\cdot$}}
\put(3.17,28.50){\makebox(0,0){$\cdot$}}
\put(-3.17,28.50){\makebox(0,0){$\cdot$}}

\path(19.00,-38.00)(19.00,-19.00)(38.00,-19.00)(38.00,-38.00)
\put(28.50,-28.50){\makebox(0,0){$\cdot$}}
\put(28.50,-34.83){\makebox(0,0){$\cdot$}}
\put(28.50,-41.17){\makebox(0,0){$\cdot$}}

\path(19.00,0.00)(38.00,0.00)(38.00,-19.00)(19.00,-19.00)(19.00,0.00)
\put(28.50,-9.50){\makebox(0,0){?}}

\path(19.00,19.00)(38.00,19.00)(38.00,0.00)(19.00,0.00)(19.00,19.00)
\put(28.50,9.50){\makebox(0,0){?}}

\path(38.00,-38.00)(38.00,-19.00)(57.00,-19.00)(57.00,-38.00)
\put(47.50,-28.50){\makebox(0,0){$\cdot$}}
\put(47.50,-34.83){\makebox(0,0){$\cdot$}}
\put(47.50,-41.17){\makebox(0,0){$\cdot$}}

\path(38.00,0.00)(57.00,0.00)(57.00,-19.00)(38.00,-19.00)(38.00,0.00)
\put(47.50,-9.50){\makebox(0,0){?}}

\path(38.00,19.00)(57.00,19.00)(57.00,0.00)(38.00,0.00)(38.00,19.00)
\put(47.50,9.50){\makebox(0,0){?}}

\path(38.00,38.00)(57.00,38.00)(57.00,19.00)(38.00,19.00)(38.00,38.00)
\put(47.50,24.50){\makebox(0,0)[b]{1}}

\path(57.00,-38.00)(57.00,-19.00)(76.00,-19.00)(76.00,-38.00)
\put(66.50,-28.50){\makebox(0,0){$\cdot$}}
\put(66.50,-34.83){\makebox(0,0){$\cdot$}}
\put(66.50,-41.17){\makebox(0,0){$\cdot$}}

\path(57.00,0.00)(76.00,0.00)(76.00,-19.00)(57.00,-19.00)(57.00,0.00)
\put(66.50,-9.50){\makebox(0,0){?}}

\path(57.00,19.00)(76.00,19.00)(76.00,0.00)(57.00,0.00)(57.00,19.00)
\put(66.50,9.50){\makebox(0,0){$\cdot$}}

\path(57.00,38.00)(76.00,38.00)(76.00,19.00)(57.00,19.00)(57.00,38.00)
\put(66.50,24.50){\makebox(0,0)[b]{1}}

\path(76.00,-38.00)(76.00,-19.00)(95.00,-19.00)(95.00,-38.00)
\put(85.50,-28.50){\makebox(0,0){$\cdot$}}
\put(85.50,-34.83){\makebox(0,0){$\cdot$}}
\put(85.50,-41.17){\makebox(0,0){$\cdot$}}

\path(76.00,0.00)(95.00,0.00)(95.00,-19.00)(76.00,-19.00)(76.00,0.00)
\put(85.50,-9.50){\makebox(0,0){?}}

\path(76.00,19.00)(95.00,19.00)(95.00,0.00)(76.00,0.00)(76.00,19.00)
\put(85.50,9.50){\makebox(0,0){?}}

\path(76.00,38.00)(95.00,38.00)(95.00,19.00)(76.00,19.00)(76.00,38.00)
\put(85.50,24.50){\makebox(0,0)[b]{1}}

\path(95.00,-38.00)(95.00,-19.00)(114.00,-19.00)(114.00,-38.00)
\put(104.50,-28.50){\makebox(0,0){$\cdot$}}
\put(104.50,-34.83){\makebox(0,0){$\cdot$}}
\put(104.50,-41.17){\makebox(0,0){$\cdot$}}

\path(95.00,0.00)(114.00,0.00)(114.00,-19.00)(95.00,-19.00)(95.00,0.00)
\put(104.50,-9.50){\makebox(0,0){?}}

\path(95.00,19.00)(114.00,19.00)(114.00,0.00)(95.00,0.00)(95.00,19.00)
\put(104.50,9.50){\makebox(0,0){?}}

\path(95.00,38.00)(114.00,38.00)(114.00,19.00)(95.00,19.00)(95.00,38.00)
\put(104.50,24.50){\makebox(0,0)[b]{2}}

\path(114.00,-38.00)(114.00,-19.00)(133.00,-19.00)(133.00,-38.00)
\put(123.50,-28.50){\makebox(0,0){$\cdot$}}
\put(123.50,-34.83){\makebox(0,0){$\cdot$}}
\put(123.50,-41.17){\makebox(0,0){$\cdot$}}

\path(114.00,0.00)(133.00,0.00)(133.00,-19.00)(114.00,-19.00)(114.00,0.00)
\put(123.50,-9.50){\makebox(0,0){?}}

\texture{cccccccc 0 0 0 cccccccc 0 0 0
         cccccccc 0 0 0 cccccccc 0 0 0
         cccccccc 0 0 0 cccccccc 0 0 0
         cccccccc 0 0 0 cccccccc 0 0 0}
\whiten

\path(114.00,19.00)(133.00,19.00)(133.00,0.00)(114.00,0.00)(114.00,19.00)
\put(123.50,9.50){\makebox(0,0){\circle*{9.50}}}

\path(114.00,38.00)(133.00,38.00)(133.00,19.00)(114.00,19.00)(114.00,38.00)
\put(123.50,24.50){\makebox(0,0)[b]{2}}

\path(133.00,-38.00)(133.00,-19.00)(152.00,-19.00)(152.00,-38.00)
\put(142.50,-28.50){\makebox(0,0){$\cdot$}}
\put(142.50,-34.83){\makebox(0,0){$\cdot$}}
\put(142.50,-41.17){\makebox(0,0){$\cdot$}}

\path(133.00,0.00)(152.00,0.00)(152.00,-19.00)(133.00,-19.00)(133.00,0.00)
\put(142.50,-9.50){\makebox(0,0){?}}

\path(133.00,19.00)(152.00,19.00)(152.00,0.00)(133.00,0.00)(133.00,19.00)
\put(142.50,9.50){\makebox(0,0){?}}

\path(133.00,38.00)(152.00,38.00)(152.00,19.00)(133.00,19.00)(133.00,38.00)
\put(142.50,24.50){\makebox(0,0)[b]{2}}

\path(152.00,-19.00)(171.00,-19.00)(171.00,-38.00)(152.00,-38.00)(152.00,-19.00)
\put(161.50,-28.50){\makebox(0,0){$\cdot$}}

\path(152.00,0.00)(171.00,0.00)(171.00,-19.00)(152.00,-19.00)(152.00,0.00)
\put(161.50,-9.50){\makebox(0,0){$\cdot$}}

\path(152.00,19.00)(171.00,19.00)(171.00,0.00)(152.00,0.00)(152.00,19.00)
\put(161.50,9.50){\makebox(0,0){?}}

\path(152.00,38.00)(171.00,38.00)(171.00,19.00)(152.00,19.00)(152.00,38.00)
\put(161.50,24.50){\makebox(0,0)[b]{1}}

\path(171.00,-19.00)(190.00,-19.00)(190.00,-38.00)(171.00,-38.00)(171.00,-19.00)
\put(180.50,-28.50){\makebox(0,0){$\cdot$}}

\path(171.00,0.00)(190.00,0.00)(190.00,-19.00)(171.00,-19.00)(171.00,0.00)
\put(180.50,-13.50){\makebox(0,0)[b]{1}}

\path(171.00,19.00)(190.00,19.00)(190.00,0.00)(171.00,0.00)(171.00,19.00)
\put(180.50,5.50){\makebox(0,0)[b]{1}}

\path(171.00,38.00)(190.00,38.00)(190.00,19.00)(171.00,19.00)(171.00,38.00)
\put(180.50,24.50){\makebox(0,0)[b]{1}}

\path(190.00,-19.00)(209.00,-19.00)(209.00,-38.00)(190.00,-38.00)(190.00,-19.00)
\put(199.50,-28.50){\makebox(0,0){$\cdot$}}

\path(190.00,0.00)(209.00,0.00)(209.00,-19.00)(190.00,-19.00)(190.00,0.00)
\put(199.50,-9.50){\makebox(0,0){$\cdot$}}

\path(190.00,19.00)(209.00,19.00)(209.00,0.00)(190.00,0.00)(190.00,19.00)
\put(199.50,9.50){\makebox(0,0){?}}

\path(190.00,38.00)(209.00,38.00)(209.00,19.00)(190.00,19.00)(190.00,38.00)
\put(199.50,24.50){\makebox(0,0)[b]{1}}

\path(209.00,-38.00)(209.00,-19.00)(228.00,-19.00)(228.00,-38.00)
\put(218.50,-28.50){\makebox(0,0){$\cdot$}}
\put(218.50,-34.83){\makebox(0,0){$\cdot$}}
\put(218.50,-41.17){\makebox(0,0){$\cdot$}}

\path(209.00,0.00)(228.00,0.00)(228.00,-19.00)(209.00,-19.00)(209.00,0.00)
\put(218.50,-9.50){\makebox(0,0){?}}

\path(209.00,19.00)(228.00,19.00)(228.00,0.00)(209.00,0.00)(209.00,19.00)
\put(218.50,9.50){\makebox(0,0){?}}

\path(209.00,38.00)(228.00,38.00)(228.00,19.00)(209.00,19.00)(209.00,38.00)
\put(218.50,24.50){\makebox(0,0)[b]{1}}

\path(228.00,-38.00)(228.00,-19.00)(247.00,-19.00)(247.00,-38.00)
\put(237.50,-28.50){\makebox(0,0){$\cdot$}}
\put(237.50,-34.83){\makebox(0,0){$\cdot$}}
\put(237.50,-41.17){\makebox(0,0){$\cdot$}}

\path(228.00,0.00)(247.00,0.00)(247.00,-19.00)(228.00,-19.00)(228.00,0.00)
\put(237.50,-9.50){\makebox(0,0){?}}

\path(228.00,19.00)(247.00,19.00)(247.00,0.00)(228.00,0.00)(228.00,19.00)
\put(237.50,9.50){\makebox(0,0){$\cdot$}}

\path(228.00,38.00)(247.00,38.00)(247.00,19.00)(228.00,19.00)(228.00,38.00)
\put(237.50,24.50){\makebox(0,0)[b]{1}}

\path(247.00,-38.00)(247.00,-19.00)(266.00,-19.00)(266.00,-38.00)
\put(256.50,-28.50){\makebox(0,0){$\cdot$}}
\put(256.50,-34.83){\makebox(0,0){$\cdot$}}
\put(256.50,-41.17){\makebox(0,0){$\cdot$}}

\path(247.00,0.00)(266.00,0.00)(266.00,-19.00)(247.00,-19.00)(247.00,0.00)
\put(256.50,-9.50){\makebox(0,0){?}}

\path(247.00,19.00)(266.00,19.00)(266.00,0.00)(247.00,0.00)(247.00,19.00)
\put(256.50,9.50){\makebox(0,0){?}}

\path(247.00,38.00)(266.00,38.00)(266.00,19.00)(247.00,19.00)(247.00,38.00)
\put(256.50,24.50){\makebox(0,0)[b]{1}}

\path(266.00,-38.00)(266.00,-19.00)(285.00,-19.00)(285.00,-38.00)
\put(275.50,-28.50){\makebox(0,0){$\cdot$}}
\put(275.50,-34.83){\makebox(0,0){$\cdot$}}
\put(275.50,-41.17){\makebox(0,0){$\cdot$}}

\path(266.00,0.00)(285.00,0.00)(285.00,-19.00)(266.00,-19.00)(266.00,0.00)
\put(275.50,-9.50){\makebox(0,0){?}}

\path(266.00,19.00)(285.00,19.00)(285.00,0.00)(266.00,0.00)(266.00,19.00)
\put(275.50,9.50){\makebox(0,0){?}}

\path(266.00,38.00)(285.00,38.00)(285.00,19.00)(266.00,19.00)(266.00,38.00)
\put(275.50,24.50){\makebox(0,0)[b]{1}}

\path(285.00,-38.00)(285.00,-19.00)(304.00,-19.00)(304.00,-38.00)
\put(294.50,-28.50){\makebox(0,0){$\cdot$}}
\put(294.50,-34.83){\makebox(0,0){$\cdot$}}
\put(294.50,-41.17){\makebox(0,0){$\cdot$}}

\path(285.00,0.00)(304.00,0.00)(304.00,-19.00)(285.00,-19.00)(285.00,0.00)
\put(294.50,-9.50){\makebox(0,0){?}}

\path(285.00,19.00)(304.00,19.00)(304.00,0.00)(285.00,0.00)(285.00,19.00)
\put(294.50,9.50){\makebox(0,0){$\cdot$}}

\path(285.00,38.00)(304.00,38.00)(304.00,19.00)(285.00,19.00)(285.00,38.00)
\put(294.50,24.50){\makebox(0,0)[b]{1}}

\path(323.00,-19.00)(304.00,-19.00)(304.00,0.00)(323.00,0.00)
\put(313.50,-9.50){\makebox(0,0){$\cdot$}}
\put(319.83,-9.50){\makebox(0,0){$\cdot$}}
\put(326.17,-9.50){\makebox(0,0){$\cdot$}}

\path(323.00,0.00)(304.00,0.00)(304.00,19.00)(323.00,19.00)
\put(313.50,9.50){\makebox(0,0){$\cdot$}}
\put(319.83,9.50){\makebox(0,0){$\cdot$}}
\put(326.17,9.50){\makebox(0,0){$\cdot$}}

\path(323.00,19.00)(304.00,19.00)(304.00,38.00)(323.00,38.00)
\put(313.50,28.50){\makebox(0,0){$\cdot$}}
\put(319.83,28.50){\makebox(0,0){$\cdot$}}
\put(326.17,28.50){\makebox(0,0){$\cdot$}}

\mathversion{normal}

\end{picture} 
\end{center} \vspace*{-10pt}
\caption{Mark the phase of a wire by an extra mine an you can reason through it} \label{phasereason}
\end{figure}

All other components of normal Minesweeper that are presented here can be
figured out as well, provided all whitespace components are uncovered and
the phases of all wires are known (see figure \ref{phasereason} as well).
The adder-multiplier combi needs the part between the dashed lines very 
sorely now. 

To increase the probability that one gets as far as this solvability check,
one can easily ensure that at least 99 percent of all squares do 
not have surrounding mines.

\section{Minesweeper is PP-hard}

In order to show that Minesweeper is PP-hard, we reduce from weak MAJ\-SAT.
Weak MAJSAT is the problem of estimating the probability that a circuit
is satisfied by $0$ or $1$, with an error of at most $0.5$, assuming that 
the inputs are random. Hence the error will always be $0.5$ when the probability
of satisfaction is exactly $0.5$, and both $0$ and $1$ are valid estimates
in that case. MAJSAT differs from weak MAJSAT that $0$ must be the output
when the probability of satisfaction is exactly $0.5$, and is known to 
be PP-complete \cite[Problem 11.5.16 (a)]{cc}.

\begin{lemma} \label{wMS}
Weak MAJSAT is PP-complete.
\end{lemma}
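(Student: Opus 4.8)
The plan is to prove the two inclusions separately: that weak MAJSAT lies in PP, and that it is PP-hard. Membership is immediate. By definition a correct MAJSAT answer (output $1$ exactly when more than half of the assignments satisfy the circuit, and $0$ otherwise) is always an acceptable weak MAJSAT answer, the two problems differing only in the don't-care tie case $\Pr=\tfrac12$. Since MAJSAT is PP-complete and in particular lies in PP, the very same PP machine is a valid weak MAJSAT solver; equivalently, weak MAJSAT reduces to MAJSAT by the identity map. So the real content is the hardness direction: I would give a many-one reduction from MAJSAT to weak MAJSAT, which together with the PP-completeness of MAJSAT quoted from \cite{cc} yields PP-hardness of weak MAJSAT.

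For the reduction, given a MAJSAT instance $\phi(x_1,\ldots,x_n)$ with $k := \#\{\,a : \phi(a)=1\,\}$ satisfying assignments (assume $n\ge 1$; the case $n=0$ is decided directly), I want to output a circuit $\Phi$ whose satisfaction probability is never exactly $\tfrac12$ and exceeds $\tfrac12$ precisely when $k>2^{n-1}$; then any weak MAJSAT answer for $\Phi$ is forced to coincide with the MAJSAT answer for $\phi$. I would build $\Phi$ in two steps. First form $\phi^+(x_0,x)=(x_0\wedge\phi(x))\vee(\neg x_0\wedge x_1)$ on $n+1$ variables; its count is $k+2^{n-1}$ (the branch $x_0=1$ contributes $k$, the branch $x_0=0$ contributes the $2^{n-1}$ assignments with $x_1=1$), and crucially it has the \emph{explicitly known} satisfying assignment $e=(x_0{=}0,x_1{=}1,x_2{=}\cdots{=}x_n{=}0)$. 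Then add one more variable $w$ and set $\Phi(x_0,x,w)=\phi^+(x_0,x)\wedge\neg\bigl(w\wedge[(x_0,x)=e]\bigr)$, a polynomial-size circuit. Doubling every satisfying assignment of $\phi^+$ over the two values of $w$ and then deleting the single assignment $(e,w{=}1)$ gives $\#\Phi=2(k+2^{n-1})-1=2k+2^n-1$ on $n+2$ variables.

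It then remains to check the two required properties. The count $2k+2^n-1$ is odd while the half-way value $2^{n+1}$ is even, so $\Pr[\Phi=1]\neq\tfrac12$ for every $\phi$; and $\#\Phi>2^{n+1}\iff 2k+2^n-1>2^{n+1}\iff k>2^{n-1}$, exactly the MAJSAT threshold. Hence weak MAJSAT on $\Phi$ returns the MAJSAT verdict on $\phi$, and the reduction is complete.

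I expect the only genuine obstacle to be the tie case $\Pr=\tfrac12$, where MAJSAT insists on $0$ but weak MAJSAT is free. Two tempting fixes fail: averaging $\phi$ against a fixed formula merely translates the count and still lands on $2^{n}$ when $k=2^{n-1}$, and passing to the complement turns the strict majority into a non-strict one, so it cannot move the tie to the correct side either. The essential move is therefore to \emph{subtract} one satisfying assignment so as to push the boundary case strictly below $\tfrac12$, and subtracting requires a satisfying assignment that is known in advance. The gadget above resolves this by having the literal $x_1$ do double duty: it supplies both the needed $2^{n-1}$ shift in the count and a planted, hand-computed satisfying assignment $e$ to delete, so that no search for a satisfying assignment of $\phi$ (which would be NP-hard) is ever needed.
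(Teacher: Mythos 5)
Your proof is correct, and it follows the same overall strategy as the paper: membership in PP is immediate, and hardness comes from a many-one reduction from MAJSAT that perturbs the circuit so that its satisfaction probability can never be exactly $\tfrac12$, while preserving on which side of $\tfrac12$ it lies. The difference is in the gadget. The paper takes $\min\{f(x_1,\ldots,x_n),\max\{y_1,\ldots,y_n\}\}$ with $n$ fresh variables $y_i$, i.e.\ it multiplies the number of satisfying assignments by the odd factor $2^n-1$; since $k(2^n-1)$ can equal the power of two $2^{2n-1}$ only when $2^n-1=1$, the probability avoids $\tfrac12$ for $n\ge 2$, and because the factor $1-2^{-n}$ is slightly less than $1$, the tie case $k=2^{n-1}$ is pushed strictly below $\tfrac12$, matching MAJSAT's convention of answering $0$ on a tie. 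Your construction instead makes the count odd outright, by planting a known satisfying assignment and deleting one copy of it, giving the count $2k+2^n-1$ against the threshold $2^{n+1}$. Both are valid polynomial-size reductions; the paper's is shorter to state, while yours makes the parity argument and the direction in which the tie is broken completely explicit, and it incidentally avoids the harmless degenerate case $n=1$ of the paper's gadget (where $f\equiv 1$ makes the product circuit's probability exactly $\tfrac12$). Your observation that the deleted assignment must be \emph{known in advance} -- and that the $\neg x_0\wedge x_1$ branch supplies one for free -- is exactly the right point to worry about.
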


\begin{proof}
It is clear that weak MAJSAT is in PP, thus it suffices 
to show that weak MAJSAT is PP-hard. For that purpose,
assume we have a circuit that computes $f(x_1,x_2,\ldots,x_n)$,
where the $x_i$ are the inputs of the circuit. Build
another circuit which computes
$\min\{f(x_1,x_2,\ldots,x_n),\max\{y_1,y_2,\ldots,y_n\}\}$.
The latter circuit
cannot have probability $0.5$ of satisfaction, and the weak MAJ\-SAT
value of it is the MAJSAT value of the original circuit.
\end{proof}

\begin{theorem}
Minesweeper is PP-hard.
\end{theorem}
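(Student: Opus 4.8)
The plan is to reduce from weak MAJSAT, which is PP-complete by Lemma \ref{wMS}, to the problem of deciding whether the mines of a Minesweeper board can be located with success probability at least a given rational $p$. By Lemma \ref{wMS} I may assume the instance is a circuit of the form $g=\min\{f,\max\{y_1,\dots,y_k\}\}$ on $m$ inputs, so that $0<\Pr[g=1]<1$ and $\Pr[g=1]\neq\frac12$; the weak MAJSAT answer is YES iff $N:=|\{x:g(x)=1\}|>2^{m-1}$. I would first assemble $g$ on the board from the primitives already built: wires (figure \ref{wire}) carry the $m$ free inputs from variable starts, curves and splitters route and duplicate them, crossovers (figure \ref{crossadders}, made from splitters and the modulo-$2$ adder of figure \ref{addmul}) give planarity, and the adder-multiplier combi supplies AND ($a=x\cdot z$) and XOR ($c\equiv x+z$), which with the invertor (figure \ref{invertor}) and the forcing gadget (figure \ref{force}) are functionally complete. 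I then attach an invertor to the output and force the resulting wire to conduct $1$. The whole layout is produced in polynomial time.

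The crucial claim is that the optimal probability of locating all mines on this board equals $1/(2^m-N)$. Two facts drive this. First, because every gadget is balanced, each wire and gate carries the same number of mines whatever boolean value passes through it, so the total number of mines is fixed and computable in polynomial time and a mine counter leaks nothing; hence, conditioned on the revealed structure, the mine distribution is uniform over the configurations consistent with the board. Since each gate is deterministic in its inputs (the shaker's outputs are pinned to $\min$ and $\max$), those configurations biject with the assignments satisfying the forced constraint, namely the $2^m-N$ falsifying assignments of $g$, each of probability $1/(2^m-N)$. Second, the forced skeleton, all whitespace, wire clues and gate interiors, is safe in every consistent configuration and can be uncovered without risk (this is exactly the local reasoning of the previous section), so the sole genuine uncertainty is the phase of each input wire. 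To locate all mines the player must uncover every non-mine square, and any deterministic adaptive strategy can clear at most one configuration: were it to survive and clear both $a$ and $b$, every square it revealed would be safe in both and the revealed safe sets would exhaust both, forcing the mine sets of $a$ and $b$ to coincide. Betting on one fixed consistent configuration clears precisely that one, so the optimum is $1/(2^m-N)$.

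With this the reduction closes. Asking whether the optimal success probability is at least $2^{-(m-1)}$ is equivalent to $2^m-N\le 2^{m-1}$, i.e. $N\ge 2^{m-1}$, and by the promise $N\neq 2^{m-1}$ this is exactly $\Pr[g=1]>\frac12$, the weak MAJSAT answer. The threshold $2^{-(m-1)}$ is representable in $O(m)$ bits, and since the circuits from Lemma \ref{wMS} satisfy $0<\Pr[g=1]<1$ the board is always solvable, so the reduction is total and polynomial. For the full game formulation, where the player must also make the opening click, I would pad the board with empty space until at least $99\%$ of the squares are empty, so that the designated first click lands safely with probability $0.99$; this multiplies every success probability by the same known constant and leaves the comparison, hence PP-hardness, intact.

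The step I expect to be the main obstacle is the second paragraph: rigorously pinning the optimum to $1/(2^m-N)$. This requires (i) that the fixed, polynomial-time mine count really makes the posterior uniform and blocks any side information from a counter, (ii) that the entire circuit skeleton is safely uncoverable so that the input phases are the only source of randomness, and (iii) the ``one configuration per strategy'' argument that forbids the player from beating a single blind bet. The circuit construction itself is routine given the components already built; the probabilistic bookkeeping is where the care lies.
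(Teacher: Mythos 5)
Your reduction starts from the same source problem (weak MAJSAT, via Lemma \ref{wMS}), but the board you build and the quantity you threshold are genuinely different from the paper's. The paper does \emph{not} force the circuit's output: it leaves $s=f(x_1,\dots,x_n)$ free and instead wires $s$ back to the starting points of the inputs (figure \ref{ppwires}), so that once $s$ is known every input can be revealed safely. The whole game then collapses to a single guess of $s$, the optimal success probability is $\max\{\vartheta,1-\vartheta\}$, and choosing the right guess \emph{is} the weak MAJSAT answer. You instead force the output to $0$, so the consistent configurations are the $2^m-N$ falsifying assignments, and you compare the value $1/(2^m-N)$ to the threshold $2^{-(m-1)}$. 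Both routes can be made to work; the paper's buys an essentially trivial probability analysis (one bet, two outcomes) at the price of the feedback gadget, while yours avoids the feedback wiring but shifts all the weight onto the claim that no strategy beats $1/(2^m-N)$.

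That claim is where your argument has a concrete gap, and it is not quite the one you flagged. Your justification --- ``every square it revealed would be safe in both and the revealed safe sets would exhaust both'' --- tacitly assumes that a deterministic adaptive strategy uncovers the \emph{same} set of squares whether the hidden configuration is $a$ or $b$. That is only true if no surviving click can ever show a number that differs between consistent configurations; otherwise the two runs diverge after such a click, and the strategy can go on to clear $a$ on one branch and $b$ on the other, beating $1/K$. So you need the additional structural property that every covered compartment, conditioned on not being a mine, shows a number that is already determined by the survival event itself; only then does the strategy tree degenerate into a single path of bets, each bet being exactly the value of one wire, and only then does ``one configuration per strategy'' follow. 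This is precisely the issue the paper confronts with the bold compartments of the adder--multiplier combi (revealing the bold $v$ tells you $v=0$, but then $a'=c'=1$ was already forced) and the reason the hexagonal phase-shifter is rejected as an invertor; the gadgets are engineered to have the property, but your proof must state and check it rather than subsume it under the whitespace/skeleton reasoning, which only covers the compartments that are safe in \emph{all} configurations. The rest of your bookkeeping --- fixed polynomial-time mine count, uniform posterior, deterministic gates giving a bijection with falsifying assignments, the threshold arithmetic under the promise $N\neq 2^{m-1}$, and the padding for the opening click --- is sound.
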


\begin{proof}
The only thing that we need to do in addition to the NP-completeness
proof of Minesweeper solvability is to wire back the output $s$
of our circuit to the starting points of the inputs, in such a way that
these inputs can be revealed when $s$ is known. This is done in figure 
\ref{ppwires}.

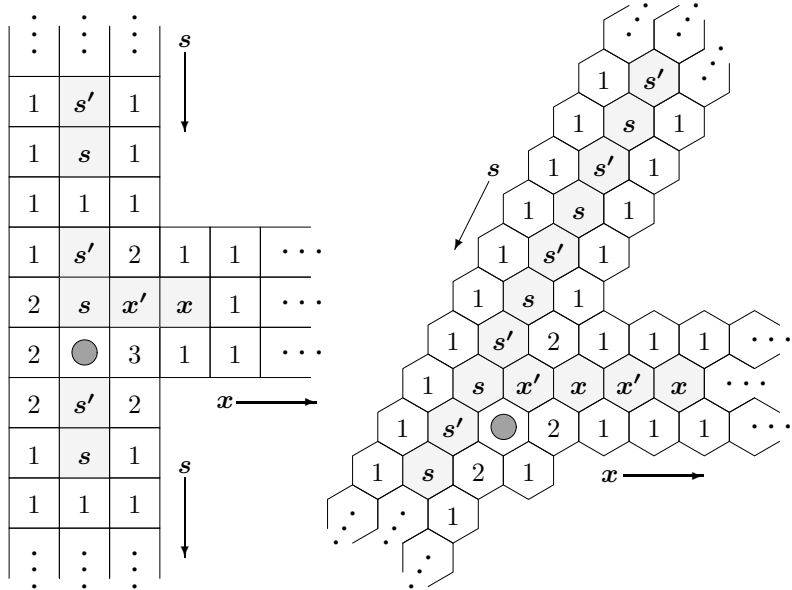
\begin{figure}[!htp]
\begin{center}
\begin{picture}(117.17,215.33)(0.00,-3.17)

\mathversion{bold}
\filltype{shade}
\texture{40004000 0 0 0 00400040 0 0 0
         40004000 0 0 0 00400040 0 0 0
         40004000 0 0 0 00400040 0 0 0
         40004000 0 0 0 00400040 0 0 0}
\whiten

\path(0.00,38.00)(19.00,38.00)(19.00,19.00)(0.00,19.00)(0.00,38.00)
\put(9.50,24.50){\makebox(0,0)[b]{1}}

\path(0.00,0.00)(0.00,19.00)(19.00,19.00)(19.00,0.00)
\put(9.50,9.50){\makebox(0,0){$\cdot$}}
\put(9.50,3.17){\makebox(0,0){$\cdot$}}
\put(9.50,-3.17){\makebox(0,0){$\cdot$}}

\path(0.00,57.00)(19.00,57.00)(19.00,38.00)(0.00,38.00)(0.00,57.00)
\put(9.50,43.50){\makebox(0,0)[b]{1}}

\path(0.00,76.00)(19.00,76.00)(19.00,57.00)(0.00,57.00)(0.00,76.00)
\put(9.50,62.50){\makebox(0,0)[b]{2}}

\path(0.00,95.00)(19.00,95.00)(19.00,76.00)(0.00,76.00)(0.00,95.00)
\put(9.50,81.50){\makebox(0,0)[b]{2}}

\path(0.00,114.00)(19.00,114.00)(19.00,95.00)(0.00,95.00)(0.00,114.00)
\put(9.50,100.50){\makebox(0,0)[b]{2}}

\path(0.00,133.00)(19.00,133.00)(19.00,114.00)(0.00,114.00)(0.00,133.00)
\put(9.50,119.50){\makebox(0,0)[b]{1}}

\path(0.00,152.00)(19.00,152.00)(19.00,133.00)(0.00,133.00)(0.00,152.00)
\put(9.50,138.50){\makebox(0,0)[b]{1}}

\path(0.00,171.00)(19.00,171.00)(19.00,152.00)(0.00,152.00)(0.00,171.00)
\put(9.50,157.50){\makebox(0,0)[b]{1}}

\path(0.00,190.00)(19.00,190.00)(19.00,171.00)(0.00,171.00)(0.00,190.00)
\put(9.50,176.50){\makebox(0,0)[b]{1}}

\path(19.00,209.00)(19.00,190.00)(0.00,190.00)(0.00,209.00)
\put(9.50,199.50){\makebox(0,0){$\cdot$}}
\put(9.50,205.83){\makebox(0,0){$\cdot$}}
\put(9.50,212.17){\makebox(0,0){$\cdot$}}

\path(19.00,0.00)(19.00,19.00)(38.00,19.00)(38.00,0.00)
\put(28.50,9.50){\makebox(0,0){$\cdot$}}
\put(28.50,3.17){\makebox(0,0){$\cdot$}}
\put(28.50,-3.17){\makebox(0,0){$\cdot$}}

\path(19.00,38.00)(38.00,38.00)(38.00,19.00)(19.00,19.00)(19.00,38.00)
\put(28.50,24.50){\makebox(0,0)[b]{1}}

\shade
\path(19.00,57.00)(38.00,57.00)(38.00,38.00)(19.00,38.00)(19.00,57.00)
\put(28.50,43.50){\makebox(0,0)[b]{$s$}}

\shade
\path(19.00,76.00)(38.00,76.00)(38.00,57.00)(19.00,57.00)(19.00,76.00)
\put(28.50,62.50){\makebox(0,0)[b]{$s'$}}

\shade
\path(19.00,114.00)(38.00,114.00)(38.00,95.00)(19.00,95.00)(19.00,114.00)
\put(28.50,100.50){\makebox(0,0)[b]{$s$}}

\shade
\path(19.00,133.00)(38.00,133.00)(38.00,114.00)(19.00,114.00)(19.00,133.00)
\put(28.50,119.50){\makebox(0,0)[b]{$s'$}}

\path(19.00,152.00)(38.00,152.00)(38.00,133.00)(19.00,133.00)(19.00,152.00)
\put(28.50,138.50){\makebox(0,0)[b]{1}}

\shade
\path(19.00,171.00)(38.00,171.00)(38.00,152.00)(19.00,152.00)(19.00,171.00)
\put(28.50,157.50){\makebox(0,0)[b]{$s$}}

\shade
\path(19.00,190.00)(38.00,190.00)(38.00,171.00)(19.00,171.00)(19.00,190.00)
\put(28.50,176.50){\makebox(0,0)[b]{$s'$}}

\path(38.00,209.00)(38.00,190.00)(19.00,190.00)(19.00,209.00)
\put(28.50,199.50){\makebox(0,0){$\cdot$}}
\put(28.50,205.83){\makebox(0,0){$\cdot$}}
\put(28.50,212.17){\makebox(0,0){$\cdot$}}

\path(38.00,0.00)(38.00,19.00)(57.00,19.00)(57.00,0.00)
\put(47.50,9.50){\makebox(0,0){$\cdot$}}
\put(47.50,3.17){\makebox(0,0){$\cdot$}}
\put(47.50,-3.17){\makebox(0,0){$\cdot$}}

\path(38.00,38.00)(57.00,38.00)(57.00,19.00)(38.00,19.00)(38.00,38.00)
\put(47.50,24.50){\makebox(0,0)[b]{1}}

\path(38.00,57.00)(57.00,57.00)(57.00,38.00)(38.00,38.00)(38.00,57.00)
\put(47.50,43.50){\makebox(0,0)[b]{1}}

\path(38.00,76.00)(57.00,76.00)(57.00,57.00)(38.00,57.00)(38.00,76.00)
\put(47.50,62.50){\makebox(0,0)[b]{2}}

\path(38.00,95.00)(57.00,95.00)(57.00,76.00)(38.00,76.00)(38.00,95.00)
\put(47.50,81.50){\makebox(0,0)[b]{3}}

\shade
\path(38.00,114.00)(57.00,114.00)(57.00,95.00)(38.00,95.00)(38.00,114.00)
\put(47.50,100.50){\makebox(0,0)[b]{$x'$}}

\path(38.00,133.00)(57.00,133.00)(57.00,114.00)(38.00,114.00)(38.00,133.00)
\put(47.50,119.50){\makebox(0,0)[b]{2}}

\path(38.00,152.00)(57.00,152.00)(57.00,133.00)(38.00,133.00)(38.00,152.00)
\put(47.50,138.50){\makebox(0,0)[b]{1}}

\path(38.00,171.00)(57.00,171.00)(57.00,152.00)(38.00,152.00)(38.00,171.00)
\put(47.50,157.50){\makebox(0,0)[b]{1}}

\path(38.00,190.00)(57.00,190.00)(57.00,171.00)(38.00,171.00)(38.00,190.00)
\put(47.50,176.50){\makebox(0,0)[b]{1}}

\path(57.00,209.00)(57.00,190.00)(38.00,190.00)(38.00,209.00)
\put(47.50,199.50){\makebox(0,0){$\cdot$}}
\put(47.50,205.83){\makebox(0,0){$\cdot$}}
\put(47.50,212.17){\makebox(0,0){$\cdot$}}

\path(57.00,95.00)(76.00,95.00)(76.00,76.00)(57.00,76.00)(57.00,95.00)
\put(66.50,81.50){\makebox(0,0)[b]{1}}

\shade
\path(57.00,114.00)(76.00,114.00)(76.00,95.00)(57.00,95.00)(57.00,114.00)
\put(66.50,100.50){\makebox(0,0)[b]{$x$}}

\path(57.00,133.00)(76.00,133.00)(76.00,114.00)(57.00,114.00)(57.00,133.00)
\put(66.50,119.50){\makebox(0,0)[b]{1}}

\path(76.00,95.00)(95.00,95.00)(95.00,76.00)(76.00,76.00)(76.00,95.00)
\put(85.50,81.50){\makebox(0,0)[b]{1}}

\path(76.00,114.00)(95.00,114.00)(95.00,95.00)(76.00,95.00)(76.00,114.00)
\put(85.50,100.50){\makebox(0,0)[b]{1}}

\path(76.00,133.00)(95.00,133.00)(95.00,114.00)(76.00,114.00)(76.00,133.00)
\put(85.50,119.50){\makebox(0,0)[b]{1}}

\path(114.00,76.00)(95.00,76.00)(95.00,95.00)(114.00,95.00)
\put(104.50,85.50){\makebox(0,0){$\cdot$}}
\put(110.83,85.50){\makebox(0,0){$\cdot$}}
\put(117.17,85.50){\makebox(0,0){$\cdot$}}

\path(114.00,95.00)(95.00,95.00)(95.00,114.00)(114.00,114.00)
\put(104.50,104.50){\makebox(0,0){$\cdot$}}
\put(110.83,104.50){\makebox(0,0){$\cdot$}}
\put(117.17,104.50){\makebox(0,0){$\cdot$}}

\path(114.00,114.00)(95.00,114.00)(95.00,133.00)(114.00,133.00)
\put(104.50,123.50){\makebox(0,0){$\cdot$}}
\put(110.83,123.50){\makebox(0,0){$\cdot$}}
\put(117.17,123.50){\makebox(0,0){$\cdot$}}

\texture{cccccccc 0 0 0 cccccccc 0 0 0
         cccccccc 0 0 0 cccccccc 0 0 0
         cccccccc 0 0 0 cccccccc 0 0 0
         cccccccc 0 0 0 cccccccc 0 0 0}
\whiten

\path(19.00,95.00)(38.00,95.00)(38.00,76.00)(19.00,76.00)(19.00,95.00)
\put(28.50,85.50){\makebox(0,0){\circle*{9.50}}}

\put(66.5,201){\makebox(0,0)[b]{$s$}}
\put(66.5,199){\vector(0,-1){30}}
\put(66.5,40){\makebox(0,0)[b]{$s$}}
\put(66.5,38){\vector(0,-1){30}}
\put(85,66.5){\makebox(0,0)[r]{$x$}}
\put(86,66.5){\vector(1,0){30}}
\mathversion{normal}

\end{picture}
\begin{picture}(174.17,220.00)(0.00,0.00)

\mathversion{bold}
\filltype{shade}
\texture{40004000 0 0 0 00400040 0 0 0
         40004000 0 0 0 00400040 0 0 0
         40004000 0 0 0 00400040 0 0 0
         40004000 0 0 0 00400040 0 0 0}
\whiten

\path(9.50,49.50)(9.50,38.50)(19.00,33.00)(28.50,38.50)
     (28.50,49.50)(19.00,55.00)(9.50,49.50)
\put(19.00,40.00){\makebox(0,0)[b]{1}}

\path(9.50,16.50)(19.00,22.00)(19.00,33.00)
     (9.50,38.50)(0.00,33.00)(0.00,22.00)
\put(9.50,27.50){\makebox(0,0){$\cdot$}}
\put(6.33,22.00){\makebox(0,0){$\cdot$}}
\put(3.17,16.50){\makebox(0,0){$\cdot$}}

\path(28.50,16.50)(38.00,22.00)(38.00,33.00)
     (28.50,38.50)(19.00,33.00)(19.00,22.00)
\put(28.50,27.50){\makebox(0,0){$\cdot$}}
\put(25.33,22.00){\makebox(0,0){$\cdot$}}
\put(22.17,16.50){\makebox(0,0){$\cdot$}}

\path(19.00,66.00)(19.00,55.00)(28.50,49.50)(38.00,55.00)
     (38.00,66.00)(28.50,71.50)(19.00,66.00)
\put(28.50,56.50){\makebox(0,0)[b]{1}}

\path(38.00,0.00)(47.50,5.50)(47.50,16.50)
     (38.00,22.00)(28.50,16.50)(28.50,5.50)
\put(38.00,11.00){\makebox(0,0){$\cdot$}}
\put(34.83,5.50){\makebox(0,0){$\cdot$}}
\put(31.67,0.00){\makebox(0,0){$\cdot$}}

\shade
\path(28.50,49.50)(28.50,38.50)(38.00,33.00)(47.50,38.50)
     (47.50,49.50)(38.00,55.00)(28.50,49.50)
\put(38.00,40.00){\makebox(0,0)[b]{$s$}}

\path(28.50,82.50)(28.50,71.50)(38.00,66.00)(47.50,71.50)
     (47.50,82.50)(38.00,88.00)(28.50,82.50)
\put(38.00,73.00){\makebox(0,0)[b]{1}}

\path(38.00,33.00)(38.00,22.00)(47.50,16.50)(57.00,22.00)
     (57.00,33.00)(47.50,38.50)(38.00,33.00)
\put(47.50,23.50){\makebox(0,0)[b]{1}}

\shade
\path(38.00,66.00)(38.00,55.00)(47.50,49.50)(57.00,55.00)
     (57.00,66.00)(47.50,71.50)(38.00,66.00)
\put(47.50,56.50){\makebox(0,0)[b]{$s'$}}

\path(38.00,99.00)(38.00,88.00)(47.50,82.50)(57.00,88.00)
     (57.00,99.00)(47.50,104.50)(38.00,99.00)
\put(47.50,89.50){\makebox(0,0)[b]{1}}

\path(47.50,49.50)(47.50,38.50)(57.00,33.00)(66.50,38.50)
     (66.50,49.50)(57.00,55.00)(47.50,49.50)
\put(57.00,40.00){\makebox(0,0)[b]{2}}

\shade
\path(47.50,82.50)(47.50,71.50)(57.00,66.00)(66.50,71.50)
     (66.50,82.50)(57.00,88.00)(47.50,82.50)
\put(57.00,73.00){\makebox(0,0)[b]{$s$}}

\path(47.50,115.50)(47.50,104.50)(57.00,99.00)(66.50,104.50)
     (66.50,115.50)(57.00,121.00)(47.50,115.50)
\put(57.00,106.00){\makebox(0,0)[b]{1}}

\shade
\path(57.00,99.00)(57.00,88.00)(66.50,82.50)(76.00,88.00)
     (76.00,99.00)(66.50,104.50)(57.00,99.00)
\put(66.50,89.50){\makebox(0,0)[b]{$s'$}}

\path(57.00,132.00)(57.00,121.00)(66.50,115.50)(76.00,121.00)
     (76.00,132.00)(66.50,137.50)(57.00,132.00)
\put(66.50,122.50){\makebox(0,0)[b]{1}}

\path(66.50,49.50)(66.50,38.50)(76.00,33.00)(85.50,38.50)
     (85.50,49.50)(76.00,55.00)(66.50,49.50)
\put(76.00,40.00){\makebox(0,0)[b]{1}}

\shade
\path(66.50,82.50)(66.50,71.50)(76.00,66.00)(85.50,71.50)
     (85.50,82.50)(76.00,88.00)(66.50,82.50)
\put(76.00,73.00){\makebox(0,0)[b]{$x'$}}

\shade
\path(66.50,115.50)(66.50,104.50)(76.00,99.00)(85.50,104.50)
     (85.50,115.50)(76.00,121.00)(66.50,115.50)
\put(76.00,106.00){\makebox(0,0)[b]{$s$}}

\path(66.50,148.50)(66.50,137.50)(76.00,132.00)(85.50,137.50)
     (85.50,148.50)(76.00,154.00)(66.50,148.50)
\put(76.00,139.00){\makebox(0,0)[b]{1}}

\path(76.00,66.00)(76.00,55.00)(85.50,49.50)(95.00,55.00)
     (95.00,66.00)(85.50,71.50)(76.00,66.00)
\put(85.50,56.50){\makebox(0,0)[b]{2}}

\path(76.00,99.00)(76.00,88.00)(85.50,82.50)(95.00,88.00)
     (95.00,99.00)(85.50,104.50)(76.00,99.00)
\put(85.50,89.50){\makebox(0,0)[b]{2}}

\shade
\path(76.00,132.00)(76.00,121.00)(85.50,115.50)(95.00,121.00)
     (95.00,132.00)(85.50,137.50)(76.00,132.00)
\put(85.50,122.50){\makebox(0,0)[b]{$s'$}}

\path(76.00,165.00)(76.00,154.00)(85.50,148.50)(95.00,154.00)
     (95.00,165.00)(85.50,170.50)(76.00,165.00)
\put(85.50,155.50){\makebox(0,0)[b]{1}}

\shade
\path(85.50,82.50)(85.50,71.50)(95.00,66.00)(104.50,71.50)
     (104.50,82.50)(95.00,88.00)(85.50,82.50)
\put(95.00,73.00){\makebox(0,0)[b]{$x$}}

\path(85.50,115.50)(85.50,104.50)(95.00,99.00)(104.50,104.50)
     (104.50,115.50)(95.00,121.00)(85.50,115.50)
\put(95.00,106.00){\makebox(0,0)[b]{1}}

\shade
\path(85.50,148.50)(85.50,137.50)(95.00,132.00)(104.50,137.50)
     (104.50,148.50)(95.00,154.00)(85.50,148.50)
\put(95.00,139.00){\makebox(0,0)[b]{$s$}}

\path(85.50,181.50)(85.50,170.50)(95.00,165.00)(104.50,170.50)
     (104.50,181.50)(95.00,187.00)(85.50,181.50)
\put(95.00,172.00){\makebox(0,0)[b]{1}}

\path(95.00,66.00)(95.00,55.00)(104.50,49.50)(114.00,55.00)
     (114.00,66.00)(104.50,71.50)(95.00,66.00)
\put(104.50,56.50){\makebox(0,0)[b]{1}}

\path(95.00,99.00)(95.00,88.00)(104.50,82.50)(114.00,88.00)
     (114.00,99.00)(104.50,104.50)(95.00,99.00)
\put(104.50,89.50){\makebox(0,0)[b]{1}}

\path(95.00,132.00)(95.00,121.00)(104.50,115.50)(114.00,121.00)
     (114.00,132.00)(104.50,137.50)(95.00,132.00)
\put(104.50,122.50){\makebox(0,0)[b]{1}}

\shade
\path(95.00,165.00)(95.00,154.00)(104.50,148.50)(114.00,154.00)
     (114.00,165.00)(104.50,170.50)(95.00,165.00)
\put(104.50,155.50){\makebox(0,0)[b]{$s'$}}

\path(95.00,198.00)(95.00,187.00)(104.50,181.50)(114.00,187.00)
     (114.00,198.00)(104.50,203.50)(95.00,198.00)
\put(104.50,188.50){\makebox(0,0)[b]{1}}

\shade
\path(104.50,82.50)(104.50,71.50)(114.00,66.00)(123.50,71.50)
     (123.50,82.50)(114.00,88.00)(104.50,82.50)
\put(114.00,73.00){\makebox(0,0)[b]{$x'$}}

\path(104.50,148.50)(104.50,137.50)(114.00,132.00)(123.50,137.50)
     (123.50,148.50)(114.00,154.00)(104.50,148.50)
\put(114.00,139.00){\makebox(0,0)[b]{1}}

\shade
\path(104.50,181.50)(104.50,170.50)(114.00,165.00)(123.50,170.50)
     (123.50,181.50)(114.00,187.00)(104.50,181.50)
\put(114.00,172.00){\makebox(0,0)[b]{$s$}}

\path(114.00,220.00)(104.50,214.50)(104.50,203.50)
     (114.00,198.00)(123.50,203.50)(123.50,214.50)
\put(114.00,209.00){\makebox(0,0){$\cdot$}}
\put(117.17,214.50){\makebox(0,0){$\cdot$}}
\put(120.33,220.00){\makebox(0,0){$\cdot$}}

\path(114.00,66.00)(114.00,55.00)(123.50,49.50)(133.00,55.00)
     (133.00,66.00)(123.50,71.50)(114.00,66.00)
\put(123.50,56.50){\makebox(0,0)[b]{1}}

\path(114.00,99.00)(114.00,88.00)(123.50,82.50)(133.00,88.00)
     (133.00,99.00)(123.50,104.50)(114.00,99.00)
\put(123.50,89.50){\makebox(0,0)[b]{1}}

\path(114.00,165.00)(114.00,154.00)(123.50,148.50)(133.00,154.00)
     (133.00,165.00)(123.50,170.50)(114.00,165.00)
\put(123.50,155.50){\makebox(0,0)[b]{1}}

\shade
\path(114.00,198.00)(114.00,187.00)(123.50,181.50)(133.00,187.00)
     (133.00,198.00)(123.50,203.50)(114.00,198.00)
\put(123.50,188.50){\makebox(0,0)[b]{$s'$}}

\shade
\path(123.50,82.50)(123.50,71.50)(133.00,66.00)(142.50,71.50)
     (142.50,82.50)(133.00,88.00)(123.50,82.50)
\put(133.00,73.00){\makebox(0,0)[b]{$x$}}

\path(123.50,181.50)(123.50,170.50)(133.00,165.00)(142.50,170.50)
     (142.50,181.50)(133.00,187.00)(123.50,181.50)
\put(133.00,172.00){\makebox(0,0)[b]{1}}

\path(133.00,220.00)(123.50,214.50)(123.50,203.50)
     (133.00,198.00)(142.50,203.50)(142.50,214.50)
\put(133.00,209.00){\makebox(0,0){$\cdot$}}
\put(136.17,214.50){\makebox(0,0){$\cdot$}}
\put(139.33,220.00){\makebox(0,0){$\cdot$}}

\path(133.00,66.00)(133.00,55.00)(142.50,49.50)(152.00,55.00)
     (152.00,66.00)(142.50,71.50)(133.00,66.00)
\put(142.50,56.50){\makebox(0,0)[b]{1}}

\path(133.00,99.00)(133.00,88.00)(142.50,82.50)(152.00,88.00)
     (152.00,99.00)(142.50,104.50)(133.00,99.00)
\put(142.50,89.50){\makebox(0,0)[b]{1}}

\path(142.50,203.50)(133.00,198.00)(133.00,187.00)
     (142.50,181.50)(152.00,187.00)(152.00,198.00)
\put(142.50,192.50){\makebox(0,0){$\cdot$}}
\put(145.67,198.00){\makebox(0,0){$\cdot$}}
\put(148.83,203.50){\makebox(0,0){$\cdot$}}

\path(161.50,82.50)(152.00,88.00)(142.50,82.50)
     (142.50,71.50)(152.00,66.00)(161.50,71.50)
\put(152.00,77.00){\makebox(0,0){$\cdot$}}
\put(158.33,77.00){\makebox(0,0){$\cdot$}}
\put(164.67,77.00){\makebox(0,0){$\cdot$}}

\path(171.00,66.00)(161.50,71.50)(152.00,66.00)
     (152.00,55.00)(161.50,49.50)(171.00,55.00)
\put(161.50,60.50){\makebox(0,0){$\cdot$}}
\put(167.83,60.50){\makebox(0,0){$\cdot$}}
\put(174.17,60.50){\makebox(0,0){$\cdot$}}

\path(171.00,99.00)(161.50,104.50)(152.00,99.00)
     (152.00,88.00)(161.50,82.50)(171.00,88.00)
\put(161.50,93.50){\makebox(0,0){$\cdot$}}
\put(167.83,93.50){\makebox(0,0){$\cdot$}}
\put(174.17,93.50){\makebox(0,0){$\cdot$}}

\texture{cccccccc 0 0 0 cccccccc 0 0 0
         cccccccc 0 0 0 cccccccc 0 0 0
         cccccccc 0 0 0 cccccccc 0 0 0
         cccccccc 0 0 0 cccccccc 0 0 0}
\whiten

\path(57.00,66.00)(57.00,55.00)(66.50,49.50)(76.00,55.00)
     (76.00,66.00)(66.50,71.50)(57.00,66.00)
\put(66.50,60.50){\makebox(0,0){\circle*{9.50}}}

\put(60,155.5){\makebox(0,0)[lb]{$s$}}
\put(61,153.5){\vector(-1,-2){13}}
\put(110,42){\makebox(0,0)[r]{$x$}}
\put(112,42){\vector(1,0){30}}
\mathversion{normal}

\end{picture}
\end{center} \vspace*{-10pt}
\caption{$x$ can be revealed when $s$ is known} \label{ppwires}
\end{figure}

Guessing on other spots than one with $s$ or $s'$ first is useless. One can already
see in advance that such a guess does not give more information than that
a subcircuit has a certain value, in case one does not die. It is equally 
useful just to assume that that subcircuit has a certain value without checking 
it. Thus we have to round the number $\vartheta$ that satisfies
$$
\forrandom x_1 \forrandom x_2 \cdots \forrandom x_n 
\Pr[f(x_1,x_2,\ldots,x_n)]=\vartheta
$$ 
to either $0$ or $1$, with a rounding error of at most $0.5$.
\end{proof}

See \cite[\S 4]{randomq} for the meaning of the random quantifier
$\forrandom$. This quantifier will be used again in the next section.

\section{PSPACE-completeness of Minesweeper}

Now we know that Minesweeper in PP-hard, but the question remains if
Minesweeper is complete for some known complexity class. Since one can show that
Minesweeper is in PSPACE, it seems natural to ask whether Minesweeper is 
PSPACE-complete.

\begin{theorem}
Minesweeper is PSPACE-complete when the probability of revealing all mines 
may be infinitesimal.
\end{theorem}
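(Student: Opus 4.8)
The plan is to reduce from \emph{stochastic satisfiability} (SSAT): deciding, for a quantifier prefix $Q_1 x_1 Q_2 x_2 \cdots Q_n x_n$ with each $Q_i \in \{\exists, \forrandom\}$ and a boolean circuit $f$, whether the recursively defined value of $\Pr[f(x_1,\ldots,x_n)]$—a maximum over the existential bits and an average over the random bits—exceeds a given threshold. With the random quantifier $\forrandom$ of \cite{randomq}, this alternation of maxima and averages is PSPACE-complete, and it is the natural strengthening of the weak MAJSAT reduction used for PP-hardness: there every prefix quantifier was $\forrandom$ and the player made a single guess of the output $s$, whereas here we must interleave forced random revelations with genuine strategic choices.

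First I would lay out the circuit for $f$ on the board using the components already built (wires, splitters, curves, the adder-multiplier combi, and, where the grid demands it, crossovers assembled from three splitters and three mod-$2$ adders as in figure \ref{crossadders}), exactly as in the NP-completeness and PP-hardness constructions. Onto this I would graft, for each variable $x_i$, an input gadget together with a feedback wire generalizing figure \ref{ppwires}. The feedback is staged so that the input square for $x_{i+1}$ becomes revealable only after the value of $x_i$ has been resolved; chaining these along the prefix $x_1,\ldots,x_n$ forces the player to fix the variables strictly in quantifier order, and, as already observed for the PP construction, guessing anywhere else first yields no usable information.

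The two quantifier types are realized by two flavors of input gadget. A random variable $\forrandom x_i$ gets a gadget whose forced safe reveal leaks nature's uniformly chosen bit for $x_i$; the player has no choice and contributes an honest factor of $\tfrac12$ to each branch, reproducing the averaging semantics. An existential variable $\exists x_i$ is instead a gadget offering two squares, both safe to reveal at the moment the gadget unlocks, whose reveal commits the phase of the $x_i$-wire to $0$ or to $1$. Because the commitment is free of immediate risk but propagates through $f$ and the downstream feedback, the player's \emph{strategy}—a map from the random bits seen so far to the choice made next—together with the random bits determines whether the board ends in the unique configuration in which every remaining mine can be located. Maximizing over strategies and averaging over the random reveals then makes the maximal probability of locating all mines equal to the SSAT value of $(Q_1 x_1 \cdots Q_n x_n)\,f$.

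The main obstacle, and the reason the statement permits an \emph{infinitesimal} success probability, is making this equality exact at the threshold rather than merely approximate. Two things must be controlled. First, the player must be prevented from improving the honest value by gambling out of order or by leaving variables unset; I would enforce this through the staged feedback above and by giving every premature guess a strictly worse continuation, so that the optimal policy is exactly the SSAT strategy. Second, to separate the $>$-case from the $=$-case of the threshold, and to keep the polynomially many nested maxima and averages from collapsing, the construction must tolerate winning probabilities that are exponentially small, since a depth-$n$ alternation of maximum and $\tfrac12(\cdot+\cdot)$ naturally produces values of magnitude $2^{-\Theta(n)}$ and the decision compares two such quantities. I would therefore calibrate the final payoff gadget so that the maximal success probability sits an infinitesimal amount above the threshold exactly when the SSAT instance is positive, completing the reduction; membership of Minesweeper in PSPACE is routine, since an optimal revealing policy can be evaluated by the usual recursive alternating search in polynomial space.
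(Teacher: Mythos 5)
The central gap is your existential gadget. You posit ``two squares, both safe to reveal at the moment the gadget unlocks, whose reveal commits the phase of the $x_i$-wire to $0$ or to $1$.'' No such gadget can exist: the mine configuration is fixed before play, and revealing a square only extracts information --- it never sets anything. If the $x_i$-wire genuinely admits both phases, then in every consistent configuration exactly one of the two literal squares carries a mine, so ``choosing'' a phase is necessarily a gamble that the player survives only with probability $\Pr[x_i = v]$, i.e.\ $\tfrac12$. Consequently the optimal success probability is not the SSAT value $\vartheta$ but roughly $2^{-e}$ times a quantity involving $\vartheta$, where $e$ is the number of existential variables; this, and not the depth of the max/average alternation, is the reason the theorem must allow infinitesimal success probabilities. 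The paper accepts this loss (its winning probability lies between $\tfrac34 \cdot 2^{-e}$ and $2^{-e}$) and, because the absolute success probability is therefore not directly comparable to any threshold you could ``calibrate,'' it reads the answer off from a different place: the optimal strategy's \emph{choice between two first guesses}, namely $s_2 = \max\{s_1, x_{n+1}\}$ (winning probability $2^{-e}(\tfrac12 + \tfrac12\vartheta)$) versus a decoy $s_3 = \max\{x_{n+2},\ldots,x_{n+1+m}\}$ (winning probability $2^{-e}(1 - 2^{-m})$), with $m$ tuned so that the comparison sits exactly at the QBF or SSAT threshold. Your final ``calibrate the payoff gadget'' step is precisely where this mechanism is needed and is missing.

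A second omission: you do not say how a uniformly random bit --- independent of everything already revealed, and unlearnable before its prescribed turn --- is actually realized on the board. The paper does this by replacing the start of each random variable by circuitry for $x_j = z_{1j} + z_{2j} + \cdots + z_{jj} \bmod 2$, where each $z_{ij}$ is a free bit that only becomes revealable once $x_i$ (respectively $s_2$ or $s_3$) is known; since at least one summand stays hidden until the right moment, $x_j$ remains uniform conditioned on the player's information, and the quantifier order is enforced at the same stroke. Your staged-feedback idea is the right instinct, but without something like this XOR construction the claim that a ``forced safe reveal leaks nature's uniformly chosen bit'' is unsupported.
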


\begin{proof}
Let $f(x_1,x_2,\ldots,x_n)$ be a boolean fomula and
$E$ be a proper subset of $\{1,2,\ldots,n\}$ of cardinality $e$.
Assume that either $m = 2$ or $m = n + 2 - e$. We define quantifiers
$A_i$ and $R_i$ for all positive $i \le n + 1 + m$, as follows:
\begin{align*}
A_i &= \left\{ \begin{array}{ll}
\exists & \text{if $i \in E$,} \\
\forall & \text{if $i \not\in E$,}
\end{array} \right. &
R_i &= \left\{ \begin{array}{ll}
\exists & \text{if $i \in E$,} \\
\forrandom & \text{if $i \not\in E$.}
\end{array} \right.
\end{align*}
When $m = n + 2 - e$, we reduce from QBF, by determining the validity of
\begin{equation} \label{mn2e}
A_1 x_1 A_2 x_2 \cdots A_n x_n f(x_1,x_2,\ldots,x_n)
\end{equation}
When $m = 2$, we reduce from a combination of QBF and
weak MAJSAT, say weak nonalternating SSAT, by rounding
\begin{equation} \label{m2}
\max\Big\{\vartheta \,\Big|\, R_1 x_1 R_2 x_2 \cdots R_n x_n 
\Pr[f(x_1,x_2,\ldots,x_n)]=\vartheta\Big\}
\end{equation}
to either $0$ or $1$, with a rounding error of at most 0.5. Weak 
nonalternating SSAT is PSPACE-complete as well, since (alternating) 
SSAT \cite[Th.\@ 2]{randomq} can be reduced to it in a similar manner as 
weak MAJSAT was reduced to MAJSAT in the proof of lemma \ref{wMS}.

The probability of removing all mines in our Minesweeper game will lie
between $\frac34 \cdot 2^{-e}$ and $2^{-e}$ inclusive, which is 
(unfortunately) infinitesimal for large $e$.

We lay down a circuit for $s_1 = f(x_1,x_2,\ldots,x_n)$ and circuitry for
$s_2 = \max\{s_1, x_{n+1}\}$ and $s_3 = \max\{x_{(n+1)+1},x_{(n+1)+2},\ldots,
x_{(n+1)+m}\}$. Besides the variables $x_i$ for all positive $i \le n+1 + m$, 
we make variables $z_{ij}$, where $1 \le i < j \le n + 1 + m$, such that
any variable $z_{ij}$ can be revealed when $x_i$ is known, as in figure
\ref{ppwires}. 

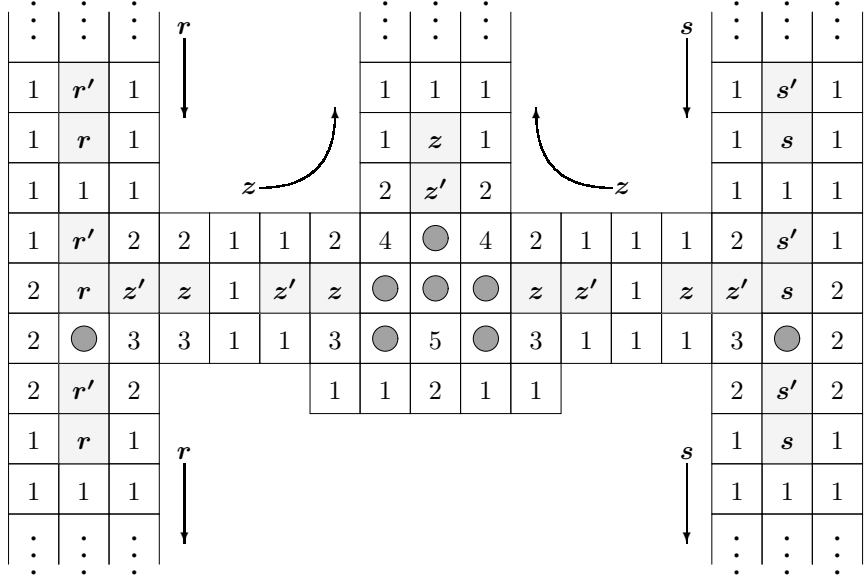
\begin{figure}[!htp]
\begin{center}
\begin{picture}(323.00,215.33)(0.00,-3.17)

\mathversion{bold}
\filltype{shade}
\texture{40004000 0 0 0 00400040 0 0 0
         40004000 0 0 0 00400040 0 0 0
         40004000 0 0 0 00400040 0 0 0
         40004000 0 0 0 00400040 0 0 0}
\whiten

\path(0.00,38.00)(19.00,38.00)(19.00,19.00)(0.00,19.00)(0.00,38.00)
\put(9.50,24.50){\makebox(0,0)[b]{1}}

\path(0.00,0.00)(0.00,19.00)(19.00,19.00)(19.00,0.00)
\put(9.50,9.50){\makebox(0,0){$\cdot$}}
\put(9.50,3.17){\makebox(0,0){$\cdot$}}
\put(9.50,-3.17){\makebox(0,0){$\cdot$}}

\path(0.00,57.00)(19.00,57.00)(19.00,38.00)(0.00,38.00)(0.00,57.00)
\put(9.50,43.50){\makebox(0,0)[b]{1}}

\path(0.00,76.00)(19.00,76.00)(19.00,57.00)(0.00,57.00)(0.00,76.00)
\put(9.50,62.50){\makebox(0,0)[b]{2}}

\path(0.00,95.00)(19.00,95.00)(19.00,76.00)(0.00,76.00)(0.00,95.00)
\put(9.50,81.50){\makebox(0,0)[b]{2}}

\path(0.00,114.00)(19.00,114.00)(19.00,95.00)(0.00,95.00)(0.00,114.00)
\put(9.50,100.50){\makebox(0,0)[b]{2}}

\path(0.00,133.00)(19.00,133.00)(19.00,114.00)(0.00,114.00)(0.00,133.00)
\put(9.50,119.50){\makebox(0,0)[b]{1}}

\path(0.00,152.00)(19.00,152.00)(19.00,133.00)(0.00,133.00)(0.00,152.00)
\put(9.50,138.50){\makebox(0,0)[b]{1}}

\path(0.00,171.00)(19.00,171.00)(19.00,152.00)(0.00,152.00)(0.00,171.00)
\put(9.50,157.50){\makebox(0,0)[b]{1}}

\path(0.00,190.00)(19.00,190.00)(19.00,171.00)(0.00,171.00)(0.00,190.00)
\put(9.50,176.50){\makebox(0,0)[b]{1}}

\path(19.00,209.00)(19.00,190.00)(0.00,190.00)(0.00,209.00)
\put(9.50,199.50){\makebox(0,0){$\cdot$}}
\put(9.50,205.83){\makebox(0,0){$\cdot$}}
\put(9.50,212.17){\makebox(0,0){$\cdot$}}

\path(19.00,0.00)(19.00,19.00)(38.00,19.00)(38.00,0.00)
\put(28.50,9.50){\makebox(0,0){$\cdot$}}
\put(28.50,3.17){\makebox(0,0){$\cdot$}}
\put(28.50,-3.17){\makebox(0,0){$\cdot$}}

\path(19.00,38.00)(38.00,38.00)(38.00,19.00)(19.00,19.00)(19.00,38.00)
\put(28.50,24.50){\makebox(0,0)[b]{1}}

\shade
\path(19.00,57.00)(38.00,57.00)(38.00,38.00)(19.00,38.00)(19.00,57.00)
\put(28.50,43.50){\makebox(0,0)[b]{$r$}}

\shade
\path(19.00,76.00)(38.00,76.00)(38.00,57.00)(19.00,57.00)(19.00,76.00)
\put(28.50,62.50){\makebox(0,0)[b]{$r'$}}

\shade
\path(19.00,114.00)(38.00,114.00)(38.00,95.00)(19.00,95.00)(19.00,114.00)
\put(28.50,100.50){\makebox(0,0)[b]{$r$}}

\shade
\path(19.00,133.00)(38.00,133.00)(38.00,114.00)(19.00,114.00)(19.00,133.00)
\put(28.50,119.50){\makebox(0,0)[b]{$r'$}}

\path(19.00,152.00)(38.00,152.00)(38.00,133.00)(19.00,133.00)(19.00,152.00)
\put(28.50,138.50){\makebox(0,0)[b]{1}}

\shade
\path(19.00,171.00)(38.00,171.00)(38.00,152.00)(19.00,152.00)(19.00,171.00)
\put(28.50,157.50){\makebox(0,0)[b]{$r$}}

\shade
\path(19.00,190.00)(38.00,190.00)(38.00,171.00)(19.00,171.00)(19.00,190.00)
\put(28.50,176.50){\makebox(0,0)[b]{$r'$}}

\path(38.00,209.00)(38.00,190.00)(19.00,190.00)(19.00,209.00)
\put(28.50,199.50){\makebox(0,0){$\cdot$}}
\put(28.50,205.83){\makebox(0,0){$\cdot$}}
\put(28.50,212.17){\makebox(0,0){$\cdot$}}

\path(38.00,0.00)(38.00,19.00)(57.00,19.00)(57.00,0.00)
\put(47.50,9.50){\makebox(0,0){$\cdot$}}
\put(47.50,3.17){\makebox(0,0){$\cdot$}}
\put(47.50,-3.17){\makebox(0,0){$\cdot$}}

\path(38.00,38.00)(57.00,38.00)(57.00,19.00)(38.00,19.00)(38.00,38.00)
\put(47.50,24.50){\makebox(0,0)[b]{1}}

\path(38.00,57.00)(57.00,57.00)(57.00,38.00)(38.00,38.00)(38.00,57.00)
\put(47.50,43.50){\makebox(0,0)[b]{1}}

\path(38.00,76.00)(57.00,76.00)(57.00,57.00)(38.00,57.00)(38.00,76.00)
\put(47.50,62.50){\makebox(0,0)[b]{2}}

\path(38.00,95.00)(57.00,95.00)(57.00,76.00)(38.00,76.00)(38.00,95.00)
\put(47.50,81.50){\makebox(0,0)[b]{3}}

\shade
\path(38.00,114.00)(57.00,114.00)(57.00,95.00)(38.00,95.00)(38.00,114.00)
\put(47.50,100.50){\makebox(0,0)[b]{$z'$}}

\path(38.00,133.00)(57.00,133.00)(57.00,114.00)(38.00,114.00)(38.00,133.00)
\put(47.50,119.50){\makebox(0,0)[b]{2}}

\path(38.00,152.00)(57.00,152.00)(57.00,133.00)(38.00,133.00)(38.00,152.00)
\put(47.50,138.50){\makebox(0,0)[b]{1}}

\path(38.00,171.00)(57.00,171.00)(57.00,152.00)(38.00,152.00)(38.00,171.00)
\put(47.50,157.50){\makebox(0,0)[b]{1}}

\path(38.00,190.00)(57.00,190.00)(57.00,171.00)(38.00,171.00)(38.00,190.00)
\put(47.50,176.50){\makebox(0,0)[b]{1}}

\path(57.00,209.00)(57.00,190.00)(38.00,190.00)(38.00,209.00)
\put(47.50,199.50){\makebox(0,0){$\cdot$}}
\put(47.50,205.83){\makebox(0,0){$\cdot$}}
\put(47.50,212.17){\makebox(0,0){$\cdot$}}

\path(57.00,95.00)(76.00,95.00)(76.00,76.00)(57.00,76.00)(57.00,95.00)
\put(66.50,81.50){\makebox(0,0)[b]{3}}

\shade
\path(57.00,114.00)(76.00,114.00)(76.00,95.00)(57.00,95.00)(57.00,114.00)
\put(66.50,100.50){\makebox(0,0)[b]{$z$}}

\path(57.00,133.00)(76.00,133.00)(76.00,114.00)(57.00,114.00)(57.00,133.00)
\put(66.50,119.50){\makebox(0,0)[b]{2}}

\path(76.00,95.00)(95.00,95.00)(95.00,76.00)(76.00,76.00)(76.00,95.00)
\put(85.50,81.50){\makebox(0,0)[b]{1}}

\path(76.00,114.00)(95.00,114.00)(95.00,95.00)(76.00,95.00)(76.00,114.00)
\put(85.50,100.50){\makebox(0,0)[b]{1}}

\path(76.00,133.00)(95.00,133.00)(95.00,114.00)(76.00,114.00)(76.00,133.00)
\put(85.50,119.50){\makebox(0,0)[b]{1}}

\path(95.00,95.00)(114.00,95.00)(114.00,76.00)(95.00,76.00)(95.00,95.00)
\put(104.50,81.50){\makebox(0,0)[b]{1}}

\shade
\path(95.00,114.00)(114.00,114.00)(114.00,95.00)(95.00,95.00)(95.00,114.00)
\put(104.50,100.50){\makebox(0,0)[b]{$z'$}}

\path(95.00,133.00)(114.00,133.00)(114.00,114.00)(95.00,114.00)(95.00,133.00)
\put(104.50,119.50){\makebox(0,0)[b]{1}}

\path(114.00,76.00)(133.00,76.00)(133.00,57.00)(114.00,57.00)(114.00,76.00)
\put(123.50,62.50){\makebox(0,0)[b]{1}}

\path(114.00,95.00)(133.00,95.00)(133.00,76.00)(114.00,76.00)(114.00,95.00)
\put(123.50,81.50){\makebox(0,0)[b]{3}}

\shade
\path(114.00,114.00)(133.00,114.00)(133.00,95.00)(114.00,95.00)(114.00,114.00)
\put(123.50,100.50){\makebox(0,0)[b]{$z$}}

\path(114.00,133.00)(133.00,133.00)(133.00,114.00)(114.00,114.00)(114.00,133.00)
\put(123.50,119.50){\makebox(0,0)[b]{2}}

\path(133.00,76.00)(152.00,76.00)(152.00,57.00)(133.00,57.00)(133.00,76.00)
\put(142.50,62.50){\makebox(0,0)[b]{1}}

\path(133.00,133.00)(152.00,133.00)(152.00,114.00)(133.00,114.00)(133.00,133.00)
\put(142.50,119.50){\makebox(0,0)[b]{4}}

\path(133.00,152.00)(152.00,152.00)(152.00,133.00)(133.00,133.00)(133.00,152.00)
\put(142.50,138.50){\makebox(0,0)[b]{2}}

\path(133.00,171.00)(152.00,171.00)(152.00,152.00)(133.00,152.00)(133.00,171.00)
\put(142.50,157.50){\makebox(0,0)[b]{1}}

\path(133.00,190.00)(152.00,190.00)(152.00,171.00)(133.00,171.00)(133.00,190.00)
\put(142.50,176.50){\makebox(0,0)[b]{1}}

\path(152.00,209.00)(152.00,190.00)(133.00,190.00)(133.00,209.00)
\put(142.50,199.50){\makebox(0,0){$\cdot$}}
\put(142.50,205.83){\makebox(0,0){$\cdot$}}
\put(142.50,212.17){\makebox(0,0){$\cdot$}}

\path(152.00,76.00)(171.00,76.00)(171.00,57.00)(152.00,57.00)(152.00,76.00)
\put(161.50,62.50){\makebox(0,0)[b]{2}}

\path(152.00,95.00)(171.00,95.00)(171.00,76.00)(152.00,76.00)(152.00,95.00)
\put(161.50,81.50){\makebox(0,0)[b]{5}}

\shade
\path(152.00,152.00)(171.00,152.00)(171.00,133.00)(152.00,133.00)(152.00,152.00)
\put(161.50,138.50){\makebox(0,0)[b]{$z'$}}

\shade
\path(152.00,171.00)(171.00,171.00)(171.00,152.00)(152.00,152.00)(152.00,171.00)
\put(161.50,157.50){\makebox(0,0)[b]{$z$}}

\path(152.00,190.00)(171.00,190.00)(171.00,171.00)(152.00,171.00)(152.00,190.00)
\put(161.50,176.50){\makebox(0,0)[b]{1}}

\path(171.00,209.00)(171.00,190.00)(152.00,190.00)(152.00,209.00)
\put(161.50,199.50){\makebox(0,0){$\cdot$}}
\put(161.50,205.83){\makebox(0,0){$\cdot$}}
\put(161.50,212.17){\makebox(0,0){$\cdot$}}

\path(171.00,76.00)(190.00,76.00)(190.00,57.00)(171.00,57.00)(171.00,76.00)
\put(180.50,62.50){\makebox(0,0)[b]{1}}

\path(171.00,133.00)(190.00,133.00)(190.00,114.00)(171.00,114.00)(171.00,133.00)
\put(180.50,119.50){\makebox(0,0)[b]{4}}

\path(171.00,152.00)(190.00,152.00)(190.00,133.00)(171.00,133.00)(171.00,152.00)
\put(180.50,138.50){\makebox(0,0)[b]{2}}

\path(171.00,171.00)(190.00,171.00)(190.00,152.00)(171.00,152.00)(171.00,171.00)
\put(180.50,157.50){\makebox(0,0)[b]{1}}

\path(171.00,190.00)(190.00,190.00)(190.00,171.00)(171.00,171.00)(171.00,190.00)
\put(180.50,176.50){\makebox(0,0)[b]{1}}

\path(190.00,209.00)(190.00,190.00)(171.00,190.00)(171.00,209.00)
\put(180.50,199.50){\makebox(0,0){$\cdot$}}
\put(180.50,205.83){\makebox(0,0){$\cdot$}}
\put(180.50,212.17){\makebox(0,0){$\cdot$}}

\path(190.00,76.00)(209.00,76.00)(209.00,57.00)(190.00,57.00)(190.00,76.00)
\put(199.50,62.50){\makebox(0,0)[b]{1}}

\path(190.00,95.00)(209.00,95.00)(209.00,76.00)(190.00,76.00)(190.00,95.00)
\put(199.50,81.50){\makebox(0,0)[b]{3}}

\shade
\path(190.00,114.00)(209.00,114.00)(209.00,95.00)(190.00,95.00)(190.00,114.00)
\put(199.50,100.50){\makebox(0,0)[b]{$z$}}

\path(190.00,133.00)(209.00,133.00)(209.00,114.00)(190.00,114.00)(190.00,133.00)
\put(199.50,119.50){\makebox(0,0)[b]{2}}

\path(209.00,95.00)(228.00,95.00)(228.00,76.00)(209.00,76.00)(209.00,95.00)
\put(218.50,81.50){\makebox(0,0)[b]{1}}

\shade
\path(209.00,114.00)(228.00,114.00)(228.00,95.00)(209.00,95.00)(209.00,114.00)
\put(218.50,100.50){\makebox(0,0)[b]{$z'$}}

\path(209.00,133.00)(228.00,133.00)(228.00,114.00)(209.00,114.00)(209.00,133.00)
\put(218.50,119.50){\makebox(0,0)[b]{1}}

\path(228.00,95.00)(247.00,95.00)(247.00,76.00)(228.00,76.00)(228.00,95.00)
\put(237.50,81.50){\makebox(0,0)[b]{1}}

\path(228.00,114.00)(247.00,114.00)(247.00,95.00)(228.00,95.00)(228.00,114.00)
\put(237.50,100.50){\makebox(0,0)[b]{1}}

\path(228.00,133.00)(247.00,133.00)(247.00,114.00)(228.00,114.00)(228.00,133.00)
\put(237.50,119.50){\makebox(0,0)[b]{1}}

\path(247.00,95.00)(266.00,95.00)(266.00,76.00)(247.00,76.00)(247.00,95.00)
\put(256.50,81.50){\makebox(0,0)[b]{1}}

\shade
\path(247.00,114.00)(266.00,114.00)(266.00,95.00)(247.00,95.00)(247.00,114.00)
\put(256.50,100.50){\makebox(0,0)[b]{$z$}}

\path(247.00,133.00)(266.00,133.00)(266.00,114.00)(247.00,114.00)(247.00,133.00)
\put(256.50,119.50){\makebox(0,0)[b]{1}}

\path(266.00,0.00)(266.00,19.00)(285.00,19.00)(285.00,0.00)
\put(275.50,9.50){\makebox(0,0){$\cdot$}}
\put(275.50,3.17){\makebox(0,0){$\cdot$}}
\put(275.50,-3.17){\makebox(0,0){$\cdot$}}

\path(266.00,38.00)(285.00,38.00)(285.00,19.00)(266.00,19.00)(266.00,38.00)
\put(275.50,24.50){\makebox(0,0)[b]{1}}

\path(266.00,57.00)(285.00,57.00)(285.00,38.00)(266.00,38.00)(266.00,57.00)
\put(275.50,43.50){\makebox(0,0)[b]{1}}

\path(266.00,76.00)(285.00,76.00)(285.00,57.00)(266.00,57.00)(266.00,76.00)
\put(275.50,62.50){\makebox(0,0)[b]{2}}

\path(266.00,95.00)(285.00,95.00)(285.00,76.00)(266.00,76.00)(266.00,95.00)
\put(275.50,81.50){\makebox(0,0)[b]{3}}

\shade
\path(266.00,114.00)(285.00,114.00)(285.00,95.00)(266.00,95.00)(266.00,114.00)
\put(275.50,100.50){\makebox(0,0)[b]{$z'$}}

\path(266.00,133.00)(285.00,133.00)(285.00,114.00)(266.00,114.00)(266.00,133.00)
\put(275.50,119.50){\makebox(0,0)[b]{2}}

\path(266.00,152.00)(285.00,152.00)(285.00,133.00)(266.00,133.00)(266.00,152.00)
\put(275.50,138.50){\makebox(0,0)[b]{1}}

\path(266.00,171.00)(285.00,171.00)(285.00,152.00)(266.00,152.00)(266.00,171.00)
\put(275.50,157.50){\makebox(0,0)[b]{1}}

\path(266.00,190.00)(285.00,190.00)(285.00,171.00)(266.00,171.00)(266.00,190.00)
\put(275.50,176.50){\makebox(0,0)[b]{1}}

\path(285.00,209.00)(285.00,190.00)(266.00,190.00)(266.00,209.00)
\put(275.50,199.50){\makebox(0,0){$\cdot$}}
\put(275.50,205.83){\makebox(0,0){$\cdot$}}
\put(275.50,212.17){\makebox(0,0){$\cdot$}}

\path(285.00,0.00)(285.00,19.00)(304.00,19.00)(304.00,0.00)
\put(294.50,9.50){\makebox(0,0){$\cdot$}}
\put(294.50,3.17){\makebox(0,0){$\cdot$}}
\put(294.50,-3.17){\makebox(0,0){$\cdot$}}

\path(285.00,38.00)(304.00,38.00)(304.00,19.00)(285.00,19.00)(285.00,38.00)
\put(294.50,24.50){\makebox(0,0)[b]{1}}

\shade
\path(285.00,57.00)(304.00,57.00)(304.00,38.00)(285.00,38.00)(285.00,57.00)
\put(294.50,43.50){\makebox(0,0)[b]{$s$}}

\shade
\path(285.00,76.00)(304.00,76.00)(304.00,57.00)(285.00,57.00)(285.00,76.00)
\put(294.50,62.50){\makebox(0,0)[b]{$s'$}}

\shade
\path(285.00,114.00)(304.00,114.00)(304.00,95.00)(285.00,95.00)(285.00,114.00)
\put(294.50,100.50){\makebox(0,0)[b]{$s$}}

\shade
\path(285.00,133.00)(304.00,133.00)(304.00,114.00)(285.00,114.00)(285.00,133.00)
\put(294.50,119.50){\makebox(0,0)[b]{$s'$}}

\path(285.00,152.00)(304.00,152.00)(304.00,133.00)(285.00,133.00)(285.00,152.00)
\put(294.50,138.50){\makebox(0,0)[b]{1}}

\shade
\path(285.00,171.00)(304.00,171.00)(304.00,152.00)(285.00,152.00)(285.00,171.00)
\put(294.50,157.50){\makebox(0,0)[b]{$s$}}

\shade
\path(285.00,190.00)(304.00,190.00)(304.00,171.00)(285.00,171.00)(285.00,190.00)
\put(294.50,176.50){\makebox(0,0)[b]{$s'$}}

\path(304.00,209.00)(304.00,190.00)(285.00,190.00)(285.00,209.00)
\put(294.50,199.50){\makebox(0,0){$\cdot$}}
\put(294.50,205.83){\makebox(0,0){$\cdot$}}
\put(294.50,212.17){\makebox(0,0){$\cdot$}}

\path(304.00,0.00)(304.00,19.00)(323.00,19.00)(323.00,0.00)
\put(313.50,9.50){\makebox(0,0){$\cdot$}}
\put(313.50,3.17){\makebox(0,0){$\cdot$}}
\put(313.50,-3.17){\makebox(0,0){$\cdot$}}

\path(304.00,38.00)(323.00,38.00)(323.00,19.00)(304.00,19.00)(304.00,38.00)
\put(313.50,24.50){\makebox(0,0)[b]{1}}

\path(304.00,57.00)(323.00,57.00)(323.00,38.00)(304.00,38.00)(304.00,57.00)
\put(313.50,43.50){\makebox(0,0)[b]{1}}

\path(304.00,76.00)(323.00,76.00)(323.00,57.00)(304.00,57.00)(304.00,76.00)
\put(313.50,62.50){\makebox(0,0)[b]{2}}

\path(304.00,95.00)(323.00,95.00)(323.00,76.00)(304.00,76.00)(304.00,95.00)
\put(313.50,81.50){\makebox(0,0)[b]{2}}

\path(304.00,114.00)(323.00,114.00)(323.00,95.00)(304.00,95.00)(304.00,114.00)
\put(313.50,100.50){\makebox(0,0)[b]{2}}

\path(304.00,133.00)(323.00,133.00)(323.00,114.00)(304.00,114.00)(304.00,133.00)
\put(313.50,119.50){\makebox(0,0)[b]{1}}

\path(304.00,152.00)(323.00,152.00)(323.00,133.00)(304.00,133.00)(304.00,152.00)
\put(313.50,138.50){\makebox(0,0)[b]{1}}

\path(304.00,171.00)(323.00,171.00)(323.00,152.00)(304.00,152.00)(304.00,171.00)
\put(313.50,157.50){\makebox(0,0)[b]{1}}

\path(304.00,190.00)(323.00,190.00)(323.00,171.00)(304.00,171.00)(304.00,190.00)
\put(313.50,176.50){\makebox(0,0)[b]{1}}

\path(323.00,209.00)(323.00,190.00)(304.00,190.00)(304.00,209.00)
\put(313.50,199.50){\makebox(0,0){$\cdot$}}
\put(313.50,205.83){\makebox(0,0){$\cdot$}}
\put(313.50,212.17){\makebox(0,0){$\cdot$}}

\texture{cccccccc 0 0 0 cccccccc 0 0 0
         cccccccc 0 0 0 cccccccc 0 0 0
         cccccccc 0 0 0 cccccccc 0 0 0
         cccccccc 0 0 0 cccccccc 0 0 0}
\whiten

\path(19.00,95.00)(38.00,95.00)(38.00,76.00)(19.00,76.00)(19.00,95.00)
\put(28.50,85.50){\makebox(0,0){\circle*{9.50}}}

\path(133.00,95.00)(152.00,95.00)(152.00,76.00)(133.00,76.00)(133.00,95.00)
\put(142.50,85.50){\makebox(0,0){\circle*{9.50}}}

\path(133.00,114.00)(152.00,114.00)(152.00,95.00)(133.00,95.00)(133.00,114.00)
\put(142.50,104.50){\makebox(0,0){\circle*{9.50}}}

\path(152.00,114.00)(171.00,114.00)(171.00,95.00)(152.00,95.00)(152.00,114.00)
\put(161.50,104.50){\makebox(0,0){\circle*{9.50}}}

\path(152.00,133.00)(171.00,133.00)(171.00,114.00)(152.00,114.00)(152.00,133.00)
\put(161.50,123.50){\makebox(0,0){\circle*{9.50}}}

\path(171.00,95.00)(190.00,95.00)(190.00,76.00)(171.00,76.00)(171.00,95.00)
\put(180.50,85.50){\makebox(0,0){\circle*{9.50}}}

\path(171.00,114.00)(190.00,114.00)(190.00,95.00)(171.00,95.00)(171.00,114.00)
\put(180.50,104.50){\makebox(0,0){\circle*{9.50}}}

\path(285.00,95.00)(304.00,95.00)(304.00,76.00)(285.00,76.00)(285.00,95.00)
\put(294.50,85.50){\makebox(0,0){\circle*{9.50}}}

\put(66.5,201){\makebox(0,0)[b]{$r$}}
\put(66.5,199){\vector(0,-1){30}}
\put(66.5,40){\makebox(0,0)[b]{$r$}}
\put(66.5,38){\vector(0,-1){30}}
\put(256.5,201){\makebox(0,0)[b]{$s$}}
\put(256.5,199){\vector(0,-1){30}}
\put(256.5,40){\makebox(0,0)[b]{$s$}}
\put(256.5,38){\vector(0,-1){30}}
\put(94,142.5){\makebox(0,0)[r]{$z$}}
\qbezier(95,142.5)(123.5,142.5)(123.5,171)
\put(123.5,171){\vector(0,1){2}}
\put(229,142.5){\makebox(0,0)[l]{$z$}}
\qbezier(228,142.5)(199.5,142.5)(199.5,171)
\put(199.5,171){\vector(0,1){2}}
\mathversion{normal}

\end{picture}
\end{center} \vspace*{-10pt}
\caption{$z$ can be revealed when either $r$ or $s$ is known} \label{pswires}
\end{figure}

Additionally, we make variables $z_{jj}$, where $1 \le j \le n+1 + m$, 
which can be revealed when either $s_2$ or $s_3$ is known. This can be 
done with a reversed splitter, as in figure \ref{pswires}. Furthermore,
for each $j \notin E$, we replace the starting point of $x_j$ by 
circuitry for $x_j = z_{1j} + z_{2j} + \cdots + z_{jj} \bmod 2$, thus
$x_j$ is no longer a real variable when $j \notin E$.

To show that this construction works, notice first that there is no
way to get information about the variables $x_i$ for which $i \in E$.
Thus we have to guess them. Besides that, it suffices to guess one of $s_2$ 
and $s_3$. When we guess either $s_2$ or $s_3$, we guess them to be equal 
to one, since that value is the most likely.

The crucial question is which of $s_2$ and $s_3$ will be subjected to
a guess. At first glance, one might think that that should be the one which 
is most likely to equal to one, but that is not the right criterion. No, 
the criterion for selecting $s_2$ just appears to be that (\ref{mn2e})
is true or that rounding (\ref{m2}) yields $1$, depending on the value of
$m$.

In fact, when $s_2 = 1$ is guessed, the probability of winning will seem to be
\begin{equation} \label{s2guess}
2^{-e}\left(\frac12 + \frac12 \max\Big\{\vartheta\,\Big|\,
R_1 x_1 R_2 x_2 \cdots R_n x_n \Pr[f(x_1,x_2,\ldots,x_n)]=\vartheta\Big\}\right)
\end{equation}
The probability of winning is equal to $2^{-e}(1 - 2^{-m})$ when $s_3 = 1$ is 
guessed.

Notice that the first guess to be done is either $s_2 = 1$ or $s_3 = 1$, since
other guesses do not help more for choosing between $s_2$ and $s_3$ than making 
a corresponding assumption without checking. If $s_3 = 1$ is guessed with success, 
then the probability of revealing all mines is $2^{-e}$ at this stage, and that is it. 
But if $s_2 = 1$ is guessed with success, then the probability of revealing all 
mines is dependent of subsequent guesses, and these guesses are to be chosen to 
maximize the probability of winning.

After guessing $s_2 = 1$ with success, we do the following for 
$j = 1, 2, \ldots, \allowbreak n+1+m$, in that order.
\begin{itemize}

\item If $j \in E$ and $m = n + 2 - e$, then we guess $x_j$ in order to satisfy
$$
A_{j+1} x_{j+1} A_{j+2} x_{j+2} \cdots A_{n+1+m} x_{n+1+m} f(x_1,x_2,\ldots,x_n)
$$
when $m = n+2-e$. If this is not possible, then $s_3 = 1$ should have been guessed
instead of $s_2 = 1$, since in that case, (\ref{s2guess}) can be estimated by
$2^{-e}(1-2^{-(n+1)+e})$, which is less than the probability $2^{-e}(1-2^{-(n+2)+e})$ 
of winning when $s_3 = 1$ is guessed.

\item If $j \in E$ and $m = 2$, then we guess $x_j$ in order to maximize $\vartheta$ 
such that
$$
R_{j+1} x_{j+1} R_{j+2} x_{j+2} \cdots R_{n+1+m} x_{n+1+m} 
\Pr[f(x_1,x_2,\ldots,x_n)] = \vartheta
$$

\item If $j \not\in E$, then we can reveal $x_j$ since we know 
$x_1, x_2, \ldots, x_{j-1}$ and $s_2$.

\end{itemize}
In a similar manner as with choosing between $s_2$ and $s_3$,
doing another guess before guessing $x_j$ with $j \in E$ (except 
guessing $s_2$ and $x_i$ for all $i \in E$ with $i < j$) is not better for
making a good guess for $x_j$ than a corresponding assumption without 
checking. Hence the answer to the crucial question which of $s_2$ and 
$s_3$ should be subjected to a guess is as claimed.
\end{proof}

\end{document}